\theoremstyle{plain}
\newtheorem{theorem}{Theorem}
\newtheorem{prop}{Proposition}
\newtheorem{lemma}{Lemma}
\newtheorem{cor}{Corollary}
\theoremstyle{definition}
\newtheorem{defi}{Definition}
\newtheorem{remark}{Remark}
\newtheorem{example}{Example}
\newcommand{\beq}{\begin{equation}}
\newcommand{\eeq}{\end{equation}}
\newcommand{\nn}{\nonumber}
\newcommand{\bllb}{\bigl(\hskip -0.05truecm \bigl(}
\newcommand{\brrb}{\bigr)\hskip -0.05truecm \bigr)}
\newcommand{\llb}{(\hskip -0.05truecm (}
\newcommand{\rrb}{)\hskip -0.05truecm )}
\newcommand{\llm}{[\hskip -0.04truecm [}
\newcommand{\rrm}{]\hskip -0.04truecm ]}
\newcommand{\bllm}{\bigl[\hskip -0.04truecm \bigl[}
\newcommand{\brrm}{\bigr]\hskip -0.04truecm \bigr]}
\newcommand{\cS}{\mathcal{S}}
\newcommand{\cL}{\mathcal{L}}
\newcommand{\QQ}{{\mathbb Q}}
\newcommand{\NN}{{\mathbb N}}
\newcommand{\CC}{{\mathbb C}}
\newcommand{\ZZ}{{\mathbb Z}}
\newcommand{\R}{R}
\newcommand{\A}{{\mathcal A}}
\newcommand{\M}{{\mathcal M}}
\newcommand{\pf}{\noindent{\it Proof \ }}
\newcommand{\tr}{{\rm tr}}
\newcommand{\bt}{{\bf t}}
\newcommand{\res}{{\rm res}}
\newcommand{\bdzero}{{\bf 0}}
\newcommand{\p}{\partial}
\newcommand{\g}{{\mathfrak g}}
\newcommand{\Ker}{{\rm Ker}}
\newcommand{\Image}{{\rm Im}}
\newcommand{\ad}{{\rm ad}}
\newcommand{\z}{\zeta}
\newcommand{\epf}{$\quad$\hfill\raisebox{0.11truecm}{\fbox{}}\par\vskip0.4truecm}
\newcommand{\Ztheta}{Z_{\textsc{\tiny $\Theta$}}}
\newcommand{\utheta}{u_{\textsc{\tiny  $\Theta$}}}
\newcommand{\utc}{u_{\textsc{\tiny  $\Theta(C)$}}}
\newcommand{\uteight}{u_{\textsc{\tiny  $\Theta(1/8)$}}}
\newcommand{\tauC}{\tau_{\textsc{\tiny  $\Theta(C)$}}}
\newcommand{\DzwC}{D_{\textsc{\tiny  $\Theta(C)$}}(z,w)}
\newcommand{\DC}{D_{\textsc{\tiny  $\Theta(C)$}}}
\newcommand{\taueight}{\tau_{\textsc{\tiny  $\Theta(1/8)$}}}
\newcommand{\otc}{\Omega^{\textsc{\tiny  $\Theta(C)$}}}
\newcommand{\oteight}{\Omega^{\textsc{\tiny  $\Theta(1/8)$}}}
\def\sm#1#2#3#4{\bigl(\smallmatrix#1&#2\\#3&#4\endsmallmatrix\bigr)}
\newcolumntype{M}[1]{>{\centering\arraybackslash}m{#1}}
\newcolumntype{R}[1]{>{\raggedleft\arraybackslash}m{#1}}
\newcolumntype{N}{@{}m{0pt}@{}}
\def\={\; = \;}
\def\+{\; + \;}
\def\:={\; := \; }
\def \m {\,-\,}
\def \wt {\widetilde}
\newcommand{\indicationfootnote}{\thanks}
\begin{document}
\title[On tau-functions for the KdV hierarchy]{On tau-functions for the KdV hierarchy}
\author[Dubrovin]{Boris Dubrovin$^{\dagger}$}
\indicationfootnote{$^{\dagger}$Deceased on March 19, 2019.}
\author[Yang]{Di Yang}
\author[Zagier]{Don Zagier}
\date{}
\begin{abstract} 
For an {\it arbitrary} solution to the KdV hierarchy, the generating series of logarithmic derivatives of the tau-function 
of the solution can be expressed by the basic matrix resolvent via algebraic manipulations. 
Based on this we develop in this paper two new formulae 
for the generating series by introducing a pair of 
wave functions of the solution. Applications to the Witten--Kontsevich tau-function, to the generalized 
Br\'ezin--Gross--Witten (BGW) tau-function, as well as to 
a modular deformation of the generalized BGW tau-function which we call the Lam\'e tau-function 
are also given.

\medskip

{\small \noindent{\bf Mathematics Subject Classification (2010).}  37K10; 53D45, 14N35, 05A15, 33E15.}

\smallskip 

\noindent \textbf{Keywords.} 
KdV hierarchy, tau-function, pair of wave functions, matrix resolvent, generating series.
\end{abstract}

\maketitle

\setcounter{tocdepth}{1}
\tableofcontents

\section{Introduction}
To make this article accessible also to non-specialists, we recall the definitions of all of
the main objects studied (KdV hierarchy, tau-function, matrix resolvents, wave functions, \dots), 
in some cases with definitions slightly different from the most standard ones. Experts can simply skip this material.

\subsection{The Korteweg--de Vries hierarchy}\label{section11} 
The Korteweg--de Vries (KdV) equation  
\beq \label{KdVeq}
u_t \= u \, u_x \+ \frac{1}{12} \, u_{xxx} \,, 
\eeq
discovered in the study of water waves in the 19th century, extends to a {\it hierarchy} of PDEs, 
\beq\label{KdVhk}
\frac{\p u}{\p t_k} \= \frac{u^k}{k!} u_x \+ Q_k(u,u_x,u_{xx},\dots)   \qquad (k\geq 0) \,, 
\eeq
where $u$ is now a function of the multivariable $\bt=(t_0=x, t_1=t, t_2, t_3,\dots)$ and the $Q_k$ are 
specific polynomials with $Q_0=0$ and $Q_1= u_{xxx}/12$. We recall a simple construction of this hierarchy. 
Denote by $\A$ the polynomial algebra $\QQ[u_0,u_1,u_2,\dots]$ with the grading $\deg u_i=i$, and by~$\p$ the 
derivation on~$\A$ that satisfies $\p(u_i)=u_{i+1}$.  For arbitrary elements $Q_0,Q_1,Q_2,\dots$ in~$\A$ we can 
define uniquely a family of derivations $D_k$ ($k\geq0$) on~$\A$ by 
\beq
D_0\= \p \,, \quad [D_k,D_0] \= 0\, , \quad D_k(u_0) \= \frac{u_0^k}{k!} \, u_1 \+ Q_k \,. \label{KdVformal}
\eeq
It turns out that if we require that $Q_k\in \A^{\geq 2}$ with $Q_0=0$ and $Q_1=u_3/12$ and that all $D_k$ commute with~$D_1$, then 
the polynomials $Q_k$ exist and are unique. The first few values are
\begin{align}
& Q_0\=0\,, \quad Q_1\=\frac1{12} u_3\,, \quad Q_2\=\frac{1}{12} (2 u_1 u_2 +  u_0 u_3) \+ \frac{1}{240} u_5\,, \nn \\
& Q_3\=\frac{1}{24} \bigl(u_1^3+4  u_0 u_1 u_2+u_0^2u_3\bigr) \+ \frac{1}{240}  (5u_2u_3+3 u_1  u_4+u_0 u_5) \+ \frac{1}{6720}  u_7 \,, \nn
\end{align}
and all the $D_k$ commute. (See~\cite{Dickey,LZ}. 
A new proof of the existence and pairwise commutativity is given in Section~\ref{section2}.) 
We call these unique derivations $D_k$ the KdV derivations. The formal system~\eqref{KdVformal} called the abstract KdV hierarchy leads to the 
compatible system of PDEs~\eqref{KdVhk} if $u=u(\bt)$ and we identify $u_i$ and $D_k$ with $\p_x^i(u)$ and $\p/\p t_k$, respectively. 

Let $V$ be a ring of functions of~$x$ closed under $\p_x$. (Usually $V$ will be $\CC\llm x\rrm$ or $\CC\llb x \rrb$.) 
For given initial data $f(x) \in V$, one can solve the KdV hierarchy~\eqref{KdVhk} to get a   
unique solution $u=u(\bt)$ in $V \llm \bt_{>0}\rrm$ with $u(x,0,0,\dots) =f(x)$. This gives a 1-1 correspondence: 
\beq\label{11correspondence}
\mbox{$\bigl\{$solution $u$ of~\eqref{KdVhk} in $V \llm \bt_{>0}\rrm  \bigr\}$ \quad  $\longleftrightarrow$ \quad  $\bigl\{$initial data $ f \bigr\} \= V$}\, .
\eeq
Below we give three cases of particular interest. More examples can be found in~\cite{BDY1,KMZ}.

\begin{example}  $f(x)=x$. The corresponding unique solution~$u$ in~$\CC\llm x, \bt_{>0} \rrm$ governs~\cite{Kont,Witten} the integrals 
\beq\label{wkcor}
\int_{\overline{\mathcal{M}}_{g,n}} \psi_1^{p_1} \cdots \psi_n^{p_n} \,, 
\eeq
where $\overline{\mathcal{M}}_{g,n}$ denotes the Deligne--Mumford moduli space of 
stable algebraic curves of genus~$g$ with $n$ distinct marked points and 
$\psi_j$ denotes the so-called $\psi$-class (see Section~\ref{section4} for the precise definition). This solution is often called the {\it Witten--Kontsevich 
solution}, denoted by $u_{\rm WK}$. 
\end{example}

\begin{example}\label{example2} $f(x)=\frac{C}{ (x-1)^2}$, $C\in \CC$. 
The corresponding solution~$u$ in~$\CC\llm x,\bt_{>0}\rrm$  
will be called the {\it generalized BGW solution}~\cite{A,MMS}, denoted by~$u_{\Theta(C)}$. 
For $C=1/8$, $u_{\Theta(C)}$ governs~\cite{N} the integrals 
\beq
\int_{\overline{\mathcal{M}}_{g,n}} \Theta_{g,n} \, \psi_1^{p_1} \cdots \psi_n^{p_n}  \,, 
\eeq
called the $n$-point $\Theta$-class intersection numbers. Here $\Theta_{g,n}$ denotes the Norbury $\Theta$-class~\cite{N} whose 
precise definition will be recalled in Section~\ref{section4}. We call $u_{\Theta(1/8)}$ the {\it BGW-Norbury solution}. 
\end{example}

\begin{example}\label{example3}
$f(x)=C \, \wp(x; \tau)$, $C\in \CC$, $\tau\in \mathfrak{H}$ = upper half plane. 
Here we can take $V$ to be the ring $\CC[g_2,g_3,\wp,\wp']/(\wp'^2- 4 \wp^3+g_2 \wp + g_3)$
with $g_2 = 60 \, G_4$ and $g_3 = 140 \, G_6$, 
where $G_{2k}= \sum_{(m,n)\in\ZZ^2 \smallsetminus (0,0)} \frac1{(m+n\tau)^{2k}}$,~$k\geq 2$.
The corresponding unique solution~$u$ in $V\llm \bt_{>0}\rrm $ 
is a modular deformation of~$u_{\Theta(C)}$, denoted by $u_{\rm elliptic}$ and discussed in more detail  
 in Section~\ref{section5}. We call~$u_{\rm elliptic}$ a {\it Lam\'e solution} of the KdV hierarchy. 
\end{example}

\subsection{The tau-function and the matrix resolvent approach}\label{MRI} 
Let $(D_k)_{k\geq 0}$ be the KdV derivations 
defined above. 
By a {\it tau-structure} for the abstract KdV hierarchy we mean a collection of elements 
$\Omega_{p,q}\in\mathcal{A}$~$(p,q\geq 0)$ satisfying
\beq\label{taustructure}
 \Omega_{0,0} \= u_0\,,  \quad   \Omega_{p,q} \= \Omega_{q,p} \,,   \quad    D_r  \bigl(\Omega_{p,q}\bigr) \=  D_q \bigl(\Omega_{p,r}\bigr) \,,   \qquad 
 \forall\, p,q,r\geq 0\,. 
\eeq
(A more general abstract tau-structure allows $\Omega_{0,0}$ to be some other element of~$\A$ with non-degenerate $0^{\rm th}$-order term, but this will not be studied in this paper.) 
Since the $D_k$ commute, 
the elements
\beq \label{tausymm} \Omega_{p_1,\dots,p_n} \:= D_{p_1} \cdots D_{p_{n-2}} \, (\Omega_{p_{n-1}, p_n})
\qquad (n\geq 3,\; p_1,\dots,p_n\geq 0) 
\eeq
are symmetric in their indices.  
One can show that the tau-structure exists
and is unique up to replacing $\Omega_{p,q}$ by $\Omega_{p,q}+c_{p,q}$, where $c_{p,q}=c_{q,p}$ are constants.  
By the {\it canonical} tau-structure we will mean the unique choice for which all $\Omega_{p,q}$ vanish when all $u_i=0$, the first few values being
$$ \Omega_{0,1}\,=\,\frac{u_0^2}2 +\frac{u_2}{12}\,, \quad \Omega_{0,2}\,=\, \frac{u_0^3}6 + \frac{u_1^2}{24}  + \frac{u_0u_2}{12} +\frac{u_4}{240} \,, 
\quad \Omega_{1,1}\,=\, \frac{u_0^3}3  + \frac{u_1^2}{24} +\frac{u_0u_2}6 +\frac{u_4}{144}\,,\quad\cdots \,.  $$

If $u=u(\bt)$ is a solution of~\eqref{KdVhk} and we write $\Omega_{p,q}(\bt)$ as the image of $\Omega_{p,q}$ under $u_i\mapsto \p_x^i(u)$, 
then~\eqref{taustructure} implies that there exists a function $\tau=\tau(\bt)$, a so-called {\it tau-function}  of the solution~$u$, such that  
\beq\label{taufunctionOmega}
\Omega_{p,q}(\bt) \= \frac{\p^2 \log \tau(\bt)}{\p t_p \p t_q } \qquad (p\,,q\geq 0) \,.
\eeq
The symmetry in~\eqref{tausymm} 
is then obvious, since the image $\Omega_{p_1,\dots,p_n}(\bt)$ of $\Omega_{p_1,\dots,p_n}$ under $u_i\mapsto \p_x^i(u)$ is
\beq
\Omega_{p_1,\dots,p_n}(\bt) \= \frac{\p^n \log \tau(\bt)}{\p t_{p_1} \cdots \p t_{p_n}} \; \qquad (n\geq 2,\; p_1,\dots,p_n\geq 0) \,. \label{Taylorlogtau}
\eeq
Denote $\Omega_p(\bt):=\p_{t_p} \bigl(\log \tau(\bt)\bigr)$. 
These logarithmic derivatives are called {\it $n$-point correlation functions} of~$u$  
 and the specializations 
$\Omega_{p_1,\dots,p_n}(x) :=\Omega_{p_1,\dots,p_n}(x,\bdzero)$ are called {\it $n$-point partial correlation functions} of~$u$, while
the evaluation of~$\Omega_{p_1,\dots,p_n}(x)$ at a particular value of~$x$ will be called an {\it $n$-point correlator} of~$u$.
If we choose the canonical tau-structure as defined above, we will call~$\tau$  {\it the} tau-function of~$u$, although
it is still defined up to multiplying the exponential of an arbitrary linear function of~$\bt$. 
This definition of the tau-function~$\tau$ is known to agree with other literature (\cite{DKJM,DZ-norm}). 

The key fact for this paper is that the canonical tau-structure for the abstract KdV hierarchy can be given by an explicit generating 
series using the so-called matrix resolvent (MR) approach~\cite{BDY1,BDY3}. This goes as follows.  Let $\cL$ (``matrix Lax operator") 
be the operator $\p+ \Lambda(\lambda) + q$, 
where $\Lambda(\lambda) = \sm 0 1 \lambda 0$, $q = \sm00{-2u_0}0$. 
Let $\cS={\rm sl}_2(\CC) \bllb\lambda^{-1}\brrb $
be the space of ${\rm sl}_2(\CC)$-valued formal Laurent series in~$\lambda^{-1}$. 
The {\it principal grading} on $\mathcal{A} \otimes \cS$ is defined by 
$\deg E = 1$, $\deg F=-1$, $\deg \lambda = 2$, $\deg u_i=0$, where $E=\sm0100$, $F=\sm0010$ are the Weyl generators. 
The {\bf basic matrix resolvent} of~$\cL$ is defined as the 
unique element $R(\lambda)=R(\lambda; u_0,u_1,\dots)\in \cS\otimes \mathcal{A}$ satisfying
\beq\label{defiR}
\bigl[ \cL \,, R(\lambda) \bigr] \= 0\,, \qquad R(\lambda) \= \Lambda(\lambda) \+ \mbox{l.o.t.}\,, 
\qquad  {\rm Tr} \, R(\lambda)^2 \= 2 \lambda\,,
\eeq
where ``l.o.t" means lower order terms with respect to the principal gradation. 
\noindent The proof of existence and uniqueness of~$R(\lambda)$ can be found in~\cite{BDY3} or in Section~\ref{section2} of the present paper.
The upper right entry $R(\lambda)_{12}$ of the basic matrix resolvent which 
we denote by~$b(\lambda)$ plays an important role.
The following proposition, originally proved in~\cite{BDY1}, will be re-proved in Section~\ref{section2} in a more straightforward way.

\begin{prop}[\cite{BDY1,BDY3, Zhou1}]\label{taulemma}
(i) The differential polynomials $\Omega_{p,q}\in\A$ defined by the generating series
\beq
\frac{{\rm tr} \, \bigl(R(\lambda) R (\mu)\bigr)}{(\lambda-\mu)^2} \,-\, \frac{\lambda+\mu}{(\lambda-\mu)^2} 
  \=  \sum_{p,q\geq 0} \frac{(2p+1)!! \, (2q+1)!!}{\lambda^{p+1} \mu^{q+1}} \, \Omega_{p,q}  
 \label{tauom}
\eeq
form a tau-structure for the abstract KdV hierarchy, vanishing at $u_0=u_1=\cdots=0$.   

\quad\noindent (ii) For any integer $n\geq 3$ and the corresponding $\Omega_{p_1,\dots,p_n}$ as defined in~\eqref{tausymm} we have 
\begin{align}
& \sum_{p_1,\dots,p_n\geq 0} \Omega_{p_1,\dots,p_n} \, \prod_{j=1}^n \frac{(2p_j+1)!!}{\lambda_j^{p_j+1}}  
  \=  -\sum_{\sigma\in S_n/C_n} \frac{ {\rm tr} \,\bigl(R(\lambda_{\sigma(1)})\cdots R(\lambda_{\sigma(n)})\bigr) } 
  {\prod_{i=1}^n \bigl(\lambda_{\sigma(i+1)}-\lambda_{\sigma(i)}\bigr)} \,,  \label{thmformula}
\end{align}
where $S_n$ denotes the symmetry group, $C_n$ the cyclic group, and it is understood that $\sigma(n+1)=\sigma(1)$.
\end{prop}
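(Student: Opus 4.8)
I would first fix the shape of the resolvent. Writing $R(\lambda)=\mat{a}{b}{c}{-a}$ with $a,b,c$ in $\A\bllb\lambda^{-1}\brrb$, the relation $[\cL,R]=0$ with $\cL=\p+\Lambda(\lambda)+q$ is equivalent to $b'=2a$, $a'=(\lambda-2u_0)b-c$, $c'=-2(\lambda-2u_0)a$, while ${\rm Tr}\,R^2=2\lambda$ becomes $a^2+bc=\lambda$. Solving order by order in $\lambda^{-1}$ under the normalization $R=\Lambda+\text{l.o.t.}$ gives $b=1+u_0\lambda^{-1}+\cdots$, $a=\tfrac12 u_1\lambda^{-1}+\cdots$, $c=\lambda-u_0+\cdots$, so that ${\rm tr}\bigl(R(\lambda)R(\mu)\bigr)=\lambda+\mu+\text{l.o.t.}$; this is precisely the term subtracted in \eqref{tauom}, and it makes the right-hand side a genuine power series in $\lambda^{-1},\mu^{-1}$. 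Two of the three tau-structure axioms then come for free: the left-hand side of \eqref{tauom} is manifestly symmetric under $\lambda\leftrightarrow\mu$, so $\Omega_{p,q}=\Omega_{q,p}$, and extracting the coefficient of $\lambda^{-1}\mu^{-1}$ yields $\Omega_{0,0}=u_0$.

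The heart of the matter is the remaining axiom $D_r(\Omega_{p,q})=D_q(\Omega_{p,r})$, and for this I would prove the key lemma that the KdV flows are Lax flows generated by $R$ itself. Concretely, I expect a \emph{flow formula} of the form
\beq
\sum_{p\geq0}\frac{(2p+1)!!}{\mu^{p+1}}\,D_p\bigl(R(\lambda)\bigr)\;=\;\Bigl[\,Q(\lambda,\mu)\,,\,R(\lambda)\,\Bigr],\qquad Q(\lambda,\mu)=\frac{R(\mu)}{\mu-\lambda}+(\text{local correction})\,,\label{flowformula}
\eeq
where the double factorials arise from the $\lambda^{-1}$-expansion of $R$ (equivalently from normalizing the half-integer powers $L^{(2p+1)/2}$ of the scalar Lax operator $L=\p^2-2u_0$), and the local correction is the diagonal/polynomial part needed to keep both sides in $\A\otimes\cS$. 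I would establish \eqref{flowformula} by checking that its right-hand side obeys the same relations as the left: it commutes with $\cL$ after the appropriate projection, it has the correct leading order, and matching its $(1,2)$-entry against $D_p(u_0)$ in \eqref{KdVformal} fixes the constants. This is the step I expect to be the main obstacle, since pinning down the local correction and the double-factorial normalization exactly is delicate; the existence and uniqueness of $R$ (Section~\ref{section2}) is what guarantees the truncations are well defined.

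Granting \eqref{flowformula}, part (i) and the base case of part (ii) follow by trace algebra. Applying $\sum_r\frac{(2r+1)!!}{\nu^{r+1}}D_r$ to the two-point series \eqref{tauom} and using \eqref{flowformula} in each slot, the cyclic invariance of the trace recombines the commutators into a manifestly $S_3/C_3$-symmetric expression; the constant-coefficient term $-(\lambda+\mu)/(\lambda-\mu)^2$ is annihilated by every $D_r$ and drops out, which is exactly why it is absent from \eqref{thmformula} for $n\geq3$. Reading off coefficients yields the full symmetry of $\Omega_{p_1,p_2,p_3}=D_{p_1}(\Omega_{p_2,p_3})$, i.e.\ the missing axiom $D_r(\Omega_{p,q})=D_q(\Omega_{p,r})$, and at the same time proves \eqref{thmformula} for $n=3$.

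The general identity \eqref{thmformula} I would then prove by induction on $n$. Assuming it for $n-1$, apply $\sum_{p_1}\frac{(2p_1+1)!!}{\lambda_1^{p_1+1}}D_{p_1}$ to both sides: on the left this prepends the index $p_1$ and produces $\Omega_{p_1,\dots,p_n}$, while on the right, by \eqref{flowformula}, each factor $R(\lambda_j)$ in the $(n-1)$-fold product ${\rm tr}\bigl(R(\lambda_{\sigma(2)})\cdots R(\lambda_{\sigma(n)})\bigr)$ is replaced by $[\,R(\lambda_1)/(\lambda_1-\lambda_j)+\cdots,\,R(\lambda_j)\,]$. Expanding the commutators and using cyclicity inserts a new factor $R(\lambda_1)$ at each position, and a partial-fraction telescoping in the differences $\lambda_j-\lambda_1$ reorganizes the sum over $S_{n-1}/C_{n-1}$ into the sum over $S_n/C_n$ with the denominators $\prod_i(\lambda_{\sigma(i+1)}-\lambda_{\sigma(i)})$ of \eqref{thmformula}. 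The bookkeeping of these partial-fraction identities — verifying that the spurious local corrections cancel and that every cyclic class is produced exactly once with the correct sign — is the main technical obstacle in part (ii).
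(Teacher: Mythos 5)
Your proposal follows essentially the same route as the paper: your conjectured flow formula is precisely the paper's Lemma~\ref{lemmanablaR} (with the ``local correction'' being the strictly lower-triangular matrix $\sm00{b(\mu)}0$ rather than a diagonal part, derived there from the Drinfeld--Sokolov dressing via the operators $V_k=-\frac1{(2k+1)!!}\bigl(\lambda^kR(\lambda)\bigr)_++\frac1{(2k+1)!!}\sm00{b_k}0$), and your subsequent steps --- applying the generating derivation to \eqref{tauom}, using cyclicity of the trace to obtain the fully symmetric three-point kernel and hence the axiom $D_r(\Omega_{p,q})=D_q(\Omega_{p,r})$ together with the $n=3$ case, then inducting on $n$ --- coincide with the paper's, which likewise defers the inductive bookkeeping to \cite{BDY1,BDY3}. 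The one point where you are looser than the paper is the well-definedness of the division by $(\lambda-\mu)^2$ in \eqref{tauom}: cancellation of the leading term $\lambda+\mu$ is necessary but not sufficient, and the paper verifies divisibility by writing $R(\mu)=R(\lambda)+R'(\lambda)(\mu-\lambda)+(\mu-\lambda)^2\,\partial_\lambda\bigl(\tfrac{R(\lambda)-R(\mu)}{\lambda-\mu}\bigr)$ and using ${\rm tr}\,R(\lambda)^2=2\lambda$ and ${\rm tr}\bigl(R(\lambda)R'(\lambda)\bigr)=1$.
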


Just as with $\Omega_{p,q}$ and $\Omega_{p,q}(\bt)$, we can evaluate the abstract basic matrix resolvent $R(\lambda)$ for a solution $u(\bt)$ of 
the KdV hierarchy by substituting $\p_x^i(u)$ for~$u_i$. The resulting matrix-valued function $R(\lambda,\bt)$ is called the {\it basic matrix resolvent} of~$u$. Denote for short $\R(\lambda, x)=R(\lambda,x,\bdzero)$.

\begin{cor}[\cite{BDY1}] \label{cor2}   
For any $n\geq 2$, the following formulas hold true:
\begin{align}
& \sum_{p_1,\dots,p_n} \Omega_{p_1, \dots,p_n}(\bt) \, \prod_{j=1}^n \frac{(2p_j+1)!!}{\lambda_j^{p_j+1}}  \= 
-\sum_{\sigma\in S_n/C_n} 
\frac{{\rm tr} \, R(\lambda_{\sigma(1)}, \bt)\cdots R(\lambda_{\sigma(n)}, \bt)}
{\prod_{i=1}^n \bigl(\lambda_{\sigma(i+1)}-\lambda_{\sigma(i)}\bigr)} 
\,- \, \frac{\lambda_1+\lambda_2}{(\lambda_1-\lambda_2)^2}\delta_{n2} \,,  \label{npointtau} \\
& \sum_{p_1,\dots,p_n} \Omega_{p_1, \dots,p_n}(x) \, \prod_{j=1}^n \frac{(2p_j+1)!!}{\lambda_j^{p_j+1}}  \= 
-\sum_{\sigma\in S_n/C_n} 
\frac{{\rm tr} \, \R(\lambda_{\sigma(1)},x)\cdots \R(\lambda_{\sigma(n)},x)}
{\prod_{i=1}^n \bigl(\lambda_{\sigma(i+1)}-\lambda_{\sigma(i)}\bigr)} 
\,- \, \frac{\lambda_1+\lambda_2}{(\lambda_1-\lambda_2)^2}\delta_{n2} \,. \label{npointtaux}
\end{align}
\end{cor}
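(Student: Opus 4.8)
The plan is to obtain both formulas by specializing the abstract identities of Proposition~\ref{taulemma} along the evaluation map attached to the solution. A solution $u=u(\bt)$ of the KdV hierarchy determines a ring homomorphism $\mathrm{ev}\colon\A\to V\llm\bt_{>0}\rrm$ sending $u_i\mapsto \p_x^i(u)$; by the very construction of the KdV derivations, $\mathrm{ev}$ intertwines $\p$ with $\p_x$ and each $D_k$ with $\p/\p t_k$. Extending $\mathrm{ev}$ entrywise to $\A\otimes\cS$, the image of the basic matrix resolvent $R(\lambda)$ is exactly $R(\lambda,\bt)$, and the image of $\Omega_{p_1,\dots,p_n}$ is $\Omega_{p_1,\dots,p_n}(\bt)$. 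Since $\mathrm{ev}$ commutes with matrix multiplication, with taking traces, and with the formal generating-series operations in the $\lambda_j^{-1}$, applying $\mathrm{ev}$ to any identity of Proposition~\ref{taulemma} yields the corresponding identity with $R(\lambda)$ replaced by $R(\lambda,\bt)$ and $\Omega_\bullet$ by $\Omega_\bullet(\bt)$.

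First I would treat the case $n\geq 3$. Here~\eqref{thmformula} holds in~$\A$, and for each fixed monomial $\prod_j\lambda_j^{-p_j-1}$ both sides are finite $\QQ$-linear combinations of differential polynomials, on which $\mathrm{ev}$ acts coefficientwise. Applying $\mathrm{ev}$ therefore gives precisely the $n\geq 3$ part of~\eqref{npointtau}; the Kronecker term $\delta_{n2}$ is absent, so nothing further is needed.

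The case $n=2$ requires part~(i) rather than part~(ii), since~\eqref{thmformula} is stated only for $n\geq3$. Applying $\mathrm{ev}$ to~\eqref{tauom} gives
\beq
\sum_{p,q\geq 0}\Omega_{p,q}(\bt)\,\frac{(2p+1)!!\,(2q+1)!!}{\lambda_1^{p+1}\lambda_2^{q+1}}
 \= \frac{{\rm tr}\bigl(R(\lambda_1,\bt)R(\lambda_2,\bt)\bigr)}{(\lambda_1-\lambda_2)^2}\,-\,\frac{\lambda_1+\lambda_2}{(\lambda_1-\lambda_2)^2}\,.
\eeq
It then remains to match this against the right-hand side of~\eqref{npointtau} at $n=2$. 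Since $C_2=S_2$, the quotient $S_2/C_2$ has a single element; taking the identity representative gives $\bigl(\lambda_{\sigma(2)}-\lambda_{\sigma(1)}\bigr)\bigl(\lambda_{\sigma(1)}-\lambda_{\sigma(2)}\bigr)=-(\lambda_1-\lambda_2)^2$, so the $\sigma$-sum together with its overall minus sign contributes exactly $+\,{\rm tr}\bigl(R(\lambda_1,\bt)R(\lambda_2,\bt)\bigr)/(\lambda_1-\lambda_2)^2$, matching the trace term above, while the leftover $-(\lambda_1+\lambda_2)/(\lambda_1-\lambda_2)^2$ is precisely the $\delta_{n2}$ contribution. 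This establishes~\eqref{npointtau} for all $n\geq 2$.

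Finally,~\eqref{npointtaux} follows by the further restriction $\bt_{>0}=\bdzero$: by definition $\Omega_{p_1,\dots,p_n}(x)=\Omega_{p_1,\dots,p_n}(x,\bdzero)$ and $\R(\lambda,x)=R(\lambda,x,\bdzero)$, and setting all positive times to zero in~\eqref{npointtau} is legitimate since every quantity lies in $V\llm\bt_{>0}\rrm$. I do not expect a genuine obstacle here: the whole argument is a functorial transport of Proposition~\ref{taulemma} along $\mathrm{ev}$, and the only point demanding care is the bookkeeping of the single coset and the $\delta_{n2}$ term in the $n=2$ case.
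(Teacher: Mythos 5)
Your proof is correct and is essentially the paper's own argument: the corollary is obtained exactly by evaluating the abstract identities of Proposition~\ref{taulemma} under the substitution $u_i\mapsto\p_x^i(u)$ (the paper states this evaluation just before the corollary and leaves the bookkeeping implicit), followed by the restriction $\bt_{>0}=\bdzero$ for the second formula. Your explicit check of the single coset in $S_2/C_2$ and of the sign producing the $\delta_{n2}$ term is the only nontrivial verification, and it is done correctly.
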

\noindent We note that $\R(\lambda, x)$ is equal to $R(\lambda)$ with $u_i$ replaced by $\p_x^i(f)$, where $f$ is the initial data of the solution. Therefore, 
formula~\eqref{npointtaux} produces all higher order logarithmic derivatives of the tau-function of~$u$ at $\bt_{>0}= \bdzero$ with the knowledge of the initial data~$f$.

\begin{remark} 
Tau-structures are among the most important notions in the theory of 
integrable systems (see e.g.~\cite{BBT, BDGR, DKJM, Dickey, Du1, DZ-norm, Hirota,  JMU, S, SW}). 
For the KdV hierarchy, it is known that there are several different but equivalent definitions for 
the canonical one~\cite{BDY1, Dickey}; however, one can also define 
essentially different tau-structures by using normal Miura transformations~\cite{DLYZ,DZ-norm} for instance.
\end{remark}

\subsection{From wave functions to $n$-point correlation functions}\label{Introwave}  
A key ingredient in this paper (and also in the forthcoming paper~\cite{DYZ2})
will be a pair $(\psi,\psi^*)$, consisting of 
a wave function~$\psi$ and the dual wave function~$\psi^*$ associated with~$\psi$. 
We explain this briefly here and in detail in Section~\ref{section3}.

We start with the time-independent case. 
Let $f(x)$ be an element of~$V$, and $L$ the linear Schr\"odinger operator $\p_x^2+ 2f(x)$. 
By a {\it wave function} of~$f$ we will mean 
an element $\psi=\psi(z,x)$ 
in the module $\wt V \bllb z^{-1}\brrb\, e^{xz}$ satisfying the equation $L(\psi) = z^2 \psi$ 
of the form $\psi = \bigl(1+ \phi_1(x)/z+\phi_2(x)/z^2 + \cdots\bigr) \,e^{xz}$, where 
$\wt V$ is any ring 
with $V\subseteq \p_x \bigl(\wt V \bigr) \subseteq \wt V$. 
The {\it dual wave function}\footnote{A dual wave function satisfies $L^*\psi^* = \lambda \, \psi^*$, where $L^*$ 
is the formal adjoint operator of~$L$. For the KdV hierarchy, 
$L^*=L$ and a dual wave function~$\psi^*$ is also a wave function.} $\psi^*$ of~$f$ associated with~$\psi$ is then defined as 
the unique element in~$\wt V\bllb z^{-1}\brrb \, e^{-xz}$ satisfying  
$L(\psi^*) = z^2 \psi^*$ of the form 
$\psi^* = \bigl(1+ \phi_1^*(x)/z+\phi_2^*(x)/z^2 + \cdots\bigr) \,e^{-xz}$ 
for which $\p_x^i(\psi) \, \psi^*$  has 
residue~0 at $z=\infty$ for all $i\geq 0$. 
The pair $(\psi,\psi^*)$ consisting of a wave function~$\psi$ of~$f$ and the unique dual wave function~$\psi^*$ associated with~$\psi$ 
is called {\it a pair of wave functions} of~$f$. 
Given~$f$, the wave function~$\psi$~of~$f$ is   
unique up to multiplication by an arbitrary element in $1+z^{-1} \CC \bllb z^{-1}\brrb$, but $\psi \psi^*$ 
is unique and coincides with~$b(z^2,x)$, the upper right entry of the matrix $\R(z^2,x)$ defined above, as is proved in Lemma~\ref{twoid}.

We now proceed to the time-dependent case. Let $L=\p^2 + 2u_0$ be a linear operator on~$\A$, called the Lax operator for the KdV hierarchy, and introduce a sequence of differential 
operators $A_k$ defined by $A_k = \frac{1}{(2k+1)!!} \bigl(L^{\frac{2k+1}2}\bigr)_+$~($k\geq 0$). 
Here $(\cdot)_+$ means taking the differential part of a pseudo-differential operator (\cite{Dickey}). 
Let $u=u(\bt)$ be an arbitrary solution to the KdV hierarchy~\eqref{KdVhk}. By a {\it wave function} of~$u$ we will mean 
an element $\psi=\psi(z,\bt)$ 
in $\wt V\llm \bt_{>0}\rrm \bllb z^{-1}\brrb \, e^{\sum_{k= 0}^\infty t_k \, z^{2k+1} /(2k+1)!!}$ 
of the form
$\psi =  \bigl(1+ \phi_1(\bt)/z+ \phi_2(\bt)/z^2 +\cdots \bigr) \, e^{\sum_{k= 0}^\infty t_k \, z^{2k+1} /(2k+1)!!}$  
satisfying the equations 
\beq \label{definingpsi} 
L (\psi) \= z^2 \psi \,, \qquad  \p_{t_k} (\psi) \= A_k   (\psi)\,. 
\eeq 
Define the {\it dual wave function}~$\psi^*=\psi^*(z,\bt)$ of~$u$ associated with~$\psi$ as the unique 
element in $\wt V\llm \bt_{>0}\rrm \bllb z^{-1}\brrb \, e^{-\sum_{k= 0}^\infty t_k \, z^{2k+1} / (2k+1)!!}$ 
of the form 
$\psi^* = \bigl(1+\phi_1^*(\bt)/z+ \phi_2^*(\bt)/z^2 +\cdots \bigr) \, e^{-\sum_{k= 0}^\infty t_k \, z^{2k+1} /(2k+1)!!}$ 
satisfying
\beq\label{definngspsistar}
L(\psi^*) \= z^2 \psi^*\,, \qquad  -\p_{t_k} (\psi^*) \= A_k  (\psi^*) 
\eeq 
and that $\p_x^i(\psi(z,\bt)) \, \psi^*(z,\bt)$ has residue~0 at $z=\infty$ for all $i\geq 0$. 
We say that a wave function~$\psi$ of~$u$ and the dual wave function~$\psi^*$ of~$u$ associated with~$\psi$ 
form a {\it pair of wave functions} of~$u$. 
The existence of a pair of wave functions of~$u$, which is known, will be shown in Section~\ref{section3}.
Given a solution~$u=u(\bt)$, the wave function~$\psi$~of~$u$ is   
unique only up to multiplication by an arbitrary element in $1+z^{-1} \CC \bllb z^{-1}\brrb$, but the
product $\psi \psi^*$ is unique and coincides with~$b(z^2,\bt)$, the upper right entry of $R(z^2,\bt)$, 
where $\psi^*$ is the dual wave function of~$u$ associated with~$\psi$.  

Let $u=u(\bt)$ be an arbitrary solution of~\eqref{KdVhk} in $V\llm \bt_{>0}\rrm $. 
The basic matrix resolvent~$R(z^2,\bt)$ of~$u$ can be expressed in terms of  
a pair of wave functions of~$u$ (Lemma~\ref{factorize1}, Section~\ref{section3}). 
This enables us to give a new formula for the generating series of $n$-point correlation functions $\Omega_{p_1, \dots, p_n} (\bt)$ of~$u$. 
Let $(\psi(z,\bt),\psi^*(z,\bt))$ be a pair of wave functions of~$u$. Define 
\beq\label{Dzwtdefinition}
D(z,w,\bt) \:= \frac{\psi(z,\bt) \, \psi_x^*(w,\bt) \, - \, \psi^*(w,\bt) \, \psi_x(z,\bt)}{w^2\,-\,z^2} \,.
\eeq

\begin{theorem}\label{wavethm} 
For any fixed $n\geq 2$ the following formula 
holds true:
\begin{align}
& \sum_{p_1,\dots,p_n} \Omega_{p_1, \dots, p_n} (\bt) \, \prod_{j=1}^n \frac{(2p_j+1)!!}{z_j^{2p_j+2}}  \= 
- \sum_{\sigma\in S_n/C_n} 
\prod_{i=1}^n D\bigl(z_{\sigma(i)}, z_{\sigma(i+1)}, \bt \bigr)  
\,- \, \frac{\delta_{n2}}{(z_1-z_2)^2} \,. \label{newformula}
\end{align}
\end{theorem}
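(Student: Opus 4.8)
The plan is to deduce \eqref{newformula} from the matrix--resolvent formula \eqref{npointtau} of Corollary~\ref{cor2}. Substituting $\lambda_j=z_j^2$ in \eqref{npointtau} turns its left--hand side \emph{verbatim} into the left--hand side of \eqref{newformula}, since $\lambda_j^{p_j+1}=z_j^{2p_j+2}$. Hence the whole theorem collapses to a purely algebraic identity: after the substitution $\lambda_j=z_j^2$, the trace--of--resolvents expression on the right of \eqref{npointtau} must equal the product--of--$D$ expression on the right of \eqref{newformula}. This identity no longer involves $\log\tau$ and is insensitive to which solution $u(\bt)$ we started from; it is a statement about $R(z^2,\bt)$ and $D(z,w,\bt)$ only.

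The key input is Lemma~\ref{factorize1}, which factorizes the basic matrix resolvent through the pair $(\psi,\psi^*)$. Using $L(\psi)=z^2\psi$, $L(\psi^*)=z^2\psi^*$ together with the normalization $\psi\psi^*=b(z^2,\bt)$, one writes $R(z^2,\bt)$ as a rank--one dyad $\xi(z)\,\eta(z)^{\mathsf T}$ built from the vectors $\eta(z)=(\psi_x,\psi)^{\mathsf T}$ and $\xi(z)=(\psi^*,-\psi^*_x)^{\mathsf T}$, shifted by a scalar multiple of the identity, the Wronskian $\psi_x\psi^*-\psi\psi^*_x$ being the constant $2z$. The crucial bridge to \eqref{Dzwtdefinition} is the two--point pairing: $\eta(z)^{\mathsf T}\xi(w)=\psi_x(z)\psi^*(w)-\psi(z)\psi^*_x(w)$ equals, up to the overall normalization fixed in Lemma~\ref{factorize1}, the quantity $(z^2-w^2)\,D(z,w,\bt)$, while the coincident pairing $\eta(z)^{\mathsf T}\xi(z)=2z$ is exactly the Wronskian.

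Because the dyads are rank one, the trace of a cyclic product telescopes into a product of scalars, $\tr\bigl(\xi_{\sigma(1)}\eta_{\sigma(1)}^{\mathsf T}\cdots\xi_{\sigma(n)}\eta_{\sigma(n)}^{\mathsf T}\bigr)=\prod_{i=1}^n\eta_{\sigma(i)}^{\mathsf T}\xi_{\sigma(i+1)}$, i.e.\ a cyclic product of factors $(z_{\sigma(i)}^2-z_{\sigma(i+1)}^2)\,D(z_{\sigma(i)},z_{\sigma(i+1)},\bt)$; dividing by $\prod_i\bigl(z_{\sigma(i+1)}^2-z_{\sigma(i)}^2\bigr)$ cancels the algebraic prefactors and leaves $\prod_i D(z_{\sigma(i)},z_{\sigma(i+1)},\bt)$. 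Expanding instead the full $R=\xi\eta^{\mathsf T}-z\,I$ produces, besides this leading dyad term, correction terms in which some factors are the scalars $-z_{\sigma(i)}I$; these are shorter cyclic products with certain $z$'s pulled out. I would first settle $n=2$ by a direct computation, obtaining
\[
\tr\bigl(R(z_1^2,\bt)\,R(z_2^2,\bt)\bigr)=-\,(z_1^2-z_2^2)^2\,D(z_1,z_2,\bt)\,D(z_2,z_1,\bt)\;-\;2z_1z_2,
\]
after which the inhomogeneous terms reconcile through $\dfrac{2z_1z_2+z_1^2+z_2^2}{(z_1^2-z_2^2)^2}=\dfrac{(z_1+z_2)^2}{(z_1^2-z_2^2)^2}=\dfrac{1}{(z_1-z_2)^2}$, exactly converting the $\delta_{n2}$ term of \eqref{npointtau} into that of \eqref{newformula}.

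The main obstacle is the general case: one must show that for every $n\ge 3$ all the scalar--shift corrections cancel after summing over $\sigma\in S_n/C_n$ and dividing by the cyclic denominators, so that only the leading dyad term survives. For $n=3$ this can be checked by hand --- the three ``pair'' corrections cancel between the two cyclic classes and the $z_1z_2z_3$ terms cancel as well --- and I would establish it in general by induction on the number of scalar insertions, recognizing each correction as a lower cyclic sum that telescopes away by the same mechanism. The delicate point throughout is the sign/orientation bookkeeping: the parity of $n$ enters both through $\prod_i(z_{\sigma(i)}^2-z_{\sigma(i+1)}^2)$ versus $\prod_i(z_{\sigma(i+1)}^2-z_{\sigma(i)}^2)$ and through the sign in the Wronskian normalization, and it is precisely here that the exact form of Lemma~\ref{factorize1} (the $2z$--Wronskian and the orientation of $\xi,\eta$) must be invoked to match the overall sign in front of the sums in \eqref{newformula}.
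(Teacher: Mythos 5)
Your proposal is correct and follows essentially the same route as the paper: reduce to Corollary~\ref{cor2} with $\lambda_j=z_j^2$, use Lemma~\ref{factorize1} to write $R(z^2,\bt)=zI+(\text{rank-one dyad})$, settle $n=2$ by the direct computation (your reconciliation of the inhomogeneous terms via $\frac{2z_1z_2+z_1^2+z_2^2}{(z_1^2-z_2^2)^2}=\frac1{(z_1-z_2)^2}$ is exactly the paper's), and observe that for $n\ge3$ the scalar parts do not contribute to the weighted cyclic sum. The only divergence is in how that last vanishing is justified: the paper invokes the commutator reformulation of the cyclic sum from~\cite{DuY1} (so that $\mathrm{ad}$ of a scalar kills those terms), whereas you propose the equivalent partial-fraction telescoping of $\sum 1/\bigl((\lambda_n-\lambda_a)(\lambda_b-\lambda_n)\bigr)$ around the cycle; both work, and your mechanism is sound, though you should carry out that telescoping explicitly rather than leave it at the level of a plan.
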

We note that under the gauge freedom of the second type $\psi(z,\bt)\mapsto g(z) \psi(z,\bt) = \wt \psi(z,\bt)$, the dual wave function $\psi^*$ associated with~$\psi$ 
is mapped to $\psi^*(z,x)/g(z) = \wt \psi^*(z,x)$. Therefore we have
$$\wt D(z,w,\bt) \:= \frac{ \wt \psi(z,\bt) \, \wt \psi_x^*(w,\bt) \, - \, \wt \psi^*(w,\bt) \, \wt \psi_x(z,\bt)}{w^2\,-\,z^2} \= \frac{g(z)}{g(w)} \, D(z,w,\bt) \,.  $$
However, products of the factors of the form $g(z)/g(w)$ cancel in each sum of the right hand side of the formula~\eqref{newformula} and 
therefore remain unchanged under the gauge freedom of the second type.  This argument agrees with the fact that the $n$-point correlation functions 
$\Omega_{p_1,\dots,p_n}(\bt)$ of~$u$ (for $n\geq 2$) are the evaluations of certain elements of~$\A$ with $u_i$ replaced by $\p_x^i(u)$. 

For an element $f(x)\in V$ and a solution $u(\bt)\in V\llm \bt_{>0}\rrm $ to the KdV hierarchy related via the 1-1 
correspondence~\eqref{11correspondence}, we have $\psi(z,x,\bdzero)\equiv\psi(z,x)$. For a given~$f$, 
one can find an explicit recursion for solving~$\psi(z,x)$ (see Section~\ref{section3}). 
However, we do not know an efficient way of solving~$\psi(z,\bt)$. This is 
because the recursive procedure of solving $\psi(z,\bt)$ requires the knowledge of~$u(\bt)$, but this requires first solving the KdV 
hierarchy.  However, due to~\eqref{newformula}, the knowledge of~$\psi(z,x)$ gives rise to a construction of 
the logarithm of the tau-function~$\tau(\bt)$ of~$u$ (since $u=\p_x^2(\log \tau)$, \eqref{newformula} also constructs $u$) in a simple way. 
More precisely, if we specialize~\eqref{newformula} to $\bt=(x,\bdzero)$, then it gives a formula
$\Omega_{p_1,\dots,p_n}(x)$ for all $n\geq 2$ and $p_i\geq 0$ and hence, in view of~\eqref{Taylorlogtau}, gives 
the entire Taylor series of~$\log \tau(\bt)$ (which, we recall, is only defined up to the addition of 
a linear term). This is important because in our concrete computations for Examples~1--3 of 
Section~\ref{section11}, we can compute $\psi(z,x)$ and $\psi^*(z,x)$, and hence the function $D(z,w,x)$, explicitly.

\begin{remark}
For $V=\CC[[x]]$, one could use alternatively  
the Sato Grassmannian approach~\cite{BDY1,DKJM,S, Zhou2} to prove the identity~\eqref{newformula}. 
It would be interesting to investigate if the identity~\eqref{newformula} 
could also be proved, say for $V=\CC\llb x\rrb$, by using the 
approach of Berg\`ere and Eynard with the employment of 
topological recursion (loop equation) \cite{AvM,BEM1,BE,CEO,DVV,EO,Marchal,Zhou-TR}, or by using matrix models together 
with appropriate Riemann--Hilbert problems 
(isomonodromic deformations) \cite{BC, BeR0, BeR, Deift, DVV, DuY1, DuY3, HZ, Kont, M, Witten}, or by using OPE from 
appropriate vertex algebras \cite{BDM, DKJM, FZ, Zhou2};  
we expect that at least for the initial data $f(x)\in V$ having  
the {\it bispectral} property defined by Duistermaat and Gr\"unbaum~\cite{DG} 
(see also the $M$-bispectrality given in Section~\ref{bispsection})  
this might be possible. We also note that the matrix resolvent method and some of the above-mentioned methods extend to new situations 
(see e.g. \cite{BDY2,BDY3,DuY1,DuY2,DuY3,DYZ}, \cite{Alexandrov2}, 
 \cite{BEM1,BEM2,BBE, Marchal} and \cite{BeR0,BeR,BeR2,GGR}).
\end{remark}

We now formulate a theorem that is equivalent to Theorem~\ref{wavethm} but involves 
a function $K(z,w,\bt)$ which, unlike $D(z,w,\bt)$, depends only on the solution~$u(\bt)$ 
(i.e. it is independent of the choice of a pair of wave functions of~$u(\bt)$).  The function $K(z,w,\bt)$ is defined by
\begin{align} 
K(z,w,\bt) \:= 
\frac{ b(z^2,\bt) \, b_x(w^2,\bt)  -  b(w^2,\bt) \, b_x(z^2,\bt) }{2 \, (w^2-z^2)}  \,-\, 
\frac{ w\, b(z^2,\bt) + z\, b(w^2,\bt) }{w^2-z^2} \, .  \label{defbigK}
\end{align}
\begin{theorem}\label{thmmainabs} 
For any fixed $n\geq 2$ the following formula 
holds true:
\begin{align}
&  \sum_{p_1,\dots,p_n} \Omega_{p_1, \dots, p_n} (\bt) \, \prod_{j=1}^n \frac{(2p_j+1)!!}{z_j^{2p_j+2}}  \= 
- \frac1{\prod_{i=1}^n b\bigl(z_i^2,\bt\bigr)} \, \sum_{\sigma\in S_n/C_n} 
\prod_{i=1}^n K\bigl(z_{\sigma(i)}, z_{\sigma(i+1)}, \bt \bigr)  \,-\, \frac{\delta_{n2}}{(z_1-z_2)^2} \,. \label{newformulaabs}
\end{align}
\end{theorem}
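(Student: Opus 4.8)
The plan is to deduce Theorem~\ref{thmmainabs} \emph{directly} from Theorem~\ref{wavethm}: the two assertions have identical left-hand sides and identical last terms $-\delta_{n2}/(z_1-z_2)^2$, so it suffices to prove the algebraic identity between the two sums over $S_n/C_n$. In fact I expect this to hold term by term, i.e. for each $n$-cycle $(a_1,\dots,a_n)$ (with the cyclic convention $a_{n+1}=a_1$),
\[
\prod_{i=1}^n D(z_{a_i},z_{a_{i+1}},\bt) \= \frac1{\prod_{j=1}^n b(z_{a_j}^2,\bt)}\prod_{i=1}^n K(z_{a_i},z_{a_{i+1}},\bt) \,.
\]
The two inputs I will use are the factorization $\psi(z,\bt)\,\psi^*(z,\bt)=b(z^2,\bt)$ recorded after~\eqref{definngspsistar}, and the value of the Wronskian $W(z,\bt):=\psi\,\psi_x^*-\psi^*\psi_x$.

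First I would establish this Wronskian. Since $\psi$ and $\psi^*$ both solve $L(\cdot)=z^2(\cdot)$ for the same operator $L=\p^2+2u_0$, one has $\p_x W=\psi\,\psi_{xx}^*-\psi^*\psi_{xx}=0$, so $W$ is independent of $x$. Its value is pinned down by the normalization ${\rm Tr}\,R(\lambda)^2=2\lambda$ from~\eqref{defiR}: writing $R(z^2,\bt)=\sm abc{-a}$ gives $a^2+bc=z^2$, while the factorization of $R$ through the pair of wave functions (Lemma~\ref{factorize1}) expresses its entries as $b=\psi\psi^*$, $a=\tfrac12(\psi_x\psi^*+\psi\psi_x^*)$ and $c=-\psi_x\psi_x^*$. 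Substituting these and using $(A+B)^2-4AB=(A-B)^2$ yields $a^2+bc=\tfrac14(\psi_x\psi^*-\psi\psi_x^*)^2=W^2/4$, so $W^2=4z^2$; the leading asymptotics $\psi\sim e^{xz}$, $\psi^*\sim e^{-xz}$ fix the sign and give $W(z,\bt)=-2z$.

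Next, from $\psi\psi^*=b$ and $W=-2z$ I would read off the logarithmic derivatives
\[
\frac{\psi_x(z,\bt)}{\psi(z,\bt)}=\frac{b_x(z^2,\bt)+2z}{2\,b(z^2,\bt)}\,,\qquad \frac{\psi_x^*(z,\bt)}{\psi^*(z,\bt)}=\frac{b_x(z^2,\bt)-2z}{2\,b(z^2,\bt)}\,,
\]
obtained by adding and subtracting $\psi_x/\psi+\psi_x^*/\psi^*=b_x/b$ and $\psi_x^*/\psi^*-\psi_x/\psi=W/b$. I then factor the defining expression~\eqref{Dzwtdefinition} as
\[
D(z,w,\bt)=\frac{\psi(z,\bt)\,\psi^*(w,\bt)}{w^2-z^2}\Bigl(\frac{\psi_x^*(w,\bt)}{\psi^*(w,\bt)}-\frac{\psi_x(z,\bt)}{\psi(z,\bt)}\Bigr)\,,
\]
and a short computation with the two formulas above, together with the definition~\eqref{defbigK} of $K$, shows that the bracketed factor equals $(w^2-z^2)\,K(z,w,\bt)/\bigl(b(z^2,\bt)\,b(w^2,\bt)\bigr)$; hence $D(z,w,\bt)=\psi(z,\bt)\,\psi^*(w,\bt)\,K(z,w,\bt)\big/\bigl(b(z^2,\bt)\,b(w^2,\bt)\bigr)$.

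Finally I would form the cyclic product. Around the cycle each index occurs once as a first and once as a second argument, so $\prod_i\psi(z_{a_i},\bt)\,\psi^*(z_{a_{i+1}},\bt)=\prod_j b(z_{a_j}^2,\bt)$ and likewise $\prod_i b(z_{a_i}^2,\bt)\,b(z_{a_{i+1}}^2,\bt)=\prod_j b(z_{a_j}^2,\bt)^2$; multiplying the factored form of $D$ over the cycle then telescopes to the claimed $\prod_i K\big/\prod_j b$. Summing over $\sigma\in S_n/C_n$ and reinstating the common term $-\delta_{n2}/(z_1-z_2)^2$ turns~\eqref{newformula} into~\eqref{newformulaabs}. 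The one genuinely non-formal step is the determination of the Wronskian $W=-2z$; once that is in hand the rest is bookkeeping, and the essential reason the powers of $b$ balance is that in an $n$-cycle every variable appears in exactly two of the factors $K$.
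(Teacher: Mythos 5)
Your argument is correct and follows essentially the same route as the paper: the three identities $\psi\psi^*=b$, $\psi_x\psi^*-\psi\psi^*_x=2z$ and $(\psi\psi^*)_x=b_x$ give exactly the logarithmic-derivative formulas used in the paper's proof of Proposition~\ref{KDrelation}, your factorization $D(z,w,\bt)=\psi(z,\bt)\psi^*(w,\bt)K(z,w,\bt)/\bigl(b(z^2,\bt)b(w^2,\bt)\bigr)$ is equivalent (via $\psi\psi^*=b$) to the relation $K=\psi^*(z,\bt)\psi(w,\bt)D$ proved there, and your telescoping of the cyclic product is the paper's observation that the factors $\psi(z,\bt)/\psi(w,\bt)$ cancel. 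Your alternative derivation of the Wronskian from ${\rm Tr}\,R(\lambda)^2=2\lambda$ is a harmless variant of Lemma~\ref{twoid}, provided you obtain the entries $a=\tfrac12(\psi\psi^*)_x$ and $c=-\psi_x\psi^*_x$ from~\eqref{aeq}--\eqref{ceq} together with $b=\psi\psi^*$ and the Schr\"odinger equation, rather than from the matrix identity~\eqref{RP}, whose derivation in the paper already uses $\det\Psi=2z$.
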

\begin{remark} 
Both the formula~\eqref{newformula} and the formula~\eqref{newformulaabs} are generalized 
to the Toda lattice hierarchy in~\cite{Y2}. 
For example, the following theorem is proved in~\cite{Y2}.

\noindent {\bf Theorem}~(\cite{Y2}).
{\it Fix $k\geq 2$ an integer. 
For an arbitrary solution $(v,w)$ to the Toda lattice hierarchy associated with the 
Lax operator $L_{\rm Toda}=\Lambda+v+w\Lambda^{-1}$, 
let $\psi_1(\lambda,n,\bt),\psi_2(\lambda,n,\bt)$ be a pair of wave functions of the solution~$(v,w)$. 
Here $\Lambda$ denotes the shift operator $\Lambda:f(n)\mapsto f(n+1)$. Define
$$
D(\lambda,\mu,n,\bt) \= \frac{\psi_1(\lambda,n,\bt) \, \psi_2(\mu,n-1,\bt) \,-\, \psi_1(\lambda,n-1,\bt) \,\psi_2(\mu,n,\bt)}{\lambda-\mu} \,.
$$
Then the following formula holds true:
\begin{align}
& \sum_{i_1,\dots,i_k\geq 0} 
\frac{\Omega_{i_1,\dots,i_k}^{\rm Toda}(n,\bt)}{\lambda_1^{i_1+2} \cdots \lambda_k^{i_k+2}}  \=  
(-1)^{k-1} \frac{e^{kq(n-1,\bt)} }{\prod_{j=1}^k \lambda_j} \sum_{\sigma\in S_k/C_k} \prod_{j=1}^k D\bigl(\lambda_{\sigma(j)},\lambda_{\sigma(j+1)},n,\bt\bigr) \,-\, \frac{\delta_{k,2}}{(\lambda_1-\lambda_2)^2}    \,, 
\end{align}
where $\Omega_{i_1,\dots,i_k}^{\rm Toda}(n,\bt)$ 
denote the $k$-point correlation functions of $(v,w)$, 
and the $q(n,\bt)$ is defined via}
$$
w(n,\bt) \= e^{q(n-1,\bt)-q(n,\bt)} \,, \quad  \frac{\p q(n,\bt)}{\p t_i} = - S_i(n,\bt),~i\geq0\,.
$$
\noindent 
See~\cite{Y2} for the precise definitions of the functions $S_i(n,\bt)$, $\Omega_{i_1,\dots,i_k}^{\rm Toda}(n,\bt)$, and of 
a pair of wave functions for the Toda lattice hierarchy.
\end{remark}

We observe that all the following three functions 
$$
K(z,w,\bt) \,-\, \frac{b\bigl(w^2,\bt\bigr)}{z-w} \,, \quad K(z,w,\bt) \,-\, \frac{b\bigl(z^2,\bt\bigr)}{z-w} \,, \quad K(z,w,\bt) \,-\, \frac12 \frac{b\bigl(z^2,\bt\bigr)+b\bigl(w^2,\bt\bigr)}{z-w} 
$$
belong to the ring $V \llm \bt_{>0}\rrm \bllm z^{-1},w^{-1} \brrm$.
We also remark that if we define $K$ as the right hand side of~\eqref{defbigK} with $b(z^2,\bt)$ and $b(w^2,\bt)$ 
replaced by $b(z^2)$ and $b(w^2)$, respectively, then it easily follows that
the three functions $K - b(z^2)/(z-w) $, $K - b(w^2)/(z-w) $ and 
$K - \frac12 \bigl(b(z^2) + b(w^2) \bigr)/(z-w)$ all belong to $\A  \bllm z^{-1},w^{-1} \brrm $. 
Moreover, the identity~\eqref{newformulaabs} is then also true in
the {\bf abstract sense} (similarly to Proposition~\ref{taulemma}). 
We have for example
\begin{align}
& K \,-\, \frac12 \frac{b\bigl(z^2\bigr)+b\bigl(w^2\bigr)}{z-w}  \= \frac{u_0}2 \biggl(\frac{1}{zw^2} - \frac{1}{z^2w}\biggr) \,-\, \frac{u_1}{2} \frac1{ z^2w^2} 
\+ \frac{6u_0^2+u_2}{8} \biggl(\frac1{zw^4} - \frac{1}{z^2w^3} + \frac{1}{z^3w^2} - \frac{1}{z^4w}\biggr) \nn\\
& \qquad \qquad \qquad \qquad \qquad \qquad \qquad  - \frac{12 u_0u_1+u_3}{8} \biggl( \frac1{z^2w^4} + \frac1{z^4w^2}\biggr) \+ \cdots \,. \nn 
\end{align}
\begin{prop}\label{KDrelation} Let $\psi(z,\bt)$ be any wave function of~$u(\bt)$.
The functions $K$ and~$D$ satisfy the relation
\begin{align}
K(z,w,\bt) \= b\bigl(z^2,\bt\bigr) \,\frac{\psi(w,\bt)}{\psi(z,\bt)}  \, D(z,w,\bt) 
 \= b\bigl(w^2,\bt\bigr) \, \frac{\psi^*(z,\bt)}{\psi^*(w,\bt)} \, D(z,w,\bt) \= \psi^*(z,\bt) \, \psi(w,\bt)  \, D(z,w,\bt)  \,.  \label{KDequation1} 
\end{align}
\end{prop}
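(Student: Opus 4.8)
The plan is to collapse the three identities in \eqref{KDequation1} to a single one and then verify that one by direct substitution, the only genuine input being a Wronskian normalization. First I would observe that the three right-hand sides of \eqref{KDequation1} are \emph{a priori} equal, as a purely formal consequence of the relation $b(z^2,\bt)=\psi(z,\bt)\,\psi^*(z,\bt)$ recalled in Section~\ref{Introwave}: indeed $b(z^2,\bt)\,\psi(w,\bt)/\psi(z,\bt)=\psi^*(z,\bt)\,\psi(w,\bt)$ and $b(w^2,\bt)\,\psi^*(z,\bt)/\psi^*(w,\bt)=\psi(w,\bt)\,\psi^*(z,\bt)$, both of which equal $\psi^*(z,\bt)\,\psi(w,\bt)$. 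Hence it suffices to prove the last equality
\[
K(z,w,\bt)\=\psi^*(z,\bt)\,\psi(w,\bt)\,D(z,w,\bt)\,.
\]

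Next I would insert $b(z^2,\bt)=\psi\psi^*$ together with $b_x(z^2,\bt)=\psi_x\psi^*+\psi\psi_x^*$ (and the analogous expressions in the variable $w$) into the definition \eqref{defbigK} of $K$, and expand $\psi^*(z,\bt)\,\psi(w,\bt)\,D(z,w,\bt)$ using \eqref{Dzwtdefinition}. After multiplying through by $w^2-z^2$ the quartic terms recombine into Wronskians, and the difference of the two sides reduces to
\[
-\tfrac12\,b(z^2,\bt)\,W(w)\;-\;\tfrac12\,b(w^2,\bt)\,W(z)\;-\;w\,b(z^2,\bt)\;-\;z\,b(w^2,\bt)\,,
\]
where $W(z)=\psi(z,\bt)\,\psi_x^*(z,\bt)-\psi_x(z,\bt)\,\psi^*(z,\bt)$ is the Wronskian of the two solutions $\psi,\psi^*$ of $L\psi=z^2\psi$. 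Thus the proposition is equivalent to the single normalization $W(z)=-2z$ (and likewise $W(w)=-2w$).

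The main obstacle, and the only nontrivial step, is this normalization. Since $L=\p^2+2u_0$ has no first-order term, $W(z)$ is independent of $x$, so it remains to determine its value. For this I would use the defining properties \eqref{defiR} of $R(z^2,\bt)=\sm{a}{b}{c}{-a}$: the $12$- and $11$-components of $[\cL,R]=0$ give $b_x=2a$ and $a_x=(z^2-2u_0)b-c$, and substituting $b=\psi\psi^*$ and $\psi_{xx}=(z^2-2u_0)\psi$ yields $a=\tfrac12(\psi_x\psi^*+\psi\psi_x^*)$ and $c=-\psi_x\psi_x^*$. The normalization ${\rm Tr}\,R^2=2z^2$ then reads $a^2+bc=z^2$, while a direct computation gives $a^2+bc=\tfrac14(\psi_x\psi^*-\psi\psi_x^*)^2=\tfrac14\,W(z)^2$; hence $W(z)^2=4z^2$. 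The sign is fixed by the leading asymptotics $\psi\sim e^{\theta}$, $\psi^*\sim e^{-\theta}$ with $\theta_x=z$, which force $W(z)=-2z$ rather than $+2z$ (alternatively, $W(z)=-2z$ may simply be quoted from the construction of a pair of wave functions in Section~\ref{section3}). Feeding $W(z)=-2z$ and $W(w)=-2w$ into the displayed expression makes it vanish, which proves $K=\psi^*(z,\bt)\,\psi(w,\bt)\,D$ and hence, by the first paragraph, all three equalities of \eqref{KDequation1}.
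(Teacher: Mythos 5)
Your proof is correct and rests on exactly the same three facts the paper uses (namely $\psi\psi^*=b$, $\psi_x\psi^*+\psi\psi^*_x=b_x$, and the Wronskian normalization from Lemma~\ref{factorize1}), combined with direct substitution into the definitions of $K$ and~$D$; the paper just organizes it by solving that $3\times3$ system for $\psi^*,\psi_x,\psi^*_x$ and verifying the first equality of~\eqref{KDequation1}, whereas you verify the third. Your re-derivation of $\psi_x\psi^*-\psi\psi^*_x=2z$ from $[\cL,R]=0$ and ${\rm Tr}\,R^2=2\lambda$ is sound but unnecessary, since it is already established in Lemmas~\ref{twoid} and~\ref{factorize1}.
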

The proofs of Proposition~\ref{KDrelation} and Theorem~\ref{thmmainabs} will be given in Section~\ref{section3}.

Let us now consider the case $V=\CC\llm x\rrm$, namely,     
consider arbitrary solutions of the KdV hierarchy in the ring $\CC\llm t_0, t_1,t_2,\dots \rrm$, where we recall that $t_0=x$.  
Define $D(z,w)$ and $K(z,w)$ as follows:
\beq\label{DandK}
D(z,w)\:=D(z,w,\bdzero) \= \frac{\psi(z,\bdzero) \, \psi_x^*(w,\bdzero) \, - \, \psi^*(w,\bdzero) \, \psi_x(z,\bdzero)}{w^2\,-\,z^2} \,, 
\qquad  K(z,w)\:=K(z,w,\bdzero)  \,. 
\eeq 
Then Proposition~\ref{KDrelation} immediately implies the following corollary. 
\begin{cor} \label{corofnew} 
For $V=\CC\llm x\rrm$, 
we have 
\beq\label{eq24}
D(z,w) \,-\, \frac1{z-w} \in \CC  \bllm z^{-1},w^{-1} \brrm  \,,  
\eeq 
\beq \label{eq25} 
K(z,w) \,-\, \frac{b(z^2,{\bf 0})}{z-w} \,, ~ K(z,w) \,-\, \frac{b(w^2,{\bf 0})}{z-w} \,, 
~ K(z,w) \,-\, \frac12 \frac{b(z^2,{\bf 0})+b(w^2,{\bf 0})}{z-w} \;  \in  \CC  \bllm z^{-1},w^{-1} \brrm \, .
\eeq
\end{cor}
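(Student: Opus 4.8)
Throughout, I work at $\bt=\bdzero$, meaning all times (including $t_0=x$) are set to zero, and I abbreviate $\psi(z)=\psi(z,\bdzero)$, $\psi^*(z)=\psi^*(z,\bdzero)$. The point of this specialization is that the exponential factor $\exp\bigl(\sum_k t_k z^{2k+1}/(2k+1)!!\bigr)$ becomes $1$, so that $\psi(z)$ and $\psi^*(z)$ are honest elements of $1+z^{-1}\CC\bllb z^{-1}\brrb$; in particular they are units in $\CC\bllb z^{-1}\brrb$, and $\psi(z)\psi^*(z)=b(z^2)$. My plan is to read \eqref{eq25} off the already-known behaviour of $K(z,w,\bt)$ for general $\bt$, and then to use Proposition~\ref{KDrelation} to transfer this information to $D(z,w)$, obtaining \eqref{eq24}; because the transfer factors are units, the two statements turn out to be equivalent.

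For \eqref{eq25} I would simply specialize. The three functions $K(z,w,\bt)-b(w^2,\bt)/(z-w)$, $K(z,w,\bt)-b(z^2,\bt)/(z-w)$ and $K(z,w,\bt)-\tfrac12\bigl(b(z^2,\bt)+b(w^2,\bt)\bigr)/(z-w)$ were shown, in the discussion preceding Theorem~\ref{thmmainabs}, to lie in $V\llm\bt_{>0}\rrm\bllm z^{-1},w^{-1}\brrm$. The evaluation $\bt\mapsto\bdzero$ is a ring homomorphism $V\llm\bt_{>0}\rrm\to\CC$ (for $V=\CC\llm x\rrm$ it also sends $x\mapsto0$); applied coefficientwise it maps $V\llm\bt_{>0}\rrm\bllm z^{-1},w^{-1}\brrm$ into $\CC\bllm z^{-1},w^{-1}\brrm$ and carries $K(z,w,\bt)\mapsto K(z,w)$ and $b(z^2,\bt)\mapsto b(z^2)$. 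This is exactly \eqref{eq25}.

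To obtain \eqref{eq24} I would use the third form of \eqref{KDequation1}, which at $\bt=\bdzero$ reads $K(z,w)=\psi^*(z)\psi(w)\,D(z,w)$. Since $\psi^*(z)\psi(w)$ is a unit, I can solve for $D$; writing the third part of \eqref{eq25} as $K(z,w)=\tfrac12\bigl(b(z^2)+b(w^2)\bigr)/(z-w)+G(z,w)$ with $G\in\CC\bllm z^{-1},w^{-1}\brrm$ and using $b(z^2)=\psi(z)\psi^*(z)$, the two terms of the numerator simplify and I get
\[
D(z,w)=\frac{1}{2(z-w)}\Bigl(\frac{\psi(z)}{\psi(w)}+\frac{\psi^*(w)}{\psi^*(z)}\Bigr)+\frac{G(z,w)}{\psi^*(z)\psi(w)}\,.
\]
The last summand lies in $\CC\bllm z^{-1},w^{-1}\brrm$ because $\psi^*(z)\psi(w)$ is a unit there; subtracting $1/(z-w)$ from the first summand leaves $\tfrac{1}{2(z-w)}$ times $\bigl(\psi(z)/\psi(w)-1\bigr)+\bigl(\psi^*(w)/\psi^*(z)-1\bigr)$, which is regular along $z=w$. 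This is \eqref{eq24}. Running the same computation backwards recovers \eqref{eq25} from \eqref{eq24}, confirming the equivalence.

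The step needing the most care is the ring-theoretic bookkeeping concealed in the phrase ``regular along $z=w$'': one must check that the divided differences $\bigl(\psi(z)-\psi(w)\bigr)/(z-w)$ and $\bigl(\psi^*(z)-\psi^*(w)\bigr)/(z-w)$ genuinely belong to $\CC\bllm z^{-1},w^{-1}\brrm$. This is the divided-difference stability of the formal power-series ring: for $P=1+\sum_{k\ge1}p_k z^{-k}$ each difference $z^{-k}-w^{-k}$ contributes $-\sum_{i=0}^{k-1}z^{-i-1}w^{-(k-i)}$, all of whose exponents are negative, so that $\bigl(P(z)-P(w)\bigr)/(z-w)\in\CC\bllm z^{-1},w^{-1}\brrm$; dividing by the units $\psi(w)$, $\psi^*(z)$ preserves membership. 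Once this lemma and the invertibility of $\psi(z),\psi^*(z)$ are in hand, the cancellation of the diagonal pole is automatic, and no analytic input beyond Proposition~\ref{KDrelation} and the membership statements preceding Theorem~\ref{thmmainabs} is required.
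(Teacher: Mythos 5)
Your proof is correct and follows essentially the same route the paper intends: the corollary is stated there as an immediate consequence of Proposition~\ref{KDrelation} together with the membership observations for $K(z,w,\bt)$ preceding Theorem~\ref{thmmainabs}, which is exactly what you use — specialization at $\bt=\bdzero$ for \eqref{eq25}, then division by the unit $\psi^*(z)\psi(w)$ and the divided-difference identity $\bigl(P(z)-P(w)\bigr)/(z-w)\in\CC\bllm z^{-1},w^{-1}\brrm$ for \eqref{eq24}. You have merely supplied the bookkeeping that the paper leaves implicit.
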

\noindent Note that for this case, pairs of wave functions 
of the solution correspond to certain points 
of the Sato Grassmannian. In such correspondences, 
the coefficients of $D(z,w)-1/(z-w)$ coincide with the affine coordinates~\cite{EH} of the point of the Sato Grassmannian.
Based on this observation we find that for~$V=\CC\llm x\rrm$ the identity~\eqref{newformula} evaluated at $\bt=\bdzero$ 
is equivalent to a formula obtained by J.~Zhou~\cite{Zhou2} from the fermionic picture (see the Theorem~5.3 of~\cite{Zhou2}). 
The affine coordinates can alternatively be used to compute the Pl\"ucker coordinates 
in the expansion of tau-function (as opposed to its logarithm) (cf. \cite{BY,EH,S,Zhou0,Zhou2}).
Geometric meaning (in Sato Grassmannian) of the coefficients of the power series in~\eqref{eq25} 
remains an interesting open question,
which we plan to do in a subsequent publication.

\subsection{Applications}  \label{Introapp}

We now discuss each of the three examples from Section~\ref{section11} in turn.

\medskip

\noindent {\bf  Example 1: Intersection numbers of~$\psi$-classes.} 
An explicit formula for the generating series of~\eqref{wkcor}
in terms of the basic matrix resolvent was derived in~\cite{BDY1}; see also~\eqref{npointwkold} and~\eqref{MWK}. 
A particular pair of wave functions $(\psi(z,x),\psi^*(z,x))$ of $f=x$ were given in~\cite{BDY1} (cf. also~\cite{Buryak1,KS}):
\begin{align}
& \psi(z,x) \= \frac{\sqrt{z}}{(z^2-2x)^{\frac14}} \, e^{\frac13z^3-\frac13 (z^2-2x)^{\frac32}} \sum_{k\geq 0} \frac{(-1)^k}{288^k} \frac{(6k)!}{(3k)! \, (2k)!} \, (z^2-2x)^{-\frac{3k}2} \,, \label{psiwk1} \\  
& \psi^*(z,x) \= \frac{\sqrt{z}}{(z^2-2x)^{\frac14}}  \, e^{-\frac{z^3}3 + \frac13 (z^2-2x)^{\frac32}}  \, \sum_{k\geq 0} \frac{1}{288^k} \frac{(6k)!}{(3k)! \, (2k)!} \, (z^2-2x)^{-\frac{3k}2} \,.  \label{psiwk2}
\end{align}
They can be viewed as the asymptotic expansion of the following two analytic functions, respectively, as $z\rightarrow \infty$ within appropriate sectors~\cite{BDY1,BeY}:
$$
\sqrt{2\pi z} \, e^{\frac{z^3}3} \, 2^{\frac13} \, {\rm Ai}(\xi)\,, \qquad  e^{\frac{\pi i}6} \, \sqrt{2\pi z} \, e^{-\frac{z^3}3}  \, 2^{\frac13} \, {\rm A_i}(\omega \xi) \,, 
$$
where $\xi=2^{-\frac23} \, (z^2-2x)$, $\omega=e^{2\pi i/3}$. 
Restricting the formal series~\eqref{psiwk1}--\eqref{psiwk2} to~$x=0$ we obtain 
\begin{align}
& \psi(z,0) \= \sum_{k=0}^\infty \frac{(-1)^k}{288^k} \frac{(6k)!}{(3k)! \, (2k)!} \, z^{-3k} \,,  \qquad \psi^*(z,0) \= \psi(-z,0) \,, \\
& \psi_x(z,0) \= \sum_{k=0}^\infty \frac{1+6k}{1-6k} \frac{(-1)^k}{288^k} \frac{(6k)!}{(3k)! \, (2k)!} \, z^{-3k+1} \,,    \qquad 
 \psi_x^*(z,0) \= \psi_x(-z,0) \,.
\end{align}
These series appeared in the Faber--Zagier formula~\cite{F} on relations in the tautological ring of the moduli space of curves. 
Now consider the function 
$D(z,w) = D(z,w, \bdzero)$.
According to~\eqref{eq24} we know that  
this function has a pole only at $z=w$.
The first few terms for $D(z,w)-1/(z-w)$ are given by
\begin{align}
& D(z,w) \,-\, \frac1{z-w} \=  \frac{5}{24} z^{-1} w^{-3} \,-\, \frac7{24} z^{-2} w^{-2} \+ \frac5{24} z^{-3} w^{-1} \,-\, \frac{455}{1152} z^{-2} w^{-5}  \nn\\
& \qquad \qquad \qquad \qquad \qquad \+ \frac{385}{1152} z^{-3} w^{-4} \,-\, \frac{385}{1152} z^{-4} w^{-3}  \+ \frac{455}{1152} z^{-5}w^{-2} \+ \cdots \,.  \label{Dcoeff}
\end{align}

Theorem~\ref{wavethm} then implies the following slightly simplified version of a formula of J.~Zhou:

\medskip

\noindent {\bf Theorem (Zhou \cite{Zhou2})}.
{\it For $n\geq 2$, the generating series of the $n$-point intersection numbers of $\psi$-classes has the expression}
\begin{align}
& \sum_{g\geq 0}  \sum_{p_1,\dots,p_n\geq 0}  \int_{\overline{\M}_{g,n}} \psi_1^{p_1} \cdots  \psi_n^{p_n} \, 
\prod_{j=1}^k \frac{(2p_j+1)!!}{z_j^{2p_j+2}} \= - \sum_{\sigma\in S_n/C_n} \prod_{i=1}^n D\bigl(z_{\sigma(i)}, z_{\sigma(i+1)} \bigr)  
\,- \, \frac{\delta_{n2}}{(z_1-z_2)^2} \,. \label{npointwkzhou}
\end{align}

We observe that an explicit formula for the coefficients in~\eqref{Dcoeff} was 
obtained by J.~Zhou~\cite{Zhou0} by solving Virasoro constraints using the {\it fermionic} method. Later 
the explicit formula was re-proved by F.~Balogh and one of the authors of the present paper~\cite{BY} using the Sato 
Grassmannian approach~\cite{S}.

The Laurent series of the corresponding function~$K(z,w)$ begins
\beq
K(z,w) \,-\, \frac12 \frac{b(z^2,\bdzero)+b(w^2,\bdzero)}{z-w}  \= -\frac12 \frac1{z^2w^2} \, + \,
\frac5{16} \biggl(\frac1{zw^6} - \frac1{z^2w^5} +\frac1{z^3w^4} - \frac1{z^4w^3} +\frac1{z^5w^2} -\frac1{z^6w} \biggr) \+ \cdots.  
\eeq

\medskip

\noindent {\bf  Example 2: Theta-classes.}
For the BGW-Norbury solution $u_{\Theta(1/8)}$ introduced in Example~\ref{example2} (with $C=1/8$), denote by 
$\oteight_{p_1,\dots,p_n}(\bt)$ the $n$-point correlation functions of~$u_{\Theta(1/8)}$.
According to Norbury~\cite{N} $\oteight_{p_1,\dots,p_n}(\bt)$ for $n\geq 2$ are related to the $\Theta$-class intersection numbers by 
\beq 
\oteight_{p_1, \dots,p_n}  (\bdzero) \= \int_{\overline{\mathcal{M}}_{g,n}} \Theta_{g,n} \, \psi_1^{p_1} \cdots \psi_n^{p_n} \qquad (g=p_1+\cdots+p_n+1)\,.  
\label{Norburyci} 
\eeq
(The expression on the right vanishes for $g\neq p_1+\dots+p_n+1$.)  
For any $n\geq 1$, denote by $F(\lambda_1,\dots,\lambda_n)$ the following generating series of the $n$-point $\Theta$-class intersection numbers
\beq 
F(\lambda_1,\dots,\lambda_n) \= \sum_{p_1,\dots,p_n} \prod_{j=1}^n \frac{(2p_j+1)!!}{\lambda_j^{p_j+1}}   
\int_{\overline{\mathcal{M}}_{1+p_1+\cdots+p_n,n}} \Theta_{1+p_1+\cdots+p_n,n} \, \psi_1^{p_1} \cdots \psi_n^{p_n}  \,. 
\eeq
Using the relationship~\eqref{Norburyci} and Corollary~\ref{cor2} we will prove:
\begin{theorem} \label{BGWnpoint} 
For fixed $n\geq 2$, the generating series $F(\lambda_1,\dots,\lambda_n)$ has the expression
\begin{align}
F(\lambda_1,\dots,\lambda_n)  \= -\sum_{\sigma\in S_n/C_n} \frac{\tr \, \bigl( M (\lambda_{\sigma(1)})\cdots M(\lambda_{\sigma(n)})\bigr)}
{\prod_{i=1}^n \bigl(\lambda_{\sigma(i+1)}-\lambda_{\sigma(i)}\bigr)} 
\,-\, \delta_{n,2} \frac{\lambda_1+\lambda_2}{(\lambda_1-\lambda_2)^2} \, ,  \label{state2}
\end{align}
where $M(\lambda)= \begin{pmatrix} 0 & 0\\ \lambda & 0\end{pmatrix} \+ \displaystyle\sum_{k\geq 0} \, \biggl[\dfrac{(2k-1)!!}{2^k}\biggr]^3  \,
\biggl (\begin{matrix} k  & 1\\
-  \frac{8k^3+12 k^2 + 4 k + 1}{8 (k+1)} & -k 
\end{matrix}  \biggr) \,  \dfrac{\lambda^{-k}}{k!} \, $.
\end{theorem}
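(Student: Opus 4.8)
The plan is to reduce Theorem~\ref{BGWnpoint} to the single identity $\R(\lambda,0)=M(\lambda)$ and then to establish that identity by computing the basic matrix resolvent of the BGW--Norbury solution at $x=0$. For the reduction I would apply formula~\eqref{npointtaux} of Corollary~\ref{cor2} to the solution $u_{\Theta(1/8)}$ and set $x=0$. Since $\Omega_{p_1,\dots,p_n}(x)=\Omega_{p_1,\dots,p_n}(x,\bdzero)$ by definition, the specialization $x=0$ turns its left-hand side into $\sum_{p_1,\dots,p_n}\oteight_{p_1,\dots,p_n}(\bdzero)\,\prod_{j=1}^n\frac{(2p_j+1)!!}{\lambda_j^{p_j+1}}$. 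Norbury's relation~\eqref{Norburyci} identifies each $\oteight_{p_1,\dots,p_n}(\bdzero)$ with the $\Theta$-class intersection number for the unique genus $g=1+p_1+\cdots+p_n$ (the number being $0$ unless this dimensional constraint holds), so this left-hand side is precisely $F(\lambda_1,\dots,\lambda_n)$. The $\delta_{n2}$ terms of~\eqref{npointtaux} and~\eqref{state2} coincide, so once $\R(\lambda_j,0)=M(\lambda_j)$ is known the two right-hand sides agree and the theorem follows.

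It remains to compute $\R(\lambda,0)$, i.e. the basic matrix resolvent $R(\lambda)=\sm{a}{b}{c}{-a}$ with each $u_i$ replaced by $\p_x^i f\big|_{x=0}$, where $f(x)=\tfrac18(x-1)^{-2}$. A short differentiation yields the clean initial values $u_i\big|_{x=0}=\p_x^i f\big|_{x=0}=(i+1)!/8$. I would then exploit the structure forced by the defining conditions~\eqref{defiR}: expanding $[\cL,R]=0$ gives $b_x=2a$ and $c=b\,(\lambda-2u_0)-\tfrac12 b_{xx}$, so the entire matrix is recovered from its upper-right entry $b(\lambda,x)=\psi(z,x)\psi^*(z,x)$ (with $\lambda=z^2$), while the normalization ${\rm Tr}\,R(\lambda)^2=2\lambda$ becomes the scalar relation $\tfrac14 b_x^2+b^2(\lambda-2u_0)-\tfrac12 b\,b_{xx}=\lambda$. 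Consequently it suffices to determine $b(z^2,x)$ near $x=0$ and read off $a\big|_0=\tfrac12 b_x\big|_0$ and $c\big|_0=b\big|_0\,(\lambda-\tfrac14)-\tfrac12 b_{xx}\big|_0$, the value $2u_0\big|_0=\tfrac14$ coming from $f(0)=\tfrac18$.

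To produce $b$ explicitly I would solve the Schr\"odinger equation $L\psi=z^2\psi$ with $L=\p_x^2+\tfrac{1}{4(x-1)^2}$. After the substitution $s=x-1$ this is a modified Bessel equation of index $0$, so a pair of wave functions of $f$ is given, up to the normalization fixing the leading coefficient to $1$ and the residue conditions of Section~\ref{section3}, by $\sqrt{s}\,I_0(zs)$- and $\sqrt{s}\,K_0(zs)$-type solutions, whose exponential factors $e^{\pm z s}$ reproduce the required $e^{\pm xz}$ asymptotics. Forming $b(z^2,x)=\psi\psi^*$ and specializing to $s=-1$ should give $b(z^2,0)=\sum_{k\geq0}\bigl[\tfrac{(2k-1)!!}{2^k}\bigr]^3\frac{z^{-2k}}{k!}$, the $(1,2)$-entry of $M(\lambda)$, after which the relations of the previous paragraph deliver the diagonal and lower-left entries.

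The main obstacle is this last, purely computational identification: proving to all orders that the resolvent entries equal the stated closed forms, and in particular that the $(2,1)$-entry carries exactly the factor $-\frac{8k^3+12k^2+4k+1}{8(k+1)}$. Whether one matches the asymptotic expansion of the Bessel product or solves the recursion generated by the scalar relation above together with the initial data $u_i\big|_{x=0}=(i+1)!/8$, this amounts to recognizing a non-elementary hypergeometric pattern; reassuringly, the cubic numerator is precisely what ${\rm Tr}\,R(\lambda)^2=2\lambda$ imposes on $c\big|_0$ once $a\big|_0$ and $b\big|_0$ are in hand, which supplies both a means of derivation and an independent consistency check.
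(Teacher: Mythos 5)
Your proposal is correct, and its skeleton is exactly the paper's: the proof given in Section~\ref{section4} is literally ``Following from Norbury's Theorem, Corollary~\ref{cor2}, Theorem~\ref{RCC} with $C=1/8$,'' which is your first paragraph. The only genuine divergence is in how you produce the resolvent entries at $x=0$. The paper routes this through Theorem~\ref{RCC}, whose proof rests on Proposition~\ref{bcc}: the scalar equation \eqref{beq} for $b$ collapses, by the scaling invariance $\zeta=\lambda(x-1)^2$, to the single ODE \eqref{diffrho}, and the resulting recursion \eqref{rrec} telescopes at $C=1/8$ (where $C+\tfrac{i(i+1)}2=\tfrac{(2i+1)^2}8$) to give $b(\lambda,0)=\sum_k\bigl[\tfrac{(2k-1)!!}{2^k}\bigr]^3\tfrac{\lambda^{-k}}{k!}$ in two lines. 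You instead build $b=\psi\psi^*$ from the index-$0$ modified Bessel solutions of $L\psi=z^2\psi$; this is a valid alternative, and indeed it is precisely the mechanism the paper itself uses in the proof of Theorem~\ref{dbessel} (your $\alpha=\sqrt{\tfrac14-2C}=0$ case), where $\psi\psi^*$ is identified with the ${}_3F_0$ series for $b$. What the paper's route buys is that the coefficient identities you flag as ``the main obstacle'' become mechanical: the recursion \eqref{rrec} gives $r_{k+1}=\tfrac{(2k+1)^3}{8(k+1)}r_k$ directly, and then \eqref{aeq} and \eqref{ceq} (together with $2u_0|_{x=0}=\tfrac14$, as you note) yield the diagonal entry $k\,r_k$ and the $(2,1)$-coefficient $-\tfrac{8k^3+12k^2+4k+1}{8(k+1)}r_k$ by a one-line polynomial simplification, with no need to match Bessel asymptotics; your proposed consistency check via ${\rm Tr}\,R^2=2\lambda$ is sound but redundant once \eqref{ceq} is used. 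So: same reduction, an equivalent but slightly more laborious computation of the resolvent, and no gap.
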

The string type equation (see in Section~\ref{section4}) relates the 2-point $\Theta$-class intersection numbers to 1-point intersection numbers, from which we deduce:
\begin{cor}\label{1pointbgw}
The 1-point $\Theta$-class intersection numbers have the expression
\beq\label{state1}
\int_{\overline{\mathcal{M}}_{g,1}} \Theta_{g,1} \, \psi_1^{g-1}  \= \frac{(2g-1)!!^2}{8^g \, g! \, (2g-1)}  \,, \quad \forall\,g\geq 1\,.
\eeq
\end{cor}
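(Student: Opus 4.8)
The plan is to combine the $n=2$ case of Theorem~\ref{BGWnpoint} with a string-type equation that converts the two-point $\Theta$-class numbers carrying a $\psi^0$-insertion into one-point numbers. Abbreviate $a_g:=\int_{\overline{\mathcal M}_{g,1}}\Theta_{g,1}\,\psi_1^{g-1}$, the quantity to be computed, and recall from Example~\ref{example2} and~\eqref{Norburyci} that $\oteight_{p_1,p_2}(\bdzero)=\int_{\overline{\mathcal M}_{g,2}}\Theta_{g,2}\,\psi_1^{p_1}\psi_2^{p_2}$ with $g=p_1+p_2+1$. The string-type equation I will use is
\[
\oteight_{0,g-1}(\bdzero)\=(2g-1)\,a_g\qquad(g\geq1)\,.
\]
To derive it I start from Norbury's relation $\Theta_{g,n+1}=\psi_{n+1}\,\pi^*\Theta_{g,n}$, with $\pi$ the map forgetting the last marked point; taking $n=1$ and forgetting the point carrying $\psi^0$, the comparison $\psi_1=\pi^*\psi_1+D$ together with $\psi_2\,D=0$ removes all boundary corrections, so by the projection formula the integral collapses to $a_g\cdot\pi_*(\psi_2)=a_g\cdot\kappa_0=(2g-1)\,a_g$.

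On the analytic side I read off $\oteight_{0,g-1}(\bdzero)$ from the $n=2$ instance of~\eqref{state2}. Since $S_2/C_2$ is trivial, that identity reads $F(\lambda_1,\lambda_2)=\bigl[\tr\bigl(M(\lambda_1)M(\lambda_2)\bigr)-(\lambda_1+\lambda_2)\bigr]/(\lambda_1-\lambda_2)^2$, while by its definition $F$ is the power series $\sum_{g\geq1}\sum_{p_1+p_2=g-1}\frac{(2p_1+1)!!\,(2p_2+1)!!}{\lambda_1^{p_1+1}\lambda_2^{p_2+1}}\,\oteight_{p_1,p_2}(\bdzero)$ in $\lambda_1^{-1},\lambda_2^{-1}$; hence its coefficient of $\lambda_1^{-1}$ equals $\sum_{g\geq1}(2g-1)!!\,\oteight_{0,g-1}(\bdzero)\,\lambda_2^{-g}$. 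Now write $M(\lambda)=\lambda\,\sm0010+P(\lambda)$, where $P(\lambda)$ is a matrix power series in $\lambda^{-1}$ whose $(1,2)$-entry is $M_{12}(\lambda)=\sum_{k\geq0}c_k\,\lambda^{-k}$ with $c_k=\bigl[(2k-1)!!/2^k\bigr]^3/k!=(2k-1)!!^3/(8^k k!)$. Using $\sm0010^2=0$ and $\tr\bigl(\sm0010\,P\bigr)=M_{12}$, one gets $\tr\bigl(M(\lambda_1)M(\lambda_2)\bigr)=\lambda_1 M_{12}(\lambda_2)+\lambda_2 M_{12}(\lambda_1)+\tr\bigl(P(\lambda_1)P(\lambda_2)\bigr)$, the last term carrying only nonpositive powers of each $\lambda_i$. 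Hence the numerator of $F$ equals $\lambda_1\bigl(M_{12}(\lambda_2)-1\bigr)+\lambda_2\bigl(M_{12}(\lambda_1)-1\bigr)+\tr\bigl(P(\lambda_1)P(\lambda_2)\bigr)$.

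The decisive observation is that the $\lambda_1^{-1}$-coefficient of $F$ is produced entirely by the single term $\lambda_1\bigl(M_{12}(\lambda_2)-1\bigr)$: after expanding $(\lambda_1-\lambda_2)^{-2}$ and tracking powers of $\lambda_1$, every remaining term contributes only to $\lambda_1^{-2}$ and lower, so that $[\lambda_1^{-1}]F=M_{12}(\lambda_2)-1=\sum_{g\geq1}c_g\,\lambda_2^{-g}$ (recall $c_0=1$). Comparing the two expressions for the $\lambda_1^{-1}$-coefficient gives $(2g-1)!!\,\oteight_{0,g-1}(\bdzero)=c_g$, and then the string-type equation yields
\[
a_g\=\frac{\oteight_{0,g-1}(\bdzero)}{2g-1}\=\frac{c_g}{(2g-1)!!\,(2g-1)}\=\frac{(2g-1)!!^2}{8^g\,g!\,(2g-1)}\,,
\]
which is the claimed identity. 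The step needing the most care is the coefficient extraction against $(\lambda_1-\lambda_2)^{-2}$, which must be performed consistently with $F$ being a genuine power series in $\lambda_1^{-1},\lambda_2^{-1}$ (equivalently, with the numerator vanishing to second order at $\lambda_1=\lambda_2$, as ensured by $\tr\,M(\lambda)^2=2\lambda$); fortunately the $\lambda_1^{-1}$-coefficient is insulated from this cancellation, so the one genuinely computational point is verifying the displayed structure of $\tr\bigl(M(\lambda_1)M(\lambda_2)\bigr)$.
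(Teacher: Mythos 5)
Your proof is correct and follows essentially the same route as the paper: the string-type equation reduces the one-point numbers to the two-point numbers $\oteight_{0,g-1}(\bdzero)$, which are then read off from the explicit series $M_{12}(\lambda)=b(\lambda,0)$; your coefficient extraction from the $n=2$ case of Theorem~\ref{BGWnpoint} is just a re-derivation of the identity $\Omega_{0,k}=b_k/(2k+1)!!$ that the paper quotes directly in proving Corollary~\ref{1pointgbgw}, of which~\eqref{state1} is the case $C=1/8$. The only genuine (and harmless) variation is that you justify $\oteight_{0,g-1}(\bdzero)=(2g-1)\,a_g$ geometrically from Norbury's axiom $\Theta_{g,2}=\psi_2\,\pi^*\Theta_{g,1}$ and $\pi_*\psi_2=\kappa_0=2g-1$, whereas the paper obtains it by differentiating the analytic constraint~\eqref{scaling}.
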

For a general~$C$ in Example~\ref{example2}, let~$\tauC(\bt)$ be the tau-function of~$\utc$, 
$R(\lambda,\bt)$ the basic matrix resolvent of~$\utc$, and $\otc_{p_1,\dots,p_n}(\bt)$ the $n$-point correlation functions of~$\utc$.
We will call $\otc_{ p_1, \dots, p_n} (\bdzero)$ 
 the $n$-point generalized BGW correlators. 
For $n\geq2$, $\otc_{ p_1, \dots, p_n} (\bdzero)$ are 
 uniquely determined by~$\utc$ and so by~$f$, where $f=\frac{C}{ (x-1)^2}$. To determine
$\otc_{ p} (\bdzero)$, we impose a string-type equation for~$\tauC$
\beq\label{scaling}
\sum_{i\geq 0} (1+2i) \, t_i \, \frac{\p \log \tauC}{\p t_i} \+ C \= \frac{\p \log \tauC}{\p t_0} \,. 
\eeq
Since $\sum_{i\geq 0} (1+2i) \, t_i \, \frac{\p }{\p t_i}$ gives the infinitesimal scaling symmetry of the KdV hierarchy 
and since eq.~\eqref{scaling} implies that $\utc(x,\bdzero)=C/(x-1)^2$, 
the tau-function~$\tauC$ of~$\utc$ satisfying the additional constraint~\eqref{scaling} exists. 
(From this way, $\tauC$ is determined up to a multiplicative constant.)  
A matrix model for~$\tauC$ was obtained by A.~Alexandrov~\cite{A}. 
The function $\tauC$ is called the {\it generalized BGW tau-function}. 
As before denote $\R(\lambda, x)=R(\lambda,x,\bdzero)$, $b(\lambda, x)= \R(\lambda, x)_{12}$. 

\begin{theorem}  \label{RCC} For any fixed $C\in\CC$, 
the formal power series $b(\lambda, x)$ of~$\lambda^{-1}$ only depends on the variable $\zeta:=\lambda\, (x-1)^2$. 
Explicitly,  
the matrix-valued formal series $\R=\R(\lambda,x)$ has the explicit expression
\begin{align}
\R \= \begin{pmatrix}
-\frac {C\, \lambda^{\frac12}} {\z^{\frac32}}  G_{\frac32}(\z) &  G_{\frac12}(\z) \\
 \frac{\lambda}\z \, \Bigl((\z-2C) \, G_{\frac12}(\z)
- \frac{3C}\z \, G_{\frac32}(\z) - \frac{6C(C+1)}{\z^2} G_{\frac52}(\z)  \Bigr)  & \frac {C\, \lambda^{\frac12}} {\z^{\frac32}}  G_{\frac32}(\z) \\
\end{pmatrix} \,, \label{RTheta}
\end{align}
where $\z^{3/2}:=\lambda^{3/2}\, (x-1)^3$ and 
we have set
\beq\label{defG}
\Delta \:= 1- 8 \, C \,, \quad G_\alpha(\z) \:= {}_{3} F_0 \biggl(\alpha, \, \alpha+\frac{\sqrt{\Delta}}2, \, \alpha-\frac{\sqrt{\Delta}}2; \, ; \frac1\z \biggr)\,.
\eeq
\end{theorem}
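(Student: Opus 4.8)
The plan is to reduce the whole statement to a single scalar third-order ODE for the entry $b(\lambda,x)=\R(\lambda,x)_{12}$ and then integrate it explicitly. Writing $\R=\sm{a}{b}{c}{-a}$ (traceless since $\R\in\cS$) and spelling out $[\cL,\R]=0$ as $\p_x\R=[\R,\Lambda+q]$, with $\Lambda+q=\sm{0}{1}{\lambda-2u_0}{0}$, gives the three scalar relations $a=\tfrac12 b_x$, $c=(\lambda-2u_0)b-a_x$, and, after eliminating $a$ and $c$, the equation
\[
b_{xxx}-4(\lambda-2u_0)\,b_x+4u_{0,x}\,b=0,
\]
while the normalization $\R=\Lambda+\text{lower-order terms}$ together with $\tr\R^2=2\lambda$ forces $b=1+O(\lambda^{-1})$. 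Thus once $b$ is found, $a$ and $c$ (hence the whole of $\R$, with $\R_{22}=-a$ automatic) follow by differentiation, and the task is to solve this ODE for the initial datum $u_0=C/(x-1)^2$.

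First I would prove the self-similarity assertion. Expanding $b=\sum_{m\ge0}\beta_m\lambda^{-m}$ with $\beta_m\in\A$, $\beta_0=1$, the ODE gives the recursion $\beta_{m+1,x}=\tfrac14\beta_{m,xxx}+2u_0\beta_{m,x}+u_{0,x}\beta_m$. Assigning weights $w(u_i)=-2-i$ and $w(\p_x)=-1$, an easy induction shows each $\beta_m$ is a homogeneous differential polynomial of weight $-2m$ (the integration constant is killed because $-2(m+1)\ne0$). For $f=C/(x-1)^2$ every $u_i=\p_x^i f$ is a constant times $(x-1)^{-(i+2)}$, so each weight-$(-2m)$ monomial of $\beta_m$ evaluates to a constant times $(x-1)^{-2m}$; hence $\beta_m(x)=P_m(C)\,(x-1)^{-2m}$ and $b=\sum_m P_m(C)\,\zeta^{-m}$ depends only on $\zeta=\lambda(x-1)^2$, which is the first claim. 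Substituting $b=b(\zeta)$ into the ODE (using $\zeta_x=2\lambda(x-1)$) and clearing the common factor collapses it to the $\lambda$-free equation
\[
\zeta^3 b''' +\tfrac32\zeta^2 b'' +(2C\zeta-\zeta^2)\,b' -C\,b=0,\qquad {}'=\tfrac{d}{d\zeta}.
\]

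Next I would solve this reduced equation as a formal series $b=\sum_{n\ge0}\beta_n\zeta^{-n}$, $\beta_0=1$. Matching coefficients yields the one-step recursion $(n+1)\beta_{n+1}=(2n+1)\bigl(\tfrac12 n(n+1)+C\bigr)\beta_n$, i.e. $\beta_{n+1}/\beta_n=\tfrac{(2n+1)(n(n+1)+2C)}{2(n+1)}$. Setting $\delta=\sqrt{\Delta}/2$ with $\Delta=1-8C$, one checks $(\tfrac12+n)^2-\delta^2=n(n+1)+2C$, so this ratio equals $\tfrac{(\tfrac12+n)(\tfrac12+\delta+n)(\tfrac12-\delta+n)}{n+1}$, which is exactly the coefficient ratio of ${}_3F_0(\tfrac12,\tfrac12+\delta,\tfrac12-\delta;;\zeta^{-1})$. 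Hence $b=G_{1/2}(\zeta)$, giving the $(1,2)$-entry of $\R$.

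Finally I would recover $a=\R_{11}$ and $c=\R_{21}$ from $a=\tfrac12 b_x$ and $c=(\lambda-2u_0)b-a_x$ using the termwise derivative relation $G_\alpha'(\zeta)=-\tfrac{\alpha(\alpha^2-\delta^2)}{\zeta^2}G_{\alpha+1}(\zeta)$; for $\alpha=\tfrac12,\tfrac32$ this specializes to $G_{1/2}'=-\tfrac{C}{\zeta^2}G_{3/2}$ and $G_{3/2}'=-\tfrac{3(1+C)}{\zeta^2}G_{5/2}$. Substituting these and converting $\lambda$ and $x-1$ back in terms of $\zeta$ reproduces the stated matrix entries, the $G_{5/2}$ term in $\R_{21}$ arising from the second application of the derivative relation; one may additionally verify $a^2+bc=\lambda$ as a consistency check, where it becomes a bilinear identity among $G_{1/2},G_{3/2},G_{5/2}$. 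The main obstacle is the self-similarity step: the homogeneity bookkeeping for the resolvent coefficients $\beta_m$ must be arranged carefully so that the passage from the $\lambda$-dependent ODE to the single variable $\zeta$ is rigorous. Once that is in place, the hypergeometric identification and the contiguity relations are routine, the only delicate point being to match the ${}_3F_0$ conventions so that $\Delta=1-8C$ produces $\tfrac14-\delta^2=2C$ and $\tfrac94-\delta^2=2(1+C)$.
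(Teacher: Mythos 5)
Your argument is correct, and it reaches the same reduced objects as the paper (the third-order linear ODE in $\zeta$, the one-step recursion, the ${}_3F_0$ identification, and the recovery of $a$ and $c$ from \eqref{aeq}--\eqref{ceq}), but the central step is organized in the opposite logical direction. The paper does not prove self-similarity of $b$ a priori: it first introduces the reduced equation \eqref{diffrho} in the single variable $\zeta$, proves in Proposition~\ref{bcc} that it has a unique formal solution $\rho\in\CC\bllb\zeta^{-1}\brrb$ with $\rho(\infty)=1$ (obtaining the recursion \eqref{rrec}) together with the nonlinear first integral \eqref{nonlinearODE}, and then \emph{verifies} that $\tilde b(\lambda,x):=\rho(\lambda(x-1)^2)$ satisfies both \eqref{beq} and \eqref{essential}, so that the uniqueness clause in the definition of the matrix resolvent forces $b=\tilde b$; the dependence on $\zeta$ alone is thus a by-product of the guess-and-verify. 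You instead prove directly that every resolvent coefficient is homogeneous and hence proportional to $(x-1)^{-2m}$, and only then derive the reduced ODE by substitution. Your route is more self-contained (it explains \emph{why} the reduction to one variable must happen) at the cost of one point that needs tightening: the linear recursion $\beta_{m+1,x}=\tfrac14\beta_{m,xxx}+2u_0\beta_{m,x}+u_{0,x}\beta_m$ determines $\beta_{m+1}$ only up to an additive constant, and that constant is fixed not by homogeneity but by the trace normalization ${\rm Tr}\,\R^2=2\lambda$; saying "the integration constant is killed because $-2(m+1)\neq0$" presupposes that the normalized coefficient is the homogeneous antiderivative. The clean fix is to run the induction on the purely algebraic recursion \eqref{mrr2}, which involves no integration and immediately gives that $b_k\in\A$ is homogeneous of degree $2(k+1)$ for the grading $\deg u_i=i+2$; alternatively, note as the paper implicitly does that once self-similarity is granted, the $-\zeta^2 b'$ term in the reduced equation turns the recursion into a genuine one-step recursion with no free constants. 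With that repair your proof is complete, and all the computational checks (the recursion ratio, the contiguity relation $G_\alpha'=-\alpha(\alpha^2-\Delta/4)\zeta^{-2}G_{\alpha+1}$, and the resulting entries of \eqref{RTheta}) are accurate.
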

\noindent The proof is in Section~\ref{section4}. It is interesting to mention that if $-C$ is a 
triangular number, i.e. $C=-\frac12 p(p+1)$~$(p\geq 0)$ then $b(\lambda,x)$ truncates to a polynomial of $\lambda^{-1}$.

\begin{cor}\label{1pointgbgw}
The 1-point generalized BGW correlators have the explicit expression
\beq\label{1ptC}
\otc_{p} (\bdzero) \= \frac{1}{(p+1)! \, (1+2p)} \, \prod_{i=0}^{p} \biggl(C+\frac{i(i+1)}{2}\biggr) \,, \quad j\geq 0\,.
\eeq
\end{cor}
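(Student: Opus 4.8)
The plan is to reduce the one-point correlators $\otc_p(\bdzero)$ to the coefficients of the explicitly known series $b(\lambda,x)$ furnished by Theorem~\ref{RCC}, and then to read those coefficients off from the hypergeometric expansion of $G_{\frac12}$. The first step exploits the scaling equation~\eqref{scaling}. Writing $\otc_i=\p_{t_i}\log\tauC$ and differentiating~\eqref{scaling} with respect to $t_p$ gives
\[
(1+2p)\,\otc_p(\bt)\+\sum_{i\geq 0}(1+2i)\,t_i\,\otc_{i,p}(\bt)\=\otc_{0,p}(\bt)\,.
\]
Evaluating at $\bt=\bdzero$ the sum drops out (every $t_i=0$) and we obtain the clean identity $\otc_p(\bdzero)=\frac{1}{1+2p}\,\otc_{0,p}(\bdzero)$. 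This converts the one-point problem—where $\tauC$ is determined only up to a linear term, so the resolvent by itself is insufficient—into a two-point problem, to which the matrix-resolvent formula applies.

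Second, I would express the two-point partial correlators $\otc_{0,q}(x)$ through $b(\lambda,x)=\R(\lambda,x)_{12}$. Starting from the $n=2$ case of~\eqref{npointtaux}, I set $\lambda_1=\lambda$, $\lambda_2=\mu$, multiply by $\lambda$, and let $\lambda\to\infty$. Using the elementary consequences of $[\cL,R]=0$ together with $\mathrm{Tr}\,R(\lambda)^2=2\lambda$—namely $a=\tfrac12 b_x$, $c=(\lambda-2u_0)b-\tfrac12 b_{xx}$, and the leading asymptotics $b=1+O(\lambda^{-1})$, $c=\lambda+O(1)$—the limit collapses to
\[
\sum_{q\geq0}\otc_{0,q}(x)\,\frac{(2q+1)!!}{\mu^{q+1}} \= b(\mu,x)-1\,.
\]
Hence, writing $b(\lambda,x)=\sum_{k\geq0}b_k(x)\lambda^{-k}$ with $b_0=1$, we get $\otc_{0,q}(x)=b_{q+1}(x)/(2q+1)!!$, and combining with the scaling reduction, $\otc_p(\bdzero)=\dfrac{b_{p+1}(0)}{(1+2p)\,(2p+1)!!}$.

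Third, I would feed in Theorem~\ref{RCC}: since $\z=\lambda(x-1)^2$ equals $\lambda$ at $x=0$, we have $b(\lambda,0)=G_{\frac12}(\lambda)={}_3F_0\!\big(\tfrac12,\,\tfrac12+\tfrac{\sqrt\Delta}2,\,\tfrac12-\tfrac{\sqrt\Delta}2;\,;\tfrac1\lambda\big)$ with $\Delta=1-8C$ as in~\eqref{defG}. Reading off the coefficient of $\lambda^{-(p+1)}$ gives
\[
b_{p+1}(0)\=\frac{1}{(p+1)!}\;\Bigl(\tfrac12\Bigr)_{p+1}\,\Bigl(\tfrac12+\tfrac{\sqrt\Delta}2\Bigr)_{p+1}\,\Bigl(\tfrac12-\tfrac{\sqrt\Delta}2\Bigr)_{p+1}\,.
\]
The closing step is the Pochhammer simplification: pairing the last two factors yields
$\prod_{i=0}^{p}\big((i+\tfrac12)^2-\tfrac\Delta4\big)=\prod_{i=0}^{p}\big(i(i+1)+2C\big)=2^{p+1}\prod_{i=0}^p\big(C+\tfrac{i(i+1)}2\big)$, while $\big(\tfrac12\big)_{p+1}=(2p+1)!!/2^{p+1}$; the powers of $2$ cancel, the factor $(2p+1)!!$ cancels against the prefactor, and after dividing by $(1+2p)$ exactly~\eqref{1ptC} emerges. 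I expect the main obstacle to be the second step—carrying out the $\lambda\to\infty$ limit carefully and confirming that only the leading contribution $b(\mu,x)-1$ survives; the scaling reduction of the first step and the hypergeometric bookkeeping of the third are then routine.
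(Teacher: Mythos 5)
Your proposal is correct and follows essentially the same route as the paper: reduce to the two‑point function via the scaling equation \eqref{scaling} (giving $(1+2p)\,\otc_p(\bdzero)=\otc_{0,p}(\bdzero)$), identify $\otc_{0,p}$ with the coefficients of $b(\lambda,x)/(2p+1)!!$, and read these off at $x=0$ from Theorem~\ref{RCC}. The only difference is cosmetic: you spell out the ``easy exercise'' $\Omega_{0,k}=b_k/(2k+1)!!$ via the $\lambda\to\infty$ limit of the two‑point formula and redo the Pochhammer bookkeeping for the ${}_3F_0$, whereas the paper quotes the equivalent product formula~\eqref{expressionrho} directly.
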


We proceed to applying Theorem~\ref{wavethm} to the computation of generalized BGW correlators. 
\begin{theorem}\label{dbessel} Let $\alpha= \sqrt{\frac14 - 2C}$, and for $k\ge0$ define 
$a_k(\alpha)\in\mathbb Q[\alpha^2]=\mathbb Q[C]$ by
\[ 
a_k(\alpha) \= \frac{\bigl(\alpha-k+\frac12\bigr)_{2k}}{2^k\, k!} 
 \= \frac{(-1)^k}{k!}\,\prod_{j=1}^k\biggl(C+\binom j2\biggr) \= (-1)^k(2k-1)\,\otc_{k-1} (\bdzero) \,.  \]
Then the two functions
\beq
\psi(z,x) \=  \sum_{k\geq 0}  \frac{a_k(\alpha)}{z^k(1-x)^k}  \, e^{zx}  \,, 
\qquad  \psi^*(z,x) \= \psi(-z,x)
\label{gbgwbessel}
\eeq
form a pair of wave functions of $f=C/(x-1)^2$.  Moreover, if $\DzwC=D(z,w)$ is the corresponding 
function as defined in~\eqref{DandK}, then for any $n\geq 2$ we have the generating function
\begin{align}
& \sum_{p_1,\dots,p_n} \otc_{p_1, \dots, p_n} (\bdzero) \, \prod_{j=1}^n \frac{(2p_j+1)!!}{z_j^{2p_j+2}}  \= 
- \sum_{\sigma\in S_n/C_n} 
\prod_{i=1}^n \DC\bigl(z_{\sigma(i)}, z_{\sigma(i+1)} \bigr)  
\,- \, \frac{\delta_{n2}}{(z_1-z_2)^2} \,. \label{newformulaforgbgw}
\end{align}
\end{theorem}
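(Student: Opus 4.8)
The plan is to prove the statement in two parts: first, that the pair $(\psi,\psi^*)$ in~\eqref{gbgwbessel} is a genuine pair of wave functions of $f=C/(x-1)^2$; and second, that~\eqref{newformulaforgbgw} then follows by specializing Theorem~\ref{wavethm} to $\bt=\bdzero$. The second part is essentially immediate once the first is in hand, so the real work is the first part, and within it the delicate point is showing that $\psi^*(z,x)=\psi(-z,x)$ is precisely the \emph{dual} wave function associated with~$\psi$, not merely some solution of $L\psi^*=z^2\psi^*$ of the right exponential type. I would begin by recording the elementary facts about $a_k(\alpha)$: writing $(\alpha-k+\tfrac12)_{2k}=\prod_{i=1}^k(\alpha^2-\tfrac{(2i-1)^2}{4})$ and using $4\alpha^2-1=-8C$ gives the three displayed expressions, in particular $a_0=1$, the comparison with Corollary~\ref{1pointgbgw}, and the recursion $a_{k+1}=-\tfrac{C+\binom{k+1}{2}}{k+1}\,a_k$.

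Next I would verify $L(\psi)=z^2\psi$ for $L=\p_x^2+2C(1-x)^{-2}$. Substituting $\psi=g\,e^{xz}$ with $g=\sum_k a_k z^{-k}(1-x)^{-k}$ reduces the equation to $g_{xx}+2z\,g_x+2C(1-x)^{-2}g=0$, and collecting the coefficient of $z^{-m}(1-x)^{-m-2}$ turns this into exactly the recursion just found (after $t=1-x$ this is the modified Bessel equation with index $\alpha$). Thus $\psi$ is a wave function of~$f$; moreover $\psi^*=\psi(-z,x)$ automatically solves $L(\psi^*)=z^2\psi^*$ since $L$ depends on $z$ only through $z^2$, and it has the required form $(1+O(1/z))e^{-xz}$.

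The heart of the matter is that $\psi^*=\psi(-z,x)$ is the dual associated with~$\psi$ (its defining residue conditions we will bypass). The key structural observation is that $g(z,x)$ depends on $(z,x)$ only through $\xi:=1/(z(1-x))$, namely $g=G(\xi)$ with $G(\xi)=\sum_k a_k\xi^k$. Using $\p_x\xi=z\xi^2$ a short computation gives for the Wronskian
\[
W \,:=\, \psi_x\,\psi^* - \psi\,\psi^*_x \,=\, z\Bigl(\xi^2\bigl[G'(\xi)G(-\xi)+G(\xi)G'(-\xi)\bigr]+2\,G(\xi)G(-\xi)\Bigr) \,=\, z\,\Phi(\xi)\,.
\]
Since $\psi,\psi^*$ solve the same equation $L\phi=z^2\phi$ and $L$ has no first-order term, $W$ is independent of~$x$; but $\xi$ genuinely depends on~$x$, so $z\,\Phi(1/(z(1-x)))$ can be $x$-independent only if $\Phi$ is constant, and evaluating at $\xi=0$ gives $\Phi\equiv 2G(0)^2=2$. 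Hence $W=2z$ \emph{exactly}, with no lower-order corrections in~$z$ (note this fails under the gauge $\psi\mapsto g_0(z)\psi$, precisely because then $g$ no longer depends on $(z,x)$ only through~$\xi$). From here the product $p:=\psi\,\psi^*$ satisfies the third-order ``symmetric square'' equation $p'''-4(z^2-2f)p'+4f'p=0$, the same ODE that $b(z^2,x)=\R(z^2,x)_{12}$ obeys by virtue of $[\cL,R]=0$; and the identity $\tfrac14 W^2=\tfrac14(p')^2+(z^2-2f)p^2-\tfrac12 p\,p''$ shows, because $W=2z$, that $p$ also satisfies the quadratic first integral $a^2+bc=\lambda$ coming from $\mathrm{Tr}\,R^2=2\lambda$. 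Among the pure power-series solutions of the third-order ODE (which form $\{B(z)\,g\,g(-z)\}$, the other two basis solutions carrying $e^{\pm2xz}$), this first integral together with leading term~$1$ singles out a unique one, forcing $p=b$, i.e. $\psi(z,x)\psi(-z,x)=b(z^2,x)$. Since the genuine dual $\psi^*_0$ of~$\psi$ is characterized by $\psi\,\psi^*_0=b(z^2,x)$ (Lemma~\ref{factorize1} and the discussion in Section~\ref{Introwave}), formal division gives $\psi^*_0=b/\psi=\psi(-z,x)=\psi^*$. This identification of the dual is the step I expect to be the main obstacle; as an alternative one may use the explicit $b(z^2,x)=G_{1/2}(z^2(x-1)^2)$ from Theorem~\ref{RCC} and match it directly against $\psi(z,x)\psi(-z,x)=\sum_m\bigl(\sum_{k+l=2m}(-1)^l a_k a_l\bigr)z^{-2m}(1-x)^{-2m}$, a classical ${}_{3}F_0$ Bessel-product identity.

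With Part~1 established, $(\psi(z,x),\psi^*(z,x))$ is the restriction to $\bt=\bdzero$ of the (unique, by the same $\psi\psi^*=b$ characterization) pair of wave functions of the solution $\utc(\bt)$, so the function $D(z,w,\bdzero)$ of Theorem~\ref{wavethm} coincides with $\DC(z,w)=D(z,w)$ from~\eqref{DandK}. Specializing~\eqref{newformula} to $\bt=\bdzero$ for $u=\utc$, whose $n$-point correlation functions at $\bt=\bdzero$ are the $\otc_{p_1,\dots,p_n}(\bdzero)$, then yields~\eqref{newformulaforgbgw} verbatim.
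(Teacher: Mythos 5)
Your proposal is correct, and for the crux of the argument it takes a genuinely different route from the paper. The paper handles the two defining conditions as follows: it checks $L\psi=z^2\psi$ by identifying $\psi,\psi^*$ with asymptotic expansions of $\sqrt{2\xi/\pi}\,e^zK_\alpha(\xi)$ and $\sqrt{2\pi\xi}\,e^{-z}I_\alpha(\xi)$, and then disposes of the residue conditions by \emph{directly multiplying} the two series and recognizing the Cauchy product $\sum_{k,\ell}(-1)^\ell a_ka_\ell z^{-k-\ell}(1-x)^{-k-\ell}$ as the ${}_3F_0$ series $G_{1/2}\bigl(z^2(x-1)^2\bigr)=b(z^2,x)$ of Theorem~\ref{RCC} — i.e.\ it invokes (without detailed proof) the classical Bessel-product identity, and then uses the criterion ``$\psi\psi^*=b$ implies a pair of wave functions'' noted after Lemma~\ref{twoid}, exactly the criterion you also rely on. You instead derive $\psi\psi^*=b$ structurally: the observation that the prefactor depends only on $\xi=1/(z(1-x))$ forces the Wronskian to be exactly $2z$ with no lower-order terms, whence the product $p=\psi\psi^*$ satisfies the first integral~\eqref{essential} on the nose and must equal $b$ by the uniqueness coming from the recursion~\eqref{mrr2}. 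This buys you two things the paper's proof does not make explicit: an actual proof of the ${}_3F_0$ product identity as a corollary rather than an input, and a clear explanation (via the gauge remark) of why the specific normalization $a_0=1$ is what makes $\psi(-z,x)$ the genuine dual. What it costs is length; the paper's version is three lines once the product identity is granted. Your final reduction to Theorem~\ref{wavethm} at $\bt=\bdzero$, including the remark that the time-independent pair is the restriction of a time-dependent pair, matches the paper's (equally brief) treatment of that step.
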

According to equation~\eqref{eq24} we know that $D(z,w)$ has an expansion of the form
\beq  \DzwC \= \frac1{z-w} \+ \sum_{m,\,n\geq 0} \frac{A_{mn}(\alpha)}{z^m\,(-w)^n}  \eeq
with coefficients $A_{mn}(\alpha)\in\Bbb Q[\alpha]$.
These coefficients are given explicitly by the following proposition.
\begin{prop}\label{identities}  
The coefficients $A_{mn}$ can be given either by the recursion and boundary conditions
\beq \label{amnrec} A_{m,n+1}(\alpha) \;-\;A_{m+1,n}(\alpha) \= \frac{m-n}{m+n}\,a_m(\alpha)\, a_n(\alpha)\,, \qquad
   A_{m,0}(\alpha)=A_{0,n}(\alpha)=0 \eeq
or else explicitly by any of the three formulas
\begin{align} \label{amnsimple} 
A_{mn}(\alpha) &\= \sum_{r\ge m,\;s\ge0\atop r+s=m+n-1}\frac{r-s}{r+s}\,a_r(\alpha)\,a_s(\alpha)  
\= \sum_{r\ge n,\;s\ge 0\atop r+s=m+n-1}\frac{r-s}{r+s}\,a_r(\alpha)\,a_s(\alpha)  \\ 
\label{amnclosed} &\= 2\,\frac{m!\,n!\,a_m(\alpha) \, a_n(\alpha)}{(m+n-1) \, (m+n-1)!}  \,
  \sum_{k\in \ZZ} \,(-1)^k \,\frac{ \binom{m+n-1}{m+k} \binom{m+n-1}{n+k} }{\alpha-k-\frac12} \,. 
 \end{align}
\end{prop}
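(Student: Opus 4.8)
The plan is to read off the coefficients $A_{mn}(\alpha)$ from the explicit wave functions of Theorem~\ref{dbessel}, organize the computation so that the recursion \eqref{amnrec} appears naturally, and treat the closed form \eqref{amnclosed} last. First I would evaluate \eqref{gbgwbessel} at $\bt=\bdzero$: one gets $\psi(z,\bdzero)=P(z):=\sum_{k\ge0}a_k(\alpha)z^{-k}$, $\psi^*(w,\bdzero)=P(-w)$, and (differentiating in $x$ before setting $x=0$) $\psi_x(z,\bdzero)=R(z):=zP(z)-zP'(z)$, $\psi_x^*(w,\bdzero)=R(-w)$. Writing $v=-w$ and substituting into \eqref{DandK} gives the bilinear form
\[ \DzwC=\frac{P(v)R(z)-P(z)R(v)}{z^2-v^2}, \]
which is symmetric under $z\leftrightarrow v$, whence $A_{mn}=A_{nm}$; letting $w\to\infty$ one finds $\tilde D:=\DzwC-\tfrac1{z-w}\to0$, which gives the boundary conditions $A_{m,0}=A_{0,n}=0$ and $\tilde D=\sum_{m,n\ge1}A_{mn}z^{-m}v^{-n}$. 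Clearing the denominator and using $R=zP-zP'$, a one-line manipulation yields, with $Q(z):=\sum_k k\,a_kz^{-k}$ and $M:=\frac{Q(z)P(v)-P(z)Q(v)}{z-v}$,
\[ (z+v)\tilde D=(P(z)P(v)-1)+M, \qquad (z-v)M=Q(z)P(v)-P(z)Q(v); \]
reading off coefficients of $z^{-m}v^{-n}$ gives the two elementary relations
\[ A_{m+1,n}+A_{m,n+1}=a_ma_n+M_{mn}, \qquad M_{m+1,n}-M_{m,n+1}=(m-n)\,a_ma_n, \]
together with $M_{m,0}=M_{0,n}=0$.

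The decisive step exploits the scale covariance of $f=C/(x-1)^2$. From \eqref{gbgwbessel} the wave function is self-similar, $\psi(z,x)=e^{zx}P(z(1-x))$, and a short computation gives
\[ D(z,w,x)=e^{(z-w)x}(1-x)\,D\big(z(1-x),\,w(1-x),\,\bdzero\big). \]
Differentiating in $x$ and using the Wronskian identity $\partial_x D=\psi\psi^*$ (immediate from $L\psi=z^2\psi$, $L\psi^*=w^2\psi^*$), then setting $x=0$, produces the functional equation
\[ (z-w)\DzwC-\DzwC-(z\partial_z+w\partial_w)\DzwC=P(z)P(-w). \]
Extracting the coefficient of $z^{-m}v^{-n}$ (again $v=-w$) gives the clean relation $(m+n-1)A_{mn}+A_{m+1,n}+A_{m,n+1}=a_ma_n$. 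Comparing it with the first relation of the previous paragraph identifies $M_{mn}=-(m+n-1)A_{mn}$; substituting this into $M_{m+1,n}-M_{m,n+1}=(m-n)a_ma_n$ collapses to $(m+n)\big(A_{m,n+1}-A_{m+1,n}\big)=(m-n)a_ma_n$, which is exactly \eqref{amnrec}.

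Since \eqref{amnrec} links $A_{m,n+1}$ and $A_{m+1,n}$ within the single antidiagonal $m+n+1=\text{const}$, it determines every $A_{mn}$ by marching outward from the boundary value $A_{0,\cdot}=0$; thus \eqref{amnrec} together with the boundary conditions characterizes the coefficients, and it remains only to check that the asserted explicit formulas satisfy this system. For \eqref{amnsimple} this is immediate: on the antidiagonal $r+s=m+n-1$ the weight $\tfrac{r-s}{r+s}$ has the constant denominator $m+n-1$, so the difference $A_{m,n+1}-A_{m+1,n}$ isolates precisely the boundary term $r=m$, namely $\tfrac{m-n}{m+n}a_ma_n$, while the antisymmetry of the summand under $r\leftrightarrow s$ forces $A_{0,n}=A_{m,0}=0$; the two displayed sums in \eqref{amnsimple} are this same formula written for $A_{mn}$ and for $A_{nm}=A_{mn}$.

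The hard part is the closed form \eqref{amnclosed}. Calling its right-hand side $\tilde A_{mn}(\alpha)$, I would again verify \eqref{amnrec} and the boundary conditions and invoke uniqueness. The boundary cases are trivial, since for $m=0$ (or $n=0$) the two binomial factors $\binom{m+n-1}{m+k}$ and $\binom{m+n-1}{n+k}$ have disjoint supports in $k$, so the sum is empty. For the recursion the essential input is the three-term relation $2(k+1)a_{k+1}=\big(\alpha^2-(k+\tfrac12)^2\big)a_k=-\big(2C+k(k+1)\big)a_k$, which converts the factors $(n+1)!\,a_{n+1}$ and $(m+1)!\,a_{m+1}$ in $\tilde A_{m,n+1}$ and $\tilde A_{m+1,n}$ into $\big(\alpha^2-(n+\tfrac12)^2\big)n!\,a_n$ and $\big(\alpha^2-(m+\tfrac12)^2\big)m!\,a_m$; using the partial fraction $\frac{\alpha^2-(n+\tfrac12)^2}{\alpha-k-\tfrac12}=(\alpha+k+\tfrac12)+\frac{(k-n)(k+n+1)}{\alpha-k-\tfrac12}$ the quadratic factors partly cancel the simple poles, and Pascal's rule on the binomials reduces the whole statement to a finite, $\alpha$-free binomial evaluation of Vandermonde/Saalschütz type. (The same cancellation of poles shows $\tilde A_{mn}\in\QQ[\alpha^2]$, as required.) Managing this hypergeometric simplification cleanly—rather than drowning in index bookkeeping—is where the genuine work lies; an alternative route is to prove \eqref{amnclosed}$=$\eqref{amnsimple} directly by matching the two sides as rational functions of $\alpha$ via their residues at $\alpha=k+\tfrac12$.
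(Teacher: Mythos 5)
Your argument is correct, and for the key step --- the recursion \eqref{amnrec} --- it takes a genuinely different route from the paper. The paper works entirely in the spectral variable: it sets $A(z)=\psi(z,0)$, $B(z)=\psi_x(z,0)$, checks the first-order system \eqref{ABsystem} under $\theta_z=z\,d/dz$, and shows that the two Euler-operator generating functions $(\theta_z+\theta_y)\bigl(\frac{A(z)B(y)-B(z)A(y)}{z+y}\bigr)$ and $(\theta_z-\theta_y)\bigl(A(z)A(y)\bigr)$ both equal $(z-y)A(z)A(y)+A(z)B(y)-B(z)A(y)$, which yields \eqref{amnrec} in one stroke. You instead exploit the self-similarity of the Bessel wave function in $x$, namely $D(z,w,x)=e^{(z-w)x}(1-x)\,D\bigl(z(1-x),w(1-x),\bdzero\bigr)$ together with $\partial_xD=\psi(z,x)\,\psi^*(w,x)$, to get the extra relation $(m+n-1)A_{mn}+A_{m+1,n}+A_{m,n+1}=a_ma_n$, and then eliminate the auxiliary array $M_{mn}$; I verified the scaling law, the derivative identity and the coefficient extraction, and they are all correct. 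The two mechanisms carry the same information (the system \eqref{ABsystem} is itself a consequence of $L\psi=z^2\psi$ plus the scaling symmetry of $f=C/(x-1)^2$), but yours produces as a by-product the symmetric three-term relation above, which the paper does not state, at the cost of one extra elimination. Your treatment of the boundary conditions and of the equivalence with \eqref{amnsimple} matches the paper, which simply declares that equivalence obvious. For \eqref{amnclosed} the paper's official route is the one you mention only as a fallback: divisibility of $A_{mn}$ by $a_m$ and by $a_n$ gives a partial-fraction expansion of $A_{mn}/(a_m a_n)$ with simple poles at half-integral $\alpha$, and the coefficients are fixed by comparing residues. Your primary route --- verifying \eqref{amnrec} directly for the closed form via $2(k+1)a_{k+1}=\bigl(\alpha^2-(k+\tfrac12)^2\bigr)a_k$ --- is equally viable, but, exactly like the paper, it terminates in an unproved ``Vandermonde/Saalsch\"utz-type'' binomial identity, so neither version is more complete than the other on that final point.
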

\noindent Observe that the summation in~\eqref{amnclosed} is finite (since the $k_{\rm th}$ term vanishes unless $|k+1/2|<m,n$)
and is even in~$\alpha$ (as one sees by sending $\alpha$ to $-\alpha$ and $k$ to $-k-1$), and that either 
formula~\eqref{amnsimple} or~\eqref{amnclosed} implies that $A_{mn}(\alpha)/a_m(\alpha)$ for fixed $m$~and~$n$ 
is a polynomial of degree~$n-1$ in~$\alpha^2$.

The Laurent series of the corresponding function $K(z,w)$ (defined in~\eqref{DandK}) begins
\begin{align}
& K(z,w) \,-\, \frac12 \frac{b\bigl(z^2,\bdzero\bigr)+b\bigl(w^2,\bdzero\bigr)}{z-w}\nn\\
& \quad \= 
\frac{C}{2} \biggl(\frac{1}{z w^2} \,-\, \frac1{ z^2 w}\biggr) \,-\,  \frac{C}{z^2w^2}  \+ 
\frac{3C(1+C)}{4} \biggl(\frac1{ z w^4} - \frac1{z^2 w^3} + \frac1{z^3w^2} - \frac1{z^4w}\biggr) \+ \cdots \,. 
\end{align}

We also note that, similarly as above, 
the pair of wave functions $\psi,\psi^*$ of the above theorem can be viewed as the 
asymptotic expansions of the following two analytic functions, respectively, as $z$ goes 
to~$\infty$ within a certain sector of the complex $z$-plane:
\beq
\sqrt{\frac{2\xi}\pi} \, e^z \, K_\alpha(\xi) \,, \qquad \sqrt{2\pi \xi } \, e^{-z} \, I_\alpha(\xi) \,, 
\eeq
where $I_\alpha$ and $K_\alpha$ denote the modified Bessel functions and $\xi=z \, (1-x)$. Finally we note that 
the function $\DzwC$ in terms of these analytic functions reads
\beq
\DzwC \= 2 \, e^{z-w} \sqrt{zw} \; \frac{K_\alpha(z) \, w \, I_\alpha'(w) - I_\alpha(w) \, z\, K_\alpha'(z) }{z^2-w^2} \,, 
\eeq
which can be recognized as a Bessel kernel (cf.~\cite{BT,TW}). 

\medskip

\noindent {\bf  Example 3: Lam\'e solutions.} 
Let $\tau_{{\rm elliptic}}$ be the tau-function of the Lam\'e solution $u_{{\rm elliptic}}(\bt)$ 
introduced above, which as we will see gives an elliptic generalization of~$\tau_{\Theta(C)}$.
Denote by $R(\lambda,\bt)$ the basic matrix resolvent of~$u_{\rm elliptic}$, and denote 
$\R(\lambda, x)=R(\lambda,x,\bdzero)$ and $b(\lambda, x)= \R(\lambda, x)_{12}$. 
If we restrict to the particular case when $-C$ is a triangular number, i.e. $C=-\frac{p(p+1)}2$ for some integer $p\geq 0$, 
then $u_{{\rm elliptic}}(\bt)$ is a special {\it finite-gap} solution~\cite{DN1,Du0} of the KdV hierarchy. Denote by 
\beq
y^2 \= S_p(\lambda)
\eeq
the corresponding {\it spectral curve} \cite{DN1,Du0}, where $S_p(\lambda)$ is a polynomial of $\lambda$ of degree $1+2p$ with leading coefficient 1.
The first few $S_p$ are 
$S_0=\lambda$, $S_1 = \lambda^3 - \frac{g_2}4 \lambda -\frac{g_3}4$, 
$S_2 = (\lambda^2-3g_2)\bigl(\lambda^3-\frac94 g_2\lambda + \frac{27}4 g_3\bigr)$.

\noindent The following theorem will be deduced from results of~\cite{DN1, Du0, Du2} in Section~\ref{section5}.

\begin{theorem} \label{main2} 
In the case that $C=-\frac{p(p+1)}{2}$~$(p\geq0)$, the product $\sqrt{\frac{S_p(\lambda)}{\lambda}} \, \phi(\lambda, x)$
is a degree~$p$ polynomial in~$\lambda$ with leading coefficient~1 and a degree~$p$ polynomial in $\wp=\wp(x; \tau)$. 
For $p=1$, 
$$
b \= 2 \sqrt{\lambda}\frac{\lambda-\wp}{\sqrt{4\lambda^3 - g_2\lambda - g_3}} \,;
$$
For $p=2$, 
$$
b \= \frac12 \sqrt{\lambda} \frac{4\lambda^2 -12 \wp\,\lambda +36 \wp^2 -9 g_2}{\sqrt{(\lambda^2-3 g_2)(4 \lambda^3 - 9 g_2 \lambda + 27 g_3)}} \, ;
$$
For $p=3$, 
$$
b \= \frac{4 \lambda^3 -24\wp\, \lambda^2 +60(3\wp^2 -g_2)\lambda -900\wp^3 +225 g_2 \wp + 225 g_3}
{\sqrt{16 \lambda^6 -504 g_2 \lambda^4 + 2376 g_3 \lambda^3 +4185 g_2^2 \lambda^2 -36450 g_2 g_3 \lambda -3375 g_2^3 + 91125 g_3^2}} \,. 
$$
\end{theorem}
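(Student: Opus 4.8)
The plan is to reduce the whole statement to the scalar function $b=b(\lambda,x)=\R(\lambda,x)_{12}=\psi\,\psi^{*}$ and to identify it with the classical ``diagonal resolvent'' of the finite-gap Schr\"odinger operator $L=\p_x^2+2u_0$ with $u_0=C\wp(x;\tau)$. Writing $\R=\sm{a}{b}{c}{-a}$ (traceless since $\R\in\cS$), the defining relations~\eqref{defiR}, i.e. $[\cL,\R]=0$, ${\rm Tr}\,\R^2=2\lambda$ and $\R=\Lambda+\text{l.o.t.}$, give at once $a=\tfrac12\p_x b$, $c=(\lambda-2u_0)\,b-\tfrac12\p_x^2 b$, together with the third-order equation
\[
\p_x^3 b\=4(\lambda-2u_0)\,\p_x b\,-\,4\,(\p_x u_0)\,b
\]
and the first integral
\[
\tfrac14(\p_x b)^2\,+\,(\lambda-2u_0)\,b^2\,-\,\tfrac12\,b\,\p_x^2 b\=\lambda .
\]
Since $\R(\lambda)$ is uniquely determined by~\eqref{defiR}, the function $b$ is the \emph{unique} formal solution of these two relations normalized by $b=1+O(\lambda^{-1})$.

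Next I would invoke the finite-gap theory of~\cite{DN1,Du0,Du2}. For $C=-\tfrac{p(p+1)}2$ the potential $2u_0=-p(p+1)\wp$ is the classical Lam\'e potential, which is algebro-geometric of genus~$p$; concretely, the third-order equation above then admits a solution $Q_p(\lambda,x)$ that is a \emph{monic polynomial of degree~$p$ in~$\lambda$}. Define
\[
S_p(\lambda)\:=\tfrac14(\p_x Q_p)^2\,+\,(\lambda-2u_0)\,Q_p^2\,-\,\tfrac12\,Q_p\,\p_x^2 Q_p ;
\]
differentiating the right-hand side in~$x$ and substituting the third-order equation shows that it is independent of~$x$, so $S_p$ is a polynomial in~$\lambda$, manifestly monic of degree $2p+1$, and it is precisely the spectral polynomial of~\cite{DN1,Du0}. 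Now set $\tilde b:=\sqrt{\lambda/S_p}\,\,Q_p$. Because $\sqrt{\lambda/S_p}$ is constant in~$x$, $\tilde b$ satisfies the \emph{same} third-order equation as $Q_p$, while the very definition of $S_p$ says that $\tilde b$ satisfies the first integral $\tfrac14(\p_x\tilde b)^2+(\lambda-2u_0)\tilde b^2-\tfrac12\tilde b\,\p_x^2\tilde b=\lambda$; and since $Q_p\sim\lambda^p$ and $S_p\sim\lambda^{2p+1}$ one has $\tilde b=1+O(\lambda^{-1})$ (in integer powers of $\lambda^{-1}$). Thus $\tilde b$ reproduces all of~\eqref{defiR}, and uniqueness of the basic matrix resolvent forces $\tilde b=b$. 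This gives exactly the assertion that $\sqrt{S_p(\lambda)/\lambda}\,\,b=Q_p(\lambda,x)$ is a monic polynomial of degree~$p$ in~$\lambda$.

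To upgrade this to polynomiality of degree~$p$ in $\wp$, I would argue in two steps. Each coefficient of $b$ is a differential polynomial in $u_0=C\wp$, hence a polynomial in $\wp,\p_x\wp,\p_x^2\wp,\dots$; reducing with $\p_x^2\wp=6\wp^2-\tfrac12 g_2$ and $(\p_x\wp)^2=4\wp^3-g_2\wp-g_3$ rewrites every coefficient uniquely as $P(\wp)+(\p_x\wp)\,\widetilde P(\wp)$. Because $u_0$ is even in~$x$, the operator $L$ is invariant under $x\mapsto-x$, so its diagonal resolvent $b$ is even in~$x$; as $\wp$ is even and $\p_x\wp$ is odd, the terms $\widetilde P$ must vanish, and therefore $Q_p$ is a polynomial in $\wp$ alone. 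A homogeneity count (assigning weight~$2$ to each of $\lambda$ and $\wp$, so that the monic $Q_p$ is isobaric of weight~$2p$) then forces its $\wp$-degree to be exactly~$p$.

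Finally, for $p=1,2,3$ I would obtain the displayed formulas by explicitly solving the recursion for $Q_p$ coming from the third-order equation (equivalently, by determining the finite-gap data), reading off $S_p$ from the first integral, and assembling $b=\sqrt{\lambda/S_p}\,\,Q_p$; a direct check recovers $S_1=\lambda^3-\tfrac{g_2}4\lambda-\tfrac{g_3}4$ and the higher $S_p$. The main obstacle is the input from finite-gap theory itself---the fact that the Gelfand--Dikii recursion terminates in a monic polynomial in~$\lambda$ \emph{precisely} when $C=-\tfrac{p(p+1)}2$, and the identification of the associated first integral with the stated spectral polynomial $S_p(\lambda)$. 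Once this is granted from~\cite{DN1,Du0,Du2}, the remaining steps (the $\sqrt{\lambda/S_p}$ normalization matching $b=1+O(\lambda^{-1})$, the evenness-and-weight argument for polynomiality in~$\wp$, and the low-genus computations) are routine.
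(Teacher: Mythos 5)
Your argument is correct and is essentially the paper's own proof: both rest on the same external input from finite-gap theory (\cite{DN1,Du0,Du2}) that the resolvent of the Lam\'e operator truncates, both normalize by $\sqrt{S_p(\lambda)/\lambda}$ and match the $1+O(\lambda^{-1})$ asymptotics via uniqueness of $b$, and both finish with the weight count ${\rm wt}\,\lambda={\rm wt}\,\wp=2$, ${\rm wt}\,g_2=4$, ${\rm wt}\,g_3=6$ to control the $\wp$-degree; your scalar first integral defining $S_p$ is literally the identity $\det \R^{\rm sp}=-S_p(\lambda)$ used in the paper. The only cosmetic difference is that you rule out $\wp'$ by a parity argument, whereas the paper gets this from the explicit recursion for the coefficients $P_k$ in the $(X,\lambda)$-coordinates established just before the proof.
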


\subsection*{Organization of the paper} 
In Section~\ref{section2} we review the MR approach.  
In Section~\ref{section3} we prove Theorem~\ref{wavethm}. 
In Section~\ref{section4} we prove Theorems~\ref{BGWnpoint},~\ref{RCC},~\ref{dbessel}, 
and compute explicitly some generalized BGW correlators.
In Section~\ref{section5} we compute Lam\'e correlators.
Further remarks are given in Section~\ref{futherrmk}.

\subsection*{Acknowledgements} 
We would like to express our thanks to Paul Norbury for helpful suggestions 
and to the anonymous referee for many constructive comments.
D.Y. is grateful to Youjin Zhang for his advice and encouragement and to Chang-Shou Lin and Alexander 
Alexandrov for their interest. Part of the work of D.Y. was done in MPIM, Bonn while he was a postdoc; 
he acknowledges MPIM for excellent working conditions and financial supports.

\medskip

\noindent {\it Note added.} 
After this paper was finished we learned that the identities~\eqref{state2}--\eqref{state1}, \eqref{RTheta}--\eqref{1ptC}, 
which had been presented together with their proofs by one of the authors~\cite{Y}, 
had also been found by M.~Bertola and G.~Ruzza (see~\cite{BeR}, whose arXiv version appeared slightly
before the arXiv version of this paper), 
but with a quite different method using matrix models together with the construction of appropriate 
Riemann--Hilbert problems.

\section{Review of the matrix resolvent approach to the KdV hierarchy}\label{section2}
We continue with more details about the MR approach to the study of tau-functions.

\subsection{Fundamental lemma} 
Recall from Introduction some of our notations $\A=\QQ[u_0,u_1,u_2,\dots]$, $\Lambda(\lambda)= \sm 0 1 \lambda 0$, $q=\sm 0 0 {-2u_0} 0$,   
$\cL= \p + \Lambda+q$, and $\cS={\rm sl}_2(\CC) \bllb\lambda^{-1}\brrb$. 
Denote by $(\cdot|\cdot)$ the {\it normalized} Cartan--Killing form. We have 
$(A | B) = {\rm Tr} \, AB$, $\forall\, A,B \in {\rm sl}_2(\CC)$.
This bilinear form naturally extends to $\cS$. 
According to B.~Kostant \cite{Kostant} and V.~Kac \cite{Kac1978}, the space $\cS$ admits the decomposition 
\begin{align}
& \cS \= \Ker \, \ad_\Lambda \oplus \Image \, \ad_\Lambda \,, \quad \Ker \, \ad_\Lambda \perp \Image \, \ad_\Lambda \, , \label{K1}\\
& \Ker \, \ad_\Lambda  \=\biggl \{ \, \sum_{\ell \leq m} c_{\ell} \, \Lambda_{1+2\ell} \, \Big| \, m\in \ZZ \,, \, c_\ell\in\CC \,, c_m\neq 0 \, \biggr\} \,, \label{K2}
\end{align}
where $\Lambda_{1+2\ell}:=\Lambda \, \lambda^\ell$. 
It is easy to see that $\deg \, \Lambda_{1+2\ell}=1+2\ell$ (recall that $\deg$ denotes the principal gradation) and that $(\Lambda|\Lambda)=2\lambda$. 
\begin{lemma}[Drinfeld--Sokolov~\cite{DS}] 
There exists a unique pair $(U,H)$  such that 
\begin{align}
& U\in  \A \, \otimes \, (\Image \, \ad_\Lambda)^{<0}\,, \quad H\in \A \, \otimes \, (\Ker \, \ad_\Lambda)^{<0}\,, \\
& e^{-\ad_U} \cL \=  \p \+ \Lambda \+ H \,. 
\end{align}
\end{lemma}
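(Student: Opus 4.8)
The plan is to build $U$ and $H$ recursively with respect to the principal gradation, replacing the single operator identity by an infinite family of linear equations, each solved by inverting $\ad_\Lambda$ on its image. First I would record how the conjugation acts at the operator level. Writing $\cL=\p+\Lambda+q$ and regarding $U$ as a matrix-valued multiplication operator, we have $[U,\p]=-\p(U)$, so $e^{-\ad_U}\cL-\p$ is again a pure multiplication operator. Using the identity $e^{-U}\,\p(e^{U})=\frac{1-e^{-\ad_U}}{\ad_U}\bigl(\p(U)\bigr)$ one obtains
\begin{equation*}
e^{-\ad_U}\cL \= \p \+ e^{-\ad_U}(\Lambda+q) \+ \frac{1-e^{-\ad_U}}{\ad_U}\bigl(\p(U)\bigr)\,,
\end{equation*}
so that the assertion becomes equivalent to the matrix equation
\begin{equation*}
e^{-\ad_U}(\Lambda+q)\+\frac{1-e^{-\ad_U}}{\ad_U}\bigl(\p(U)\bigr) \= \Lambda \+ H\,. \tag{$\ast$}
\end{equation*}

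Next I would expand both sides of $(\ast)$ in the principal gradation. Since $\deg\Lambda=1$, $\deg q=-1$, and all components of $U,H$ have negative degree, I write $U=\sum_{k\ge1}U_{-k}$ and $H=\sum_{k\ge1}H_{-k}$ with $U_{-k}\in\A\otimes(\Image\,\ad_\Lambda)$ and $H_{-k}\in\A\otimes(\Ker\,\ad_\Lambda)$ homogeneous of degree $-k$. The operator $\ad_\Lambda$ raises degree by $1$, each application of $\ad_U$ lowers it by at least $1$, and $\p$ preserves it (because $\deg u_i=0$); hence both sides of $(\ast)$ are honest graded series and can be matched degree by degree. The degree $1$ parts give $\Lambda=\Lambda$, and in degree $0$ the only surviving term is $\ad_\Lambda(U_{-1})$, which by injectivity of $\ad_\Lambda$ on $\Image\,\ad_\Lambda$ forces $U_{-1}=0$.

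The core of the argument is the inductive step. Suppose $U_{-1},\dots,U_{-k}$ and $H_{-1},\dots,H_{-(k-1)}$ are already determined. Collecting the degree $-k$ part of $(\ast)$, the genuinely new unknown $U_{-k-1}$ enters only through the single linear term $-\ad_{U_{-k-1}}(\Lambda)=\ad_\Lambda(U_{-k-1})$, while $H_{-k}$ enters on the right; every remaining contribution---the tail of $e^{-\ad_U}(\Lambda+q)$, the shift $\p(U_{-k})$, and the higher commutators in the $\p U$-series---involves only the $U_{-j}$ with $j\le k$ (a degree count shows that a factor $U_{-k-1}$ cannot occur in any nonlinear term). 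Thus the degree $-k$ equation takes the form
\begin{equation*}
\ad_\Lambda(U_{-k-1}) \+ W_{-k} \= H_{-k}\,,
\end{equation*}
with $W_{-k}\in\A\otimes\cS$ a known homogeneous element of degree $-k$. Projecting onto the two summands of the decomposition~\eqref{K1}: the $\Ker\,\ad_\Lambda$-part gives $H_{-k}=(W_{-k})^{\Ker}$ (automatically zero for even $k$, consistent with $(\Ker\,\ad_\Lambda)^{<0}$ living in odd degrees by~\eqref{K2}), and the $\Image\,\ad_\Lambda$-part gives $\ad_\Lambda(U_{-k-1})=-(W_{-k})^{\Image}$. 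Since $\ad_\Lambda$ restricts to a bijection of $\Image\,\ad_\Lambda$ whose inverse lowers degree by $1$, this determines $U_{-k-1}$ uniquely. The induction then produces $U$ and $H$ with the required properties, and uniqueness is built into each step.

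The point I expect to need the most care is the derivation of $(\ast)$ together with the degree bookkeeping that guarantees the recursion closes, namely that at each degree the only new unknown $U_{-k-1}$ occurs through the lone linear term $\ad_\Lambda(U_{-k-1})$, while every higher commutator and every $\p$-shift refers back to already-constructed data. Once this is in place, the solvability of each homological equation is immediate from the splitting $\cS=\Ker\,\ad_\Lambda\oplus\Image\,\ad_\Lambda$ and the invertibility of $\ad_\Lambda$ on the image recorded in~\eqref{K1}--\eqref{K2}.
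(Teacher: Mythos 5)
Your proof is correct and follows exactly the route the paper indicates (it only cites \cite{DS,BDY3}, noting that the argument uses the decomposition~\eqref{K1}--\eqref{K2} and induction): the conjugation identity reduces the claim to a graded family of homological equations, each solved uniquely by projecting onto $\Ker\,\ad_\Lambda$ and inverting $\ad_\Lambda$ on its image. The degree bookkeeping ensuring that $U_{-k-1}$ enters only through the linear term $\ad_\Lambda(U_{-k-1})$ is handled correctly, so there is nothing to add.
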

\noindent The proof, using~\eqref{K1}--\eqref{K2} and the mathematical induction, can be found in~\cite{DS,BDY3}.

\subsection{Matrix resolvent recursive relations} \label{s22} Let us prove that equations in~\eqref{defiR} has a unique solution~$R$.  
The first equation of~\eqref{defiR} reads $ \p (R) + [\Lambda+q \,,\,R]=0$. Write
\beq\label{Rabcd} R \=  \begin{pmatrix} a & b \\  c & -a \\ \end{pmatrix} \, . \eeq 
Then it follows that
\begin{align}
& a \= \frac12 \, \p (b) \, ,\label{aeq}\\
& c \= (\lambda-2u_0) \, b  \,-\,  \frac12 \, \p^2(b) \, ,\label{ceq}\\
& \p^3(b) \,-\, 4  \, (\lambda-2u_0) \, \p(b) \+ 4 \, u_1 \, b \= 0 \,. \label{beq}
\end{align}
Substituting \eqref{aeq}--\eqref{ceq} into the 3rd equation of~\eqref{defiR} we find
\beq\label{essential}
b \, \p^2(b)   \m  \frac1{2} \bigl(\p(b)\bigr)^2 \m  2 \, (\lambda-2u_0)\,b^2 \= -2 \lambda.
\eeq
Applying~$\p$ to both sides of this equation yields \eqref{beq}. This shows compatibility between the 1st and the 3rd equations in~\eqref{defiR}. 
The 2nd equation of~\eqref{defiR} is also compatible with the 3rd one. This gives the existence of~$R$. Let us now prove the uniqueness.  
According to the 2nd equation in~\eqref{defiR} we have
\beq\label{defibk}
b \=  \sum_{k\geq -1}  \frac{b_k }{\lambda^{k+1}} \,, \qquad b_k\in \A \,, \quad b_{-1}\=1 \,.
\eeq
Then \eqref{beq} and \eqref{essential} imply that $b_k$ satisfies the following recursive relations: for $k\geq 0$, 
\begin{align}
& \p \, (b_k) \= 2 \, u_0\, \p \, (b_{k-1}) \+ u_1 \, b_{k-1} \+ \frac14 \p^3 \bigl(b_{k-1}\bigr) \,, \label{mrr1}\\
& b_k \= \sum_{k_1,k_2\geq -1\atop k_1+k_2= k-2}  \biggl[ \biggl(\frac{b_{k_1} \p^2(b_{k_2})}4 - \frac{\p(b_{k_1})  \, \p(b_{k_2})}8 \biggr) +  u_0 \, b_{k_1} b_{k_2}\biggr]  \,-\,  \frac12\sum_{k_1\geq 0, \, k_2\geq -1\atop k_1+k_2= k-2} b_{k_1} b_{k_2+1}  \,, \label{mrr2}
\end{align}
which will be referred as the \textit{matrix resolvent recursive relations}. (One recognizes that~\eqref{mrr1} is the Lenard--Magri recursion.)
Formula~\eqref{mrr2} uniquely determines $b_0,b_1,b_2,\dots$ with the first few being 
\beq\label{firsttwob}
b_0 \= u_0 \,, \quad b_1 \= \frac32 \, u_0^2   \+ \frac1{4} \, u_2 \,.
\eeq
We note that the formal series $b=b(\lambda)$ determined here coincides with~$b$ in the Introduction.
\begin{lemma} 
The basic matrix resolvent can be expressed as $R= e^{{\rm ad}_U}  (\Lambda )$. 
\end{lemma}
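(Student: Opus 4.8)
The plan is to exploit the uniqueness of the basic matrix resolvent. Since $R(\lambda)$ is the \emph{unique} element of $\cS\otimes\A$ obeying the three conditions of~\eqref{defiR}, it is enough to check that the explicit element $\wt R \:= e^{\ad_U}(\Lambda)$ satisfies each of them, where $(U,H)$ is the pair produced by the Drinfeld--Sokolov lemma. I would first record the reformulation of that lemma: reading $e^{-\ad_U}$ as conjugation by $e^{-U}$ on first-order matrix differential operators, the identity $e^{-\ad_U}\cL = \p+\Lambda+H$ becomes $e^{-U}\,\cL\,e^{U} = \cL_0$ with $\cL_0\:=\p+\Lambda+H$, equivalently $\cL = e^{U}\,\cL_0\,e^{-U}$.

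The crux is that $\Lambda$ commutes with the operator $\cL_0$. Indeed $[\cL_0,\Lambda]=[\p,\Lambda]+[\Lambda,\Lambda]+[H,\Lambda]$, in which $[\p,\Lambda]=\p(\Lambda)=0$ because $\Lambda$ does not involve the $u_i$, $[\Lambda,\Lambda]=0$, and $[H,\Lambda]=-\ad_\Lambda(H)=0$ since $H\in\A\otimes(\Ker\,\ad_\Lambda)^{<0}$. As conjugation by $e^{U}$ is an algebra automorphism and hence preserves commutators, it follows that $[\cL,\wt R]=e^{U}[\cL_0,\Lambda]e^{-U}=0$, giving the first condition. For the second and third I would argue from the grading: writing $\wt R=\Lambda+[U,\Lambda]+\tfrac12[U,[U,\Lambda]]+\cdots$, the negativity of $\deg U$ (because $U\in\A\otimes(\Image\,\ad_\Lambda)^{<0}$) forces every term after $\Lambda$ to have principal degree strictly below $\deg\Lambda=1$, so $\wt R=\Lambda+\mbox{l.o.t.}$; and since $\ad_U$ preserves ${\rm sl}_2(\CC)$ this shows $\wt R\in\cS\otimes\A$. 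Finally cyclicity of the trace gives ${\rm Tr}\,\wt R^2={\rm Tr}\bigl(e^{U}\Lambda^2 e^{-U}\bigr)={\rm Tr}\,\Lambda^2=2\lambda$, because $\Lambda(\lambda)^2=\lambda\,I$. Hence $\wt R$ meets all of~\eqref{defiR}, and uniqueness yields $R=\wt R=e^{\ad_U}(\Lambda)$.

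The main obstacle I anticipate is not any hard computation but the bookkeeping needed to make the formal operator manipulations rigorous. One must confirm that $e^{-\ad_U}$ applied to $\cL$ really is conjugation by $e^{-U}$ (so that the automorphism identity $[e^{\ad_U}A,e^{\ad_U}B]=e^{\ad_U}[A,B]$ applies with $A=\cL_0$ containing the derivation $\p$), and that the series $e^{\ad_U}(\Lambda)$ is a well-defined element of $\cS\otimes\A$, i.e. that each homogeneous component receives only finitely many contributions from the nested brackets. Both facts follow from the Kostant--Kac decomposition~\eqref{K1}--\eqref{K2} together with $\deg U<0$, so I expect the verification to be routine once these conventions are fixed.
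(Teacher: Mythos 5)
Your proposal is correct and follows essentially the same route as the paper: both arguments observe that $[\p+\Lambda+H,\Lambda]=0$ (since $H\in\A\otimes(\Ker\,\ad_\Lambda)^{<0}$), push this through $e^{\ad_U}$ to get $[\cL,e^{\ad_U}(\Lambda)]=0$, use $\deg U<0$ for the leading-term condition, and use invariance of the trace form under $e^{\ad_U}$ for ${\rm Tr}\,R^2=2\lambda$, concluding by uniqueness. The extra care you take about convergence of the nested brackets and the conjugation identity is a reasonable elaboration of what the paper leaves implicit.
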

\begin{proof} Applying $e^{{\rm ad}_U}$ to 
$\bigl[\p + \Lambda+H \, , \, \Lambda \bigr] \equiv 0 $
gives $\bigl[\cL, e^{{\rm ad}_U}  (\Lambda )\bigr]=0$. Since $\deg \, U<0$, the matrix $R:=e^{{\rm ad}_U} (\Lambda )$ satisfies the 2nd equation in~\eqref{defiR}. 
Finally, ${\rm Tr} \, R^2 = (\Lambda|\Lambda)=2\lambda$. The lemma is proved. 
\end{proof}
 
\begin{defi} \label{dkdefi}
Define a family of derivations~$D_k$~($k\geq 0$) by means of~\eqref{KdVformal} with 
\beq \label{defibQ}Q_k\:=\frac{\p(b_k)}{(2k+1)!!} \,-\, \frac{u_0^k}{k!}u_1 \,. \eeq
\end{defi}

\subsection{From matrix resolvent to tau-structure}
The basic matrix resolvent~$R$ has the expression 
\beq\label{Rbk}
R\= \begin{pmatrix}  \frac12\sum_{i\geq -1}  \p (b_i)  \, \lambda^{-1-i}  & \sum_{i\geq-1} b_i  \, \lambda^{-1-i}  \\ 
\sum_{i\geq -1} \lambda^{-1-i} \Bigl(  (\lambda-2u) \, b_i -\frac{\p^2 (b_i)} 2 \Bigr) &  -\frac12 \sum_{i\geq-1} \p (b_i) \, \lambda^{-1-i}  \end{pmatrix} \,,
\eeq
where we recall that $b_k\in \A$~$(k\geq -1)$ are uniquely determined by~\eqref{mrr2} with~$b_{-1}=1$.
Introduce 
\beq
V_k \:= - \, \frac1{(2k+1)!!} \, \bigl(\lambda^{k} R(\lambda)\bigr)_+ \+ \frac1{(2k+1)!!} \, \begin{pmatrix} 0 & 0 \\ b_k & 0 \end{pmatrix} \,, \qquad k\geq 0\,.
\eeq
\begin{lemma} 
The derivations $D_k$~($k\geq0$) defined in~Definition~\ref{dkdefi} satisfy
\beq\label{commop}
\bigl[ D_k - V_k\,,\, \cL \, \bigr] \= 0 \,, \qquad k\geq 0\,.
\eeq
\end{lemma}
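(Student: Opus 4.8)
The plan is to read the bracket $[D_k-V_k,\cL]=0$ as the flow equation $D_k(\cL)=[V_k,\cL]$ and to verify the two sides agree. First I would compute the left-hand side. Since $[D_k,\p]=[D_k,D_0]=0$ by~\eqref{KdVformal} and $\Lambda$ is independent of the $u_i$, the only surviving contribution is $D_k(q)$, so that
\[ [D_k,\cL] \= [D_k,q] \= D_k(q) \= \begin{pmatrix} 0 & 0 \\ -2\,D_k(u_0) & 0 \end{pmatrix}. \]
By Definition~\ref{dkdefi} and~\eqref{defibQ} we have $D_k(u_0)=\p(b_k)/(2k+1)!!$, so the target value is $\frac1{(2k+1)!!}\sm 0 0 {-2\p(b_k)} 0$. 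Everything then rests on computing $[V_k,\cL]$ and matching it to this matrix.

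For the right-hand side I would start from the defining property $[\cL,R]=0$ of the basic matrix resolvent, which immediately gives $[\cL,\lambda^k R]=0$. Splitting $\lambda^k R=(\lambda^k R)_++(\lambda^k R)_-$ into its non-negative and its strictly negative parts in the powers of $\lambda$, this yields
\[ [(\lambda^k R)_+,\cL] \= [\cL,(\lambda^k R)_-]. \]
The structural heart of the argument is a grading estimate in $\lambda$. Writing $\Lambda=E+\lambda F$, the operators $\p$ and $\ad_q$ preserve the $\lambda$-degree while $\ad_\Lambda$ raises it by at most one; hence $[(\lambda^k R)_+,\cL]$ has $\lambda$-degree $\geq0$ and $[\cL,(\lambda^k R)_-]$ has $\lambda$-degree $\leq0$. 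As the two expressions coincide, both are purely of $\lambda$-degree zero, and this degree-zero part is produced solely by bracketing the raising part $\lambda F$ of $\Lambda$ against the coefficient $M_{-1}$ of $\lambda^{-1}$ in $\lambda^k R$; that is, it equals $[F,M_{-1}]$.

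The remaining steps are short explicit computations. Reading $M_{-1}$ off the expression~\eqref{Rbk} for $R$, its top row is $(\tfrac12\p(b_k),\,b_k)$ and its diagonal is $\pm\tfrac12\p(b_k)$, from which $[F,M_{-1}]=\sm{-b_k}0{\p(b_k)}{b_k}$ (the lower-left entry of $M_{-1}$, which alone involves $b_{k+1}$, drops out of the bracket). Separately, using the Leibniz rule $[\p,b_kF]=\p(b_k)F$ together with $[F,E]=\sm{-1}001$, one finds $[b_kF,\cL]=-\p(b_k)\,F+b_k\sm{-1}001=\sm{-b_k}0{-\p(b_k)}{b_k}$. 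Substituting both into $V_k=-\frac1{(2k+1)!!}(\lambda^k R)_++\frac1{(2k+1)!!}\,b_kF$ and taking the bracket with $\cL$, the diagonal terms cancel and the two lower-left entries reinforce, leaving exactly $\frac1{(2k+1)!!}\sm 0 0 {-2\p(b_k)} 0=D_k(q)$, which is $[D_k,\cL]$.

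The step I expect to require the most care is the $\lambda$-grading bookkeeping: because $\Lambda$ is inhomogeneous in $\lambda$ and the entries of $R$ in~\eqref{Rbk} mix several powers of $\lambda$ (the lower-left entry carrying an extra factor $\lambda$), one must track the degrees precisely to be sure that $[(\lambda^k R)_+,\cL]$ is purely of degree zero. The subtler conceptual point, and the real reason the statement is delicate, is the presence of the correction term $\frac1{(2k+1)!!}\,b_kF$ in the definition of $V_k$: it is exactly what cancels the spurious diagonal contribution $b_k\sm{-1}001$ produced by $[F,M_{-1}]$ and reinforces the off-diagonal one, yielding the factor $-2\p(b_k)/(2k+1)!!$ demanded by $D_k(q)$. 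Once this role is identified, the matching is forced.
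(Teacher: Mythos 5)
Your proof is correct, and while it shares the paper's first step (reducing $[D_k-V_k,\cL]=0$ to $D_k(q)=[V_k,\cL]$ via $[D_k,\p]=0$ and $D_k(q)=\frac1{(2k+1)!!}\sm00{-2\p(b_k)}0$), the way you evaluate $[V_k,\cL]$ is genuinely different. The paper writes out $(\lambda^k R)_+$ entry by entry as a finite sum over $i=-1,\dots,k-1$ and performs a direct term-by-term computation, invoking the Lenard--Magri recursion~\eqref{mrr1} to collapse the telescoping sums. You instead use only the defining relation $[\cL,R(\lambda)]=0$: splitting $\lambda^kR$ into its polynomial and strictly negative parts gives $[(\lambda^kR)_+,\cL]=[\cL,(\lambda^kR)_-]$, and the $\lambda$-grading estimate (the left side has degrees $\ge0$, the right side degrees $\le0$, since $\p$ and $\ad_q$ preserve the degree and $\ad_\Lambda$ raises it by at most one) pins the common value down to the single degree-zero term $[F,M_{-1}]$, after which the correction $\frac1{(2k+1)!!}b_kF$ in $V_k$ does exactly the bookkeeping you describe. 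This is the classical projection argument of the Adler--Kostant--Symes/Drinfeld--Sokolov type; it buys you a proof that never touches~\eqref{mrr1} or the explicit entries of $V_k$ beyond the one coefficient $M_{-1}$, and it makes transparent \emph{why} the $b_kF$ correction is needed. The paper's computation is more elementary but hides this structure inside the simplification step. Both arguments are complete; your sign and entry checks ($[F,M_{-1}]=\sm{-b_k}0{\p(b_k)}{b_k}$, $[b_kF,\cL]=\sm{-b_k}0{-\p(b_k)}{b_k}$, and the cancellation of the diagonal with reinforcement of the lower-left entry) are all accurate.
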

\begin{proof} 
Since $D_k$ commutes with $D_0=\p$, equation~\eqref{commop} can be written equivalently as
$$
D_k \, (q)  \= \bigl[ \, V_k \, , \cL \, \bigr] \,. 
$$
Noting that 
$$V_k 
\= \frac1{(1+2k)!!}
\begin{pmatrix} - \frac12\sum_{i=-1}^{ k -1} \p(b_i) \, \lambda^{k-1-i}  & - \sum_{i=-1}^{k-1} b_i \, \lambda^{k-1-i}  \\ 
\sum_{i=-1}^{ k -1} \lambda^{k-1-i} \Bigl(\frac{ \p^2(b_i)} 2 - (\lambda-2u_0) \, b_i \Bigr) &  \frac12 \sum_{i=-1}^{k-1} \p (b_i) \, \lambda^{k-1-i}  \end{pmatrix}
$$
we have by a direct calculation that 
$$
\bigl[V_k \, , \, \cL \bigr] \= \bigl[V_k \, ,  \, \p  + \Lambda + q \bigr] 
\=  \frac1{(1+2k)!!} \begin{pmatrix} 0 & 0 \\   -2 \, \p(b_k) &   0 \end{pmatrix} \,.
$$
Here \eqref{mrr1} has been used for the simplification. 
The lemma is proved. 
\end{proof}

\begin{defi} 
Define $P_k= e^{\ad_U} (\Lambda_{1+2k})$, $k=0,1,2,3,\dots$.
\end{defi}

\begin{lemma} The following formula holds true
\beq\label{dkp}
 D_k (P_\ell) \= \bigl[ V_k \, , P_{\ell} \bigr] \,, \quad \forall \, k,\ell \geq 0 \,.
\eeq
\end{lemma}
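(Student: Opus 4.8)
The plan is to recast~\eqref{dkp} as the operator identity $[D_k-V_k,P_\ell]=0$ and to deduce it from the two relations already in hand, namely $[D_k-V_k,\cL]=0$ from~\eqref{commop} and $[\cL,P_\ell]=0$. First I would observe that $P_\ell$ is an order-zero (multiplication) operator, so that $D_k(P_\ell)=[D_k,P_\ell]$ and $[V_k,P_\ell]$ is simultaneously a matrix commutator and an operator commutator; hence~\eqref{dkp} is equivalent to $[D_k-V_k,P_\ell]=0$. I would also note that $P_\ell=\lambda^\ell R$: since $\lambda$ is a scalar, $e^{\ad_U}(\Lambda_{1+2\ell})=e^{\ad_U}(\lambda^\ell\Lambda)=\lambda^\ell e^{\ad_U}(\Lambda)=\lambda^\ell R$. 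In particular $[\cL,P_\ell]=\lambda^\ell[\cL,R]=0$, and, both sides of~\eqref{dkp} being multiplied by $\lambda^\ell$, the whole statement even reduces to the case $\ell=0$, i.e. $D_k(R)=[V_k,R]$; I will nonetheless run the argument for general $\ell$, as it is uniform.

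Writing $B:=D_k-V_k$ and $W_\ell:=[B,P_\ell]=D_k(P_\ell)-[V_k,P_\ell]$, the Jacobi identity (the fact that $\ad_\cL$ is a derivation of the commutator) gives
$[\cL,W_\ell]=[\cL,[B,P_\ell]]=[[\cL,B],P_\ell]+[B,[\cL,P_\ell]]=0$,
using $[\cL,B]=0$ and $[\cL,P_\ell]=0$. Since a derivation applied to a multiplication operator, and any commutator of matrices, preserve both order zero and tracelessness, $W_\ell$ is a traceless element of $\A\otimes\cS$ lying in the centralizer of $\cL$; equivalently, by~\eqref{aeq}--\eqref{beq} its $(1,2)$-entry $\beta:=(W_\ell)_{12}$ solves the third-order equation~\eqref{beq} and determines $W_\ell$. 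It then remains only to show $W_\ell=0$.

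To finish I would use the scaling grading $\mathrm{wt}$ on $\A\otimes\cS$ defined by $\mathrm{wt}(u_i)=i+2$, $\mathrm{wt}(\lambda)=2$, $\mathrm{wt}(E)=1$, $\mathrm{wt}(F)=-1$, $\mathrm{wt}(\p)=1$, which makes $\cL=\p+\Lambda+q$ homogeneous of weight $1$. The recursions~\eqref{aeq}--\eqref{beq} and~\eqref{mrr1} force $\mathrm{wt}(b_k)=2k+2$ (consistent with $\mathrm{wt}(b_{-1})=0$ and~\eqref{firsttwob}), so $b=\sum_{k\ge-1}b_k\,\lambda^{-1-k}$ has weight $0$ and $R$ is homogeneous of weight $1$; consequently $P_\ell=\lambda^\ell R$ has weight $2\ell+1$, and $V_k$ has weight $2k+1$ (both $(\lambda^k R)_+$ and $b_kF$ do). Moreover $D_k$ raises weight by $2k+1$, since $D_k(u_0)=\frac{u_0^k}{k!}u_1+Q_k$ has weight $2k+3$. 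Hence $W_\ell=D_k(P_\ell)-[V_k,P_\ell]$ is homogeneous of weight $2k+2\ell+2$, which is \emph{even}.

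The key structural input is the description of the centralizer of $\cL$. By Cayley--Hamilton $R^2=\lambda\,I$ (as $\tr R=0$ and $\tr R^2=2\lambda$), so a traceless matrix commuting with $R$ is of the form $f(\lambda)\,R$, and $[\cL,f(\lambda)R]=\p(f)\,R$ vanishes exactly when $\p(f)=0$; that every traceless centralizer element commutes with $R$ follows from the Drinfeld--Sokolov picture $Z(\cL)=e^{\ad_U}Z(\p+\Lambda+H)$ together with the commutativity of $\Ker\,\ad_\Lambda$~\cite{DS}. Each such $f(\lambda)R$ (with constant-coefficient $f\in\CC\bllb\lambda^{-1}\brrb$) is homogeneous of \emph{odd} weight, because $\mathrm{wt}(R)=1$ and $\mathrm{wt}$ is even on $\CC\bllb\lambda^{-1}\brrb$. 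Since $W_\ell$ has even weight, it must vanish, proving~\eqref{dkp}. The main obstacle is precisely this last step: pinning down the centralizer, equivalently verifying that the only solutions of~\eqref{beq} inside $\A\bllb\lambda^{-1}\brrb$ are the $\CC\bllb\lambda^{-1}\brrb$-multiples of $b$ (the remaining solutions of the squared-eigenfunction equation carrying the exponentials $e^{\pm 2\sqrt{\lambda}\,x}$). Granting this, the parity of the weight does the rest.
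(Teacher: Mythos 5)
Your reduction of \eqref{dkp} to the vanishing of $W_\ell:=D_k(P_\ell)-[V_k,P_\ell]$, the observation that $P_\ell=\lambda^\ell R$, and the Jacobi-identity step giving $[\cL,W_\ell]=0$ are all correct, and the weight-parity computation is sound: with $\mathrm{wt}(u_i)=i+2$, $\mathrm{wt}(\lambda)=2$, one checks from \eqref{mrr2} that $b_k$ is homogeneous of weight $2k+2$, so $R$, $P_\ell$, $V_k$ are homogeneous of odd weight and $D_k$ raises weight by $2k+1$, whence $W_\ell$ has even weight while any $\CC\bllb\lambda^{-1}\brrb$-combination of the $\lambda^{-j}R$ has odd weight. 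This is a genuinely different route from the paper's. The paper conjugates \eqref{commop} by $e^{-\ad_U}$, obtains $[D_k+S_k,\p+\Lambda+H]=0$, kills the $\Image\,\ad_\Lambda$-component of $S_k$ by comparing top principal degrees in \eqref{s2i}, and then concludes directly from the commutativity of $\Ker\,\ad_\Lambda$; it never needs the full centralizer of~$\cL$, nor any second grading, and it does not need $S_k$ itself to vanish. Your route needs strictly more input: you must know that the traceless centralizer of $\cL$ in $\A\otimes\cS$ is exactly the set of $\CC\bllb\lambda^{-1}\brrb$-multiples (with finitely many positive powers of $\lambda$) of $R$ --- equivalently, that the only solutions of \eqref{beq} in $\A\bllb\lambda^{-1}\brrb$ are such multiples of $b$ --- and only then does parity finish the job. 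You correctly flag this as the main obstacle and cite \cite{DS} for it; be aware that its proof is essentially the very top-degree argument in the principal gradation that the paper applies to $S_k^{\Image}$ (conjugate by $e^{-\ad_U}$, split along $\Ker\,\ad_\Lambda\oplus\Image\,\ad_\Lambda$, kill the image part degree by degree, and note that the kernel part has constant coefficients since $\p$ annihilates it), or alternatively a leading-coefficient recursion in powers of $\lambda$ for \eqref{beq}. So your argument is correct once that input is supplied, but it does not bypass the paper's key step --- it relocates it into a cited black box and then adds an extra (though elegant) parity argument that the paper's more economical use of the abelianness of $\Ker\,\ad_\Lambda$ renders unnecessary.
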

\begin{proof}  
Applying $e^{-\ad_U}$ to the both sides of~\eqref{commop} yields
$$
\bigl[D_k  + S_k \,, \, \p+ \Lambda+ H \bigr] \= 0 \,,
$$
where $S_k:= \sum_{i\geq 0} \frac{(-1)^i}{(i+1)!} \ad_U^i \bigl(D_k(U)\bigr) - e^{-\ad_U} (V_k)$. 
Clearly, $S_k$ belongs to $\A_u\otimes \cS$. Decompose $S_k=S_k^{\Ker}+S_k^{\Image}$ with 
$S_k^{\Ker}\in \A\otimes\Ker_{{\rm ad}_\Lambda}$,~$S_k^{\Image}\in \A\otimes\Image_{{\rm ad}_\Lambda}$. 
We have 
\begin{align}
& \p \bigl(S_k^{\Ker}\bigr) \=  D_k \bigl(H\bigr)  \,, \label{s1k}\\
& \p \bigl(S_k^{\Image}\bigr) \= \Bigl[S_k^{\Image} \,,\,\Lambda+H\Bigr] \,.\label{s2i}
\end{align}
Equation \eqref{s2i} implies that $S_k^{\Image}$ must vanish. Otherwise, write $S_k^{\Image}=\sum_{j\leq m} S_k^{\Image,[j]}$ for some $m$ with $S_k^{\Image,[m]}$ is not 
zero. Then since $\deg H<0$, the highest (principal) degree equation of \eqref{s2i} reads $[S_k^{\Image,[m]}, \Lambda]=0$. This implies that $S_k^{\Image,[m]}=0$, which produces 
a contradiction. Therefore $S_k$ belongs to $\A\otimes\Ker \, {\rm ad}_\Lambda$. This implies the idenity 
$$
\bigl[ \,D_k + S_k\, , \, \Lambda_{1+2\ell} \, \bigr] \=0\,.
$$
Applying $e^{\ad_{U}}$ to this identity yields~\eqref{dkp}.
\end{proof}

Let us proceed to the tau-structure. We first verify that $\Omega_{p,q}$ are well-defined from eq.~\eqref{tauom}. Indeed, 
$$
R(\mu) \= R(\lambda) \+ R'(\lambda) (\mu-\lambda) \+ (\mu-\lambda)^2 \, \p_\lambda \biggl(\frac{R(\lambda)-R(\mu)}{\lambda-\mu}\biggr) \,,
$$
so we have 
$$
\mbox{RHS of}~\eqref{tauom} \= \frac{2\lambda}{(\lambda-\mu)^2} \,-\, \frac{\tr \bigl (R(\lambda) R'(\lambda)\bigr)}{\lambda-\mu} 
\+ \tr \, \biggl( R(\lambda) \, \p_\lambda \Bigl(\frac{R(\lambda)-R(\mu)}{\lambda-\mu}\Bigr)\biggr) \,-\, \frac{\lambda+\mu}{(\lambda-\mu)^2} \,.
$$
It is obvious from the property 
${\rm tr} \, R(\lambda)^2 = 2 \lambda$ that $\tr \bigl (R(\lambda) R'(\lambda)\bigr)=1$. 
Therefore $\mbox{RHS of}~\eqref{tauom}=\tr \, \Bigl( R(\lambda) \, \p_\lambda 
\Bigl(\frac{R(\lambda)-R(\mu)}{\lambda-\mu}\Bigr)\Bigr)\in
\mathcal{A} \bllm \lambda^{-1},\mu^{-1} \brrm \, \lambda^{-1}\mu^{-1} $.
This finishes the verification.

Introduce $\nabla(\lambda) \,= \, - \,  \sum_{k\geq 0} \frac{(1+2k)!!}{\lambda^{1+k}} D_k$.
\begin{lemma}\label{lemmanablaR} The following identity holds true:
\beq
\nabla(\lambda) \bigl (R(\mu)\bigr) \= \frac{[R(\lambda),R(\mu)]}{\lambda-\mu}  \,-\, \bigl[Q(\lambda),R(\mu)\bigr]\,, \quad 
Q(\lambda):= \begin{pmatrix} 0 & 0 \\ b(\lambda) & 0 \end{pmatrix} \,. 
\eeq
\end{lemma}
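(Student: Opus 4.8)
The plan is to reduce the identity to a single commutator by resumming the series that defines $\nabla(\lambda)$. First I would record that formula~\eqref{dkp} with $\ell=0$ reads $D_k\bigl(R(\lambda)\bigr)=\bigl[V_k,R(\lambda)\bigr]$, since $P_0=e^{\ad_U}(\Lambda_1)=e^{\ad_U}(\Lambda)=R(\lambda)$. This is an identity of formal series in the spectral parameter with coefficients in~$\A$, and $D_k$ differentiates only the~$u_i$, so I may freely rename the spectral variable and write $D_k\bigl(R(\mu)\bigr)=\bigl[V_k(\mu),R(\mu)\bigr]$, where $V_k(\mu)$ denotes $V_k$ with $\lambda$ replaced by~$\mu$. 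Substituting this into $\nabla(\lambda)=-\sum_{k\ge0}\frac{(1+2k)!!}{\lambda^{1+k}}D_k$ and factoring the common matrix $R(\mu)$ out of every commutator turns the claim into
\[
\nabla(\lambda)\bigl(R(\mu)\bigr)\=\bigl[\,\mathcal V(\lambda,\mu)\,,\,R(\mu)\,\bigr]\,,\qquad
\mathcal V(\lambda,\mu)\:=-\sum_{k\ge0}\frac{(1+2k)!!}{\lambda^{1+k}}\,V_k(\mu)\,,
\]
so the whole problem reduces to evaluating $\mathcal V(\lambda,\mu)$ in closed form.

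The second step is this evaluation. In the definition of~$V_k(\mu)$ the double factorials cancel against the prefactor, leaving the difference of two sums. For the $\bigl(\mu^kR(\mu)\bigr)_+$ part I would write $R(\mu)=\sum_{n\ge-1}r_n\mu^{-n}$ (so that $r_{-1}=F$ and $\sum_{n\ge0}r_n\lambda^{-n}=R(\lambda)-\lambda F$), observe that the truncation $(\cdot)_+$ retains exactly the terms with $n\le k$, interchange the order of summation, and apply the geometric series $\sum_{k\ge n}\mu^{k-n}/\lambda^{1+k}=1/\bigl(\lambda^{n}(\lambda-\mu)\bigr)$ for $n\ge0$, together with the separate value $\mu/(\lambda-\mu)$ coming from the boundary index $n=-1$. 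This collapses the first sum to $R(\lambda)/(\lambda-\mu)-F$. For the remaining part I would use $\sum_{k\ge0}b_k/\lambda^{1+k}=b(\lambda)-1$, which identifies that sum with $Q(\lambda)-F$. Since the two sums enter $\mathcal V$ with opposite signs, the stray $F$-terms cancel and
\[
\mathcal V(\lambda,\mu)\=\frac{R(\lambda)}{\lambda-\mu}\,-\,Q(\lambda)\,.
\]

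Finally, inserting this into the bracket gives
\[
\nabla(\lambda)\bigl(R(\mu)\bigr)\=\Bigl[\tfrac{R(\lambda)}{\lambda-\mu}-Q(\lambda)\,,\,R(\mu)\Bigr]\=\frac{\bigl[R(\lambda),R(\mu)\bigr]}{\lambda-\mu}\,-\,\bigl[Q(\lambda),R(\mu)\bigr]\,,
\]
which is the assertion. The step I expect to be the only real obstacle is the bookkeeping in the first resummation: the truncation $(\cdot)_+$ forces the boundary index $n=-1$ to contribute a geometric series whose value $\mu/(\lambda-\mu)$ differs by exactly~$1$ from the naive $\lambda/(\lambda-\mu)$, and it is precisely this unit discrepancy, balanced against the $-1$ produced by $b_{-1}=1$ in the second sum, that makes the two extraneous $F$-terms cancel and yields the clean answer. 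Everything else is a routine manipulation of geometric series in~$\mu/\lambda$.
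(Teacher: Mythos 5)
Your proof is correct and follows essentially the same route as the paper's: both rest on equation~\eqref{dkp} with $\ell=0$ (giving $D_k(R(\mu))=[V_k(\mu),R(\mu)]$) and then resum the generating series of the $V_k$ to the closed form $\frac{R(\lambda)}{\lambda-\mu}-Q(\lambda)$. The only difference is presentational: the paper performs the resummation of $\sum_k(\mu^kR(\mu))_+/\lambda^{k+1}$ via a formal contour integral $\frac1{2\pi i}\oint_{|\mu|<|\rho|<|\lambda|}\frac{R(\rho)\,d\rho}{(\lambda-\rho)(\rho-\mu)}$, while you do it by interchanging sums and using geometric series, correctly isolating the boundary contribution of the $\lambda F$ term that cancels against the $b_{-1}=1$ contribution.
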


\begin{proof} Using equation~\eqref{dkp} with $\ell=0$ we have
\begin{align}
\nabla(\lambda) \bigl (R(\mu)\bigr) 
& \= - \, \sum_{k\geq 0} \frac{1}{\lambda^{1+k}} \biggl[ \, - \bigl(\mu^k R(\mu)\bigr)_+ + \begin{pmatrix} 0 & 0 \\ b_k & 0 \end{pmatrix} \,,\, R(\mu) \, \biggr] \nn\\
& \=  \frac1{2\pi i} \oint_{|\mu|<|\rho|<|\lambda|} d\rho \frac{\bigl[R(\rho), R(\mu)\bigr]}{(\lambda-\rho)(\rho-\mu)} 
\,-\, \biggl[ \, \begin{pmatrix} 0 & 0 \\ b(\lambda)-1 & 0 \end{pmatrix} \,,\, R(\mu) \, \biggr]  \nn\\
& \= \frac{[R(\lambda),R(\mu)]}{\lambda-\mu}   \,-\, \Bigl[ \, {\rm Coef}_{\lambda^1} R(\lambda) \,, \, R(\mu) \, \Bigr] 
\,-\, \biggl[ \, \begin{pmatrix} 0 & 0 \\ b(\lambda)-1 & 0 \end{pmatrix} \,,\, R(\mu) \, \biggr] \,. \nn
\end{align} The lemma is proved.
\end{proof}

\noindent {\it Proof} of Proposition \ref{taulemma}.  The fact that $\Omega_{0,0}=u$ can be obtained from a straightforward residue computation. 
Since the RHS of \eqref{tauom} is invariant with respect to $\lambda\leftrightarrow\mu$, we have $\Omega_{p,q}=\Omega_{q,p}$.  
Applying $\nabla(\nu)$ to the both sides of \eqref{tauom} we obtain
\begin{align}
\sum_{p,q,r\geq 0} D_r\bigl(\Omega_{p,q}\bigr) \frac{(1+2p)!!(1+2q)!!(1+2r)!!}{\lambda^{p+1} \mu^{q+1} \nu^{r+1} } \=  &
\frac{{\rm Tr} \,[R(\nu),R(\lambda)] R (\mu)}{(\lambda-\mu)^2(\nu-\lambda)} \+ \frac{{\rm Tr} \, R(\lambda) \, [R(\nu), R (\mu)]}{(\lambda-\mu)^2(\nu-\mu)}  \nn\\
&   \, - \, \frac{{\rm Tr} \,[Q(\nu),R(\lambda)] R (\mu)}{(\lambda-\mu)^2} \,-\, \frac{{\rm Tr} \, R(\lambda) \, [Q(\nu), R (\mu)]}{(\lambda-\mu)^2} \nn\\
 \= &  - \frac{{\rm Tr} \,[R(\nu),R(\lambda)] R (\mu) }{(\lambda-\mu)(\mu-\nu)(\nu-\lambda)} \,.\nn
\end{align}
The RHS of this identity is invariant with respect to the permutations of $\lambda,\mu,\nu$, so is the LHS. This implies that 
$D_r\bigl(\Omega_{p,q}\bigr) = D_p\bigl(\Omega_{q,r}\bigr) = D_q\bigl(\Omega_{r,p}\bigr)$. 
For $n=3$, the statement is already proven in the above proof of Proposition~\ref{taulemma}. 
Formula~\eqref{thmformula} can be proved by induction; details can be found in \cite{BDY1,BDY3} and are omitted here.  The proposition is proved. 
\epf

\begin{cor}\label{integrableascor}
The $D_k$~$(k\geq 0)$ (defined in Definition~\ref{dkdefi}) commute pairwise. Moreover, we have $Q_0=0$, $Q_1=u_3/12$, and $Q_k\in \A^{\geq 2}$. 
\end{cor}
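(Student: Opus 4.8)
The plan is to treat the two assertions—pairwise commutativity and the stated form of the $Q_k$—separately, the first using identities already extracted in the proof of Proposition~\ref{taulemma}, the second using the recursive data for $b$. For commutativity, recall that the proof of Proposition~\ref{taulemma} produced, \emph{before} any commutativity of the $D_k$ is invoked, the relation
\[
D_r\bigl(\Omega_{p,q}\bigr) \= D_p\bigl(\Omega_{q,r}\bigr) \= D_q\bigl(\Omega_{r,p}\bigr) \qquad (p,q,r\geq0)
\]
together with $\Omega_{0,0}=u_0$ and $\Omega_{p,q}=\Omega_{q,p}$. Specializing $p=q=0$ in $D_r(\Omega_{p,q})=D_q(\Omega_{r,p})$ and using $D_0=\p$ gives $D_r(u_0)=\p(\Omega_{0,r})$. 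I would then compute, for arbitrary $k,l\geq0$,
\[
D_kD_l(u_0) \= D_k\,\p(\Omega_{0,l}) \= \p\,D_k(\Omega_{0,l}) \= \p\,D_l(\Omega_{0,k}) \= D_l\,\p(\Omega_{0,k}) \= D_lD_k(u_0),
\]
where the central step is the symmetry $D_k(\Omega_{0,l})=D_l(\Omega_{0,k})$ (again $D_r(\Omega_{p,q})=D_q(\Omega_{r,p})$ with $r=k$, $p=0$, $q=l$) and the outer steps use $[D_k,\p]=[D_l,\p]=0$ from~\eqref{KdVformal}. Hence $[D_k,D_l](u_0)=0$; since $[D_k,D_l]$ also commutes with $\p=D_0$, it annihilates every $u_i=\p^i(u_0)$ and therefore vanishes on $\A$.

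For $Q_0$ and $Q_1$ I would substitute $b_0=u_0$ and $b_1=\tfrac32u_0^2+\tfrac14u_2$ from~\eqref{firsttwob} into the definition~\eqref{defibQ}, obtaining at once $Q_0=\p(u_0)-u_1=0$ and $Q_1=\tfrac13\p(b_1)-u_0u_1=u_3/12$. For the membership $Q_k\in\A^{\ge2}$ (grading $\deg u_i=i$) it suffices to show that the homogeneous components of $Q_k$ of degrees $0$ and $1$ both vanish. As $\p$ raises this grading by one, $\p(b_k)$ has no degree-$0$ part, while $\tfrac{u_0^k}{k!}u_1$ is homogeneous of degree $1$; hence the degree-$0$ part of $Q_k$ is automatically zero.

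For the degree-$1$ part I would isolate the degree-$0$ component $\bar b=\sum_{k\ge-1}\bar b_k\,\lambda^{-k-1}$ of $b$ (its pure powers of $u_0$), since the degree-$1$ part of $\p(b_k)$ equals $\p(\bar b_k)$. Taking the degree-$0$ component of the quadratic identity~\eqref{essential} kills every term containing $\p(b)$ or $\p^2(b)$ (these lie in degree $\ge1$), leaving $-2(\lambda-2u_0)\,\bar b^2=-2\lambda$, so $\bar b=(1-2u_0/\lambda)^{-1/2}$ and thus $\bar b_k=\tfrac{(2k+1)!!}{(k+1)!}\,u_0^{k+1}$. Then $\p(\bar b_k)=\tfrac{(2k+1)!!}{k!}\,u_0^ku_1$, and the degree-$1$ component of $Q_k=\tfrac{\p(b_k)}{(2k+1)!!}-\tfrac{u_0^k}{k!}u_1$ is $\tfrac1{k!}u_0^ku_1-\tfrac1{k!}u_0^ku_1=0$, as required. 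The commutativity is essentially free once the tau-structure symmetry is granted, and the degree-$0$ vanishing is formal; the one genuinely computational point is the closed form $\bar b=(1-2u_0/\lambda)^{-1/2}$, after which the cancellation against $\tfrac{u_0^k}{k!}u_1$ is immediate.
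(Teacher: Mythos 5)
Your proof is correct and follows essentially the same route as the paper: commutativity is deduced from the tau-structure symmetry $D_r(\Omega_{p,q})=D_q(\Omega_{p,r})$ applied to $\Omega_{0,0}=u_0$ together with $[D_k,\p]=0$, and $Q_k\in\A^{\ge2}$ is reduced to the fact that the degree-zero part of $b_k$ equals $\frac{(2k+1)!!}{(k+1)!}u_0^{k+1}$. The only (cosmetic) difference is that you extract this formula from the degree-zero part of the quadratic identity~\eqref{essential}, giving the closed form $(1-2u_0/\lambda)^{-1/2}$, whereas the paper obtains it from the recursion~\eqref{mrr2} specialized at $u_i=0$ for $i\ge1$.
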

\begin{proof}
We have for all $k,\ell\geq 0$, 
$$
D_k D_\ell \bigl(\Omega_{0;0}\bigr) \= D_k D_0 \bigl(\Omega_{\ell;0}\bigr)
\= D_0 D_k \bigl(\Omega_{\ell;0}\bigr) \= D_0 D_0  \bigl(\Omega_{k;\ell}\bigr)  \= D_\ell D_k \bigl(\Omega_{0;0}\bigr)\,.
$$
Therefore $[D_k,D_\ell]=0$. 
The fact that $Q_0=0$ and $Q_1=u_3/12$ can be easily seen from \eqref{firsttwob} and~\eqref{defibQ}. 
Define $B_k=b_k|_{u_i=0, \, i\geq 1}$. Then $B_{-1}=1$. 
Equation~\eqref{mrr2} implies that 
$$
B_k \= \sum_{k_1,k_2\geq -1\atop k_1+k_2= k-2}    u_0 \, B_{k_1} B_{k_2} \,-\,  \frac12\sum_{k_1\geq 0, \, k_2\geq -1\atop k_1+k_2= k-2} B_{k_1} B_{k_2+1}  \,, \quad k\geq 0 \,.
$$
It follows from this recursion that $B_k= \frac{(1+2k)!!}{(k+1)!} u_0^{k+1}$ for all $k\geq 0$, which implies $Q_k\in \A^{\geq 2}$. 
The corollary is proved. 
\end{proof}

Corollary~\ref{integrableascor} and the uniqueness statement proved in~\cite{LZ} imply that 
the $D_k$ defined in Definition~\ref{dkdefi} coincide with the ones defined in Section~\ref{section11}. 
Our definition of tau-function (see Section~\ref{MRI}) of any solution also 
coincides with the one in the literature \cite{DJKM,DKJM,DZ-norm}, which can be seen by observing that $$\Omega_{p,q}^{[0]} \= \frac{u^{p+q+1}}{p! \, q! \, (p+q+1)} \,, $$
where $\Omega_{p,q}^{[0]}$ denotes the degree zero part of~$\Omega_{p,q}$.

\section{From a pair of wave functions to the $n$-point functions}\label{section3}
Denote by $L=\p^2 + 2u_0:\A \rightarrow \A$ the Lax operator for the KdV hierarchy. Let us first briefly recall the 
pseudo-differential operator approach~\cite{Dickey} to the study of the KdV equation.  Notations will be the same as in Section~\ref{Introwave}. Denote 
$A_k = \frac{1}{(2k+1)!!} \bigl(L^{\frac{2k+1}2}\bigr)_+$~($k\geq 0$). 
We recall that $A_k$ is a differential operator with coefficients in~$\A$.
Define 
a family of derivations $D_k$ by
$D_k (u_0) := \bigl[A_k, L / 2\bigr]$.
It is straightforward to show (by using the properties of pseudo-differential operators) 
that these $D_k$ commute pairwise with $D_0=\p$, $D_1(u_0)= u_0 u_1 + u_3/12$ and $D_k(u_0)=u_0^k u_1/k! + \cdots$, where the dots denotes terms in $\A^{\geq 2}$. 
So these $D_k$ coincide with the 
ones defined in~Section~\ref{section11}, 
and the KdV hierarchy can be written equivalently as 
\beq
\frac{\p L}{\p t_k} \=   
\bigl[A_k, L \bigr] \,  \label{KdVLA}
\eeq
with $\p$ replaced by $\p_x$ and $u_i$ by $\p_x^i(u)$. 
In the rest of this section, we give a way of constructing a pair of wave functions of an arbitrary solution, 
and apply it as well as Corollary~\ref{cor2} to prove Theorem~\ref{wavethm}. 

\subsection{Dressing operators and pairs of wave functions}  
Similarly as in Section~\ref{Introwave}, we start with the {\it time-independent} case.  Namely, consider $L=\p^2+2f(x)$ 
with arbitrarily given fixed $f(x)\in V$. 
Here we recall that $V$ is a ring of functions of~$x$ closed under $\p_x$.
A pseudo-differential operator $\Phi$ of the form $\Phi=\sum_{m\geq 0} \phi_m(x) \, \p_x^{-m}$ with $\phi_0(x)\equiv 1$
is called a {\it dressing operator} of~$L$ if
$$
\Phi(x) \circ \p_x^2 \circ \Phi(x)^{-1} \= L \,. 
$$ 
Here $\phi_m(x)$~($m\geq1$) will live in a certain ring $\wt V$ satisfying $V\subseteq \p_x \bigl(\wt V\bigr)\subseteq \wt V$. 

If $\Phi(x)$ exists then the  
freedom of~$\Phi(x)$ is given by an operator of the form 
$\sum_{m\geq 0} g_m \, \p_x^{-m}$ for arbitrary 
constants $g_m$ with $g_0=1$ through the right composition 
\beq\label{freedomg}
\Phi(x)  \quad \mapsto \quad \Phi(x) \circ \sum_{m\geq 0} g_m \, \p_x^{-m} \,. 
\eeq
Recall that for pseudo-differential operators~\cite{Dickey}, most operations (e.g., composition) 
only used differentiations in~$x$, namely, for $a\in \CC\smallsetminus\NN$ 
we usually think of~$\p_x^a$ as a formal symbol and 
do not actually need to consider how it acts on a function of~$x$. 
But, following~\cite{Dickey}, for all $p(x)\in\wt V$ we can define $\p_x^{-m} \bigl(p(x) \, e^{xz}\bigr)$  
for any $m\in \CC$ as $\sum_{l\geq 0} \binom{-m}{l} \p_x^l(p(x)) \, z^{-m-l}\, e^{xz}$.
If $\Phi=\Phi(x)=\sum_{m\geq 0} \phi_m(x) \, \p_x^{-m}$ is 
a dressing operator of~$L$, 
then it is straightforward to check that 
$\psi := \Phi(x) \bigl(e^{xz}\bigr) $ is in $\wt V \bllb z^{-1}\brrb\, e^{xz}$ and is a wave function of~$f$ (cf. Section~\ref{Introwave}). 
Explicitly, $\psi =  \sum_{m\geq 0} \phi_m(x) \, z^{-m} \, e^{xz} $. 
On the contrary, if $\psi =  \sum_{m\geq 0} \phi_m(x) \, z^{-m} \, e^{xz} $ is a wave function of~$f$, 
then $\Phi:=\sum_{m\geq 0} \phi_m(x) \, \p_x^{-m}$
is a dressing operator of~$L$.  This gives the 1-1 correspondence between the dressing operators of~$L$ and the wave functions of~$f$.  
Also, the freedom of~$\Phi$ generated by the right composition~\eqref{freedomg} matches to 
$\psi\mapsto g(z) \, \psi$ with $g(z)=\sum_{m\geq0} g_m \, z^{-m}$. 

Let us give a proof of the existence of a wave function. Write
\beq\label{psibeta}
\psi(z,x) \= e^{ \beta(z,x) } \, e^{xz} \,, \qquad \beta(z,x) \:= \sum_{ j \geq 1} \beta_j(x) \, z^{-j}  \,. 
\eeq
The defining equation $L(\psi)=z^2 \psi$ for the wave function $\psi$ written in terms of~$\beta$ is given by
\beq
\beta_{xx} \+ \beta_x^2   \+ 2  \, z \,   \beta_x    \+ 2f  \= 0 \,. 
\eeq
Substituting the expression~\eqref{psibeta} of~$\beta$ in this equation we obtain
\beq\label{betarec}
 \beta_{j}'(x) \= - \frac12 \, \beta_{j-1}''(x)  \,-\, \frac12 \sum_{ j_1,j_2 \geq 1 \atop j_1+j_2 = j-1} \beta_{j_1}'(x)  \beta_{j_2}'(x)  \,-\, f  \, \delta_{j,1} \,,
\eeq
where $\beta_0(x)$ is defined to as the $0$-function and ${}'=d/dx$. Clearly, the recursion~\eqref{betarec}
uniquely determines $\beta_j'(x)$ for all $j\geq 1$ and $\beta_j'(x)\in V$. 
Since $V\subseteq \p_x \bigl(\wt V\bigr)$, we conclude that the solution $\psi$ of the form~\eqref{psibeta} to the equation $L(\psi)=z^2 \psi$ exists with  
$\beta_j(x) \in \wt V$. Expanding the exponential $e^{\beta(z,x)}$ into Taylor series of~$z^{-1}$ we obtain finally the required form 
$\psi =  \sum_{m\geq 0} \phi_m(x) \, z^{-m} \, e^{xz}$ with $\phi_0(x)\equiv 1$ and $\phi_m(x)\in \wt V$ for a wave function. 

Now let $\psi$ be a wave function of~$f$ and $\Phi$ the (unique) dressing operator of~$L$ associated to~$\psi$.  
Define a particular element $\psi^*$ in $\wt V \bllb z^{-1}\brrb\, e^{-xz}$ by
\beq\label{defpsistar}
\psi^* \= \psi^*(z,x) \:= \bigl(\Phi^{-1}\bigr)^* \, \bigl(e^{-xz}\bigr) \,, 
\eeq
where $\bigl(\Phi^{-1}\bigr)^*$ denotes the formal adjoint operator of~$\Phi^{-1}$.
For any $y_1 = a_1 \, e^{xz} \in \wt V \bllb z^{-1}\brrb\, e^{xz}$ and $\tilde y_1 = b_1 \, e^{xz} \in \wt V \bllb z^{-1}\brrb\, e^{-xz}$, define  
$y_1 \tilde y_1 := a_1 b_1 \in \wt V \bllb z^{-1}\brrb $. 

\begin{lemma} 
The elements $\psi$ and $\psi^*$ form a pair of wave functions of~$f$. 
\end{lemma}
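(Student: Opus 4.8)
The plan is to verify that the element $\psi^* = (\Phi^{-1})^*(e^{-xz})$ meets the three requirements in the definition of the dual wave function associated with $\psi$: that it lies in $\wt V\bllb z^{-1}\brrb\,e^{-xz}$ in the normalized form $(1+\phi_1^*/z+\cdots)\,e^{-xz}$, that it satisfies $L(\psi^*)=z^2\psi^*$, and that $\p_x^i(\psi)\,\psi^*$ has residue $0$ at $z=\infty$ for every $i\geq0$. The one structural fact underpinning everything is that $L=\p_x^2+2f$ is formally self-adjoint, $L^*=L$, together with the elementary identity $(\Phi^{-1})^*=(\Phi^*)^{-1}$ obtained by taking adjoints of $\Phi\circ\Phi^{-1}=1$.

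First I would record the dual dressing relation. Taking the formal adjoint of $\Phi\circ\p_x^2\circ\Phi^{-1}=L$ and using $(\p_x^2)^*=\p_x^2$ and $L^*=L$ gives $(\Phi^{-1})^*\circ\p_x^2\circ\Phi^*=L$. The normalization is then immediate: since $\Phi^{-1}=1+\sum_{m\geq1}\wt\phi_m\,\p_x^{-m}$ has leading coefficient $1$, its adjoint $(\Phi^{-1})^*=1+\sum_{m\geq1}(-1)^m\p_x^{-m}\circ\wt\phi_m$ does as well, and applying it to $e^{-xz}$ produces $\psi^*=(1+\phi_1^*/z+\cdots)\,e^{-xz}$ with coefficients in $\wt V$. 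For the eigenvalue equation I would simply compute, using $\Phi^*\circ(\Phi^{-1})^*=1$,
\[
L(\psi^*)=(\Phi^{-1})^*\circ\p_x^2\circ\Phi^*\circ(\Phi^{-1})^*(e^{-xz})=(\Phi^{-1})^*\bigl(\p_x^2\,e^{-xz}\bigr)=z^2\,(\Phi^{-1})^*(e^{-xz})=z^2\psi^*.
\]

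The heart of the matter is the residue condition, and this is where I expect the real work to lie. The key is the classical adjoint–residue identity: for pseudo-differential operators $P,Q$ one has $\res_{z=\infty}\bigl[(P\,e^{xz})(Q^*\,e^{-xz})\bigr]=\res_\p(PQ)$, where $\res_\p$ extracts the coefficient of $\p_x^{-1}$. I would prove this by reducing to monomials $P=a\,\p_x^m$, $Q=b\,\p_x^n$, where both sides turn out to equal $\binom{m}{m+n+1}\,a\,b^{(m+n+1)}$ after the binomial reindexing $\binom{n}{k}=(-1)^k\binom{k-n-1}{k}$, and then extending by bilinearity and continuity in the order filtration. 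Applying this with $P=\p_x^i\circ\Phi$ (so that $P(e^{xz})=\p_x^i(\psi)$) and $Q=\Phi^{-1}$ (so that $Q^*(e^{-xz})=\psi^*$) gives
\[
\res_{z=\infty}\bigl[\p_x^i(\psi)\,\psi^*\bigr]=\res_\p\bigl(\p_x^i\circ\Phi\circ\Phi^{-1}\bigr)=\res_\p\bigl(\p_x^i\bigr)=0\qquad(i\geq0),
\]
since $\p_x^i$ has no $\p_x^{-1}$ term for $i\geq0$. This establishes all three properties.

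Finally, to justify that the constructed $\psi^*$ is genuinely \emph{the} dual wave function, so that $(\psi,\psi^*)$ is a pair, I would note uniqueness: writing $\psi^*=(\sum_{m\geq0}\phi_m^*\,z^{-m})\,e^{-xz}$ with $\phi_0^*=1$, the residue conditions for $i=0,1,2,\dots$ determine $\phi_1^*,\phi_2^*,\dots$ recursively (the condition at level $i$ fixes $\phi_{i+1}^*$), so no other normalized solution of $L(\eta)=z^2\eta$ can satisfy them. The only genuine obstacle is the adjoint–residue identity above; once that is in hand, the remaining steps are formal manipulations with the dressing operator and its adjoint.
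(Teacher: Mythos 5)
Your proposal is correct and follows essentially the same route as the paper: the normalized form and the eigenvalue equation are routine consequences of the dressing relation, and the residue condition is obtained from the adjoint--residue identity $\res_{z=\infty}P(e^{xz})\,Q^*(e^{-xz})\,dz=\res_{\p_x}(P\circ Q)$ applied with $P=\p_x^i\circ\Phi$ and $Q=\Phi^{-1}$, exactly as in the paper. The extra details you supply (the monomial verification of the residue identity and the recursive uniqueness of $\psi^*$) are correct but only elaborate on steps the paper states without proof.
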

\begin{proof}
It is easy to check that $\psi^*$ is indeed of the form
$\psi^* = \bigl(1+ \phi_1^*(x)/z+\phi_2^*(x)/z^2 + \cdots\bigr) \,e^{-xz}$ with $\phi^*(x)\in \wt V$
and satisfies $L (\psi^*) = z^2 \psi^*$.  We are left to show that for all $i\geq 0$, 
$$
\res_{z=\infty} \, \p_x^i\bigl(\psi(z,x)\bigr) \, \psi^*(z,x) \, dz \= 0\,. 
$$
Note that for any two pseudo-differential operators $P,Q$ we have
\beq
\res_{z=\infty} \, P \bigl(e^{xz}\bigr) \, Q \bigl(e^{-xz}\bigr) \, dz  \= \res_{\p_x} \,  P \circ Q^* \,.
\eeq
Taking $P=\p_x^i \circ \Phi$ and $Q=\Phi^{-1}$ in this identity we find that
$$
 \res_{z=\infty} \, \p_x^i\bigl(\psi(z,x)\bigr) \, \psi^*(z,x) \, dz  \= \res_{z=\infty} \, \p_x^i \circ \Phi \bigl(e^{xz}\bigr) \, \bigl(\Phi^{-1}\bigr)^* (e^{-xz}) \, dz \=  \res_{\p_x} \,  \p_x^i \circ \Phi \circ \Phi^{-1} \= 0   \, .
$$
The lemma is proved. 
\end{proof}

Hence we have proved the existence of {\it a pair of wave functions} of~$f$. 

Observe that the above proof is revertible. Namely, the zero-residue condition ensures the uniqueness of the 
dual wave function associated with~$\psi$. It must be the $\psi^*$ defined by~\eqref{defpsistar}.

\begin{lemma}\label{twoid}
We have $\psi_x \psi^* - \psi \psi^*_x = 2z $ and $\psi(z,x) \, \psi^*(z,x) = b(z^2,x)$, with~$b$ as in~\eqref{Rabcd}.
\end{lemma}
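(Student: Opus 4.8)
The plan is to prove the two identities together, extracting everything from the single fact that $\psi$ and $\psi^*$ both solve the Schr\"odinger equation $\psi_{xx}=(z^2-2f)\psi$, $\psi^*_{xx}=(z^2-2f)\psi^*$, supplemented by the zero-residue conditions that define $\psi^*$. Write $W:=\psi_x\psi^*-\psi\psi^*_x$ and $P:=\psi\psi^*$. The first, routine step is to differentiate $W$ in $x$ and substitute the two eigenvalue equations; the terms cancel, so $W_x=0$ and $W$ is a series in $z$ with constant coefficients whose leading term, read off from $\psi=(1+O(z^{-1}))e^{xz}$ and $\psi^*=(1+O(z^{-1}))e^{-xz}$, is $2z$. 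In parallel, a direct computation using the same substitutions gives the first integral
$$ P\,P_{xx}-\tfrac12\,(P_x)^2-2\,(z^2-2f)\,P^2 \= -\tfrac12\,W^2, $$
so that $P$ satisfies the abstract relation \eqref{essential} (with $\lambda=z^2$, $u_0=f$) precisely when $W^2=4z^2$.

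The heart of the matter is therefore to upgrade ``$W=2z+\text{l.o.t.}$'' to the exact identity $W=2z$, and this is the step I expect to be the main obstacle, since a naive term-by-term use of the residue conditions only yields a recursion rather than a clean conclusion. The key observation is that multiplication by $z^2$ acts as the operator $L$, because $z^2\psi=L(\psi)=\psi_{xx}+2f\psi$; hence $z^{2k}\partial_x^i(\psi)=\partial_x^i\bigl(L^k(\psi)\bigr)=\sum_m h_m(x)\,\partial_x^m(\psi)$ with $h_m\in V$, and applying the defining conditions $\res_{z=\infty}\partial_x^m(\psi)\,\psi^*\,dz=0$ term by term gives $\res_{z=\infty} z^{2k}\partial_x^i(\psi)\,\psi^*\,dz=0$ for all $k,i\ge0$. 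Taking $i=0$ shows that every odd-index coefficient of $P$ vanishes, so $P$ is even in $z$; taking $i=0,1$ together with $\psi\psi^*_x=\partial_x(P)-\psi_x\psi^*$ gives $\res_{z=\infty} z^{2k}W\,dz=2\,\res_{z=\infty} z^{2k}\psi_x\psi^*\,dz-\partial_x\,\res_{z=\infty}(z^{2k}P)\,dz=0$, i.e. $W$ has no odd negative powers of $z$. Thus $W=2z+E(z)$ with $E$ an even series starting in degree $0$.

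To finish the first identity I would feed this parity information back into the first integral: since $P$ is even in $z$, the left-hand side above is even, hence so is $W^2=(2z+E)^2=4z^2+4zE+E^2$; as $4zE$ is the only odd contribution it must vanish, forcing $E=0$ and $W=2z$. With $W=2z$ in hand, $-\tfrac12W^2=-2z^2$, so $P$ satisfies \eqref{essential} exactly; moreover $P$ is a product of two solutions of $y_{xx}=(z^2-2f)y$ and so also satisfies the third-order equation \eqref{beq}. Being an even series in $z$ with constant term $1$, its coefficients (indexed by $\lambda=z^2$) therefore obey the matrix-resolvent recursion \eqref{mrr2} with the normalization $b_{-1}=1$; by the uniqueness already established in Section~\ref{s22} this forces $P=b(z^2,x)$, which is the second identity. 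The only genuinely delicate point is the passage from ``$W=2z+\text{l.o.t.}$'' to ``$W=2z$'', and the parity trick—deducing evenness of $P$ and of $W$ from the residue conditions via the $z^2\leftrightarrow L$ correspondence—is what I would use in place of an induction.
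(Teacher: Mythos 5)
Your proof is correct, and its skeleton (constancy of the Wronskian $W$, the first integral $P\,P_{xx}-\tfrac12 P_x^2-2(z^2-2f)P^2=-\tfrac12 W^2$ for $P=\psi\psi^*$, and the uniqueness of solutions of \eqref{essential} with leading term~$1$) agrees with the paper's. The crucial step, however --- upgrading $W=2z+\mathrm{l.o.t.}$ to the exact identity $W=2z$ --- is handled by a genuinely different mechanism. The paper stays inside the dressing-operator calculus: writing $\psi=\Phi(e^{xz})$ and $\psi^*=(\Phi^{-1})^*(e^{-xz})$, it evaluates $\res_{z=\infty}z^iW\,dz$ for \emph{every} $i\ge -1$ via the pairing between pseudo-differential operators, reducing each residue to $\res_{\p_x}\bigl(\p_x\circ L^{i/2}+L^{i/2}\circ\p_x\bigr)$, which vanishes because $L$ is a differential operator and $(L^*)^{1/2}=-L^{1/2}$; this kills all the constants $s_k$ at one stroke. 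You instead use only the defining zero-residue conditions of the dual wave function, bootstrapped by the observation that multiplication by $z^2$ acts as $L$, to conclude that $P$ and $W-2z$ contain no odd powers of $z$, and then you extract the remaining even part $E$ from the parity of $W^2$ forced by the first integral (the odd part $4zE$ of $(2z+E)^2$ must vanish since the left-hand side is even once $P$ is). Your route is slightly longer but more elementary --- no formal adjoints or square roots of pseudo-differential operators --- and it applies verbatim to \emph{any} pair of wave functions in the sense of the definition in Section~\ref{Introwave}, not just the one built from a dressing operator; as a bonus, the evenness of $\psi\psi^*$ in $z$, which the paper leaves implicit when matching it against the $\lambda^{-1}$-series $b$, is established explicitly before the uniqueness argument is invoked. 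What the paper's computation buys in exchange is the stronger statement $\res_{z=\infty}z^iW\,dz=0$ for all $i$, odd and even alike, which is why it never needs the parity trick.
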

\begin{proof}
Denote $W=\psi_x \psi^* - \psi \psi^*_x$. We have $W_x =0$. It follows that $W$ must have the form
$W=2z + \sum_{k\geq 0} s_k z^{-k}$, where $s_k$ are constants. For any $i\geq -1$, 
\begin{align}
\res_{z=\infty } \, z^i \, W \,  dz 
& \=  \res_{z=\infty} \Bigl ( \p_x \circ \Phi \circ \p_x^i  (e^{xz})  \, \bigl(\Phi^{-1}\bigr)^* (e^{-xz})  -  (-1)^i  \, \Phi (e^{xz})  \,  \p_x \circ \bigl(\Phi^{-1}\bigr)^* \circ \p_x^i (e^{-xz}) \Bigr) \, dz \nn\\
& \=  \res_{\p_x} \Bigl ( \p_x \circ \Phi \circ \p_x^i  \circ \Phi^{-1}  +   \Phi  \circ \p_x^i  \circ \Phi^{-1} \circ \p_x \Bigr)  \= 
\res_{\p_x} \Bigl ( \p_x \circ L^{\frac{i}2}  +   L^{\frac{i}2} \circ \p_x \Bigr) \= 0\,. \nn
\end{align}
The last equality uses the fact that $L$ is a differential operator and that $(L^*)^{1/2}=-L^{1/2}$. Therefore all $s_k$ vanish. The first identity is proved. 
To prove the second equality we observe that both the LHS and the RHS satisfies the same differential equation~\eqref{essential} (with $u_0$ replaced by $f$ and $\p$ by $\p_x$).
Moreover, they are both formal power series in $z^{-1}$ with the leading term~1. Then the uniqueness statement implies the second identity. The lemma is proved. 
\end{proof}

We note that the identity $\psi(z,x) \, \psi^*(z,x) = b(z^2,x)$ can be used as an 
alternative and possibly useful criterion for a pair of wave functions. 
We should also notice that for an arbitrary solution $u(\bt)$ there is a particular subclass of pairs of wave functions 
of~$u(\bt)$  
with the additional property $\psi(-z,x)=\psi^*(z,x)$, which was often assumed in the literature.

Let us now proceed to the {\it time-dependent} case, i.e. to prove the existence of 
 a pair of wave functions of an arbitrary solution of the KdV hierarchy in the ring $V\llm \bt_{>0}\rrm $.
We recall that for the operators $L=\p^2 + 2u_0$ and $A_k$, once a solution $u=u(\bt)$ is taken, we will 
replace $\p$ by $\p_x$ and $u_i$ by $\p_x^i(u)$. 
\begin{prop} \label{dressingprop}
Let $u=u(\bt)$ be a solution to the KdV hierarchy~\eqref{KdVhk} in the ring 
$V\llm \bt_{>0}\rrm $. There exists a pseudo-differential operator~$\Phi$ of the form 
\beq \label{phimprop}
\Phi \= \Phi(\bt) \= \sum_{m\geq 0}  \phi_m(\bt) \, \p_x^{-m}\,, \qquad \phi_0(\bt)  \, \equiv \, 1
\eeq
such that   
\begin{align}
& L \= \Phi \circ \p_x^2 \circ \Phi^{-1} \,,   \label{dressing1}  \\
&  \p_{t_k} \bigl(\Phi\bigr)  \=  -  \frac1{(2k+1)!!} \bigl(\Phi \circ \p_x^{2k+1} \circ \Phi^{-1} \bigr)_-  \circ \Phi \,, 
 \qquad \forall\, k\geq 0\,.   \label{dressing2}
\end{align}
Here, $t_0=x$ as usual and $\phi_m(\bt)\in \wt V \llm \bt_{>0}\rrm $~$(m\geq 1)$. 
\end{prop}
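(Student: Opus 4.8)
The plan is to build $\Phi$ by combining the time-independent construction already carried out in this section with a formal integration in the positive times. First I would restrict to the slice $\bt_{>0}=\bdzero$. There $L$ reduces to $\p_x^2+2f(x)$, where $f$ is the initial datum of the solution $u$, and the time-independent dressing just established yields an operator $\Phi_0=\sum_{m\geq 0}\phi_m(x)\,\p_x^{-m}$ with $\phi_0\equiv 1$ and $\phi_m\in\wt V$ satisfying $\Phi_0\circ\p_x^2\circ\Phi_0^{-1}=\p_x^2+2f$. This $\Phi_0$ will serve as the initial value of the flow.

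Rather than attack \eqref{dressing2} directly, I would integrate the linear evolution system
\[
\p_{t_k}\Phi \= A_k\circ\Phi \,-\, \frac1{(2k+1)!!}\,\Phi\circ\p_x^{2k+1}\,, \qquad k\geq 0\,,
\]
in which the differential operators $A_k=\frac1{(2k+1)!!}\bigl(L^{\frac{2k+1}2}\bigr)_+$ are read off directly from the given $u(\bt)$ and so are known, and check its equivalence with \eqref{dressing2} only at the end. Comparing coefficients of $\p_x^{-m}$ on the two sides, the highest-index contributions $\phi_{m+2k+1}$ cancel between the two terms on the right, and $\p_{t_k}\phi_m$ is expressed through $x$-derivatives of $u$ and of the $\phi_j$ with $j\le m+2k$, with no integration in~$x$. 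Since $\wt V$ is closed under $\p_x$ and contains $V\ni u$, integrating this system order by order in the degree in $\bt_{>0}$, starting from $\Phi_0$, keeps all coefficients in $\wt V\llm\bt_{>0}\rrm$ and determines $\Phi$ uniquely.

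The crux, and what I expect to be the main obstacle, is that this formal integration is legitimate only if the system is compatible, i.e. $[\p_{t_k},\p_{t_l}]\Phi=0$. Since the operators $\p_x^{2k+1}$ commute, a direct computation reduces this to the Zakharov--Shabat zero-curvature identity $\p_{t_k}A_l-\p_{t_l}A_k-[A_k,A_l]=0$. That identity in turn follows from the hypothesis that $u$ solves the KdV hierarchy: this gives the Lax equations $\p_{t_k}L=[A_k,L]$, hence $\p_{t_k}\bigl(L^{\frac{2l+1}2}\bigr)=[A_k,L^{\frac{2l+1}2}]$ for the fractional powers. Splitting each $L^{\frac{2l+1}2}$ into its differential and negative-order parts and using that the fractional powers of $L$ commute, one finds that the projections onto the differential part reorganize exactly into $[A_k,A_l]$. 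I expect the delicate point to be this pseudodifferential bookkeeping: the projections $(\cdot)_+$ and $(\cdot)_-$ do not commute with the bracket, and one must track orders to see that the negative-order cross terms drop out.

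Finally I would recover \eqref{dressing1} a posteriori. Writing $\wt L:=\Phi\circ\p_x^2\circ\Phi^{-1}$, which is automatically a differential operator of order~$2$ with leading coefficient~$1$ and no first-order term, and differentiating it along $t_k$ by means of the linear system, the two terms $\pm\frac1{(2k+1)!!}\,\Phi\circ\p_x^{2k+3}\circ\Phi^{-1}$ cancel and one is left with $\p_{t_k}\wt L=[A_k,\wt L]$, driven by the same operators $A_k$ that govern $L$ itself. Since $\wt L|_{\bt_{>0}=\bdzero}=\Phi_0\circ\p_x^2\circ\Phi_0^{-1}=L|_{\bt_{>0}=\bdzero}$, and both $\wt L$ and $L$ solve this same linear evolution with the same initial value, uniqueness of the formal-power-series solution forces $\wt L=L$, which is precisely \eqref{dressing1}. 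With \eqref{dressing1} available, $\frac1{(2k+1)!!}\,\Phi\circ\p_x^{2k+1}\circ\Phi^{-1}$ equals $A_k+\frac1{(2k+1)!!}\bigl(L^{\frac{2k+1}2}\bigr)_-$, so the linear system coincides with \eqref{dressing2}, completing the construction.
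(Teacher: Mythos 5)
Your proposal is correct, but it is organized differently from the paper's proof. The paper poses the initial value problem for the \emph{nonlinear} Sato system, i.e.\ for~\eqref{dressing1} and~\eqref{dressing2} simultaneously with initial datum the time-independent dressing operator, and then verifies three compatibilities: that the $k=0$ flow is $\p_x$, that the flows~\eqref{dressing2} preserve~\eqref{dressing1} precisely because $u$ satisfies the Lax equations~\eqref{KdVLA}, and that the flows~\eqref{dressing2} commute among themselves, the last point being the pseudo-differential identity
$\bigl(\bigl[\bigl(L^{\frac{2\ell+1}2}\bigr)_-,L^{\frac{2k+1}2}\bigr]-\bigl[\bigl(L^{\frac{2k+1}2}\bigr)_-,L^{\frac{2\ell+1}2}\bigr]+\bigl[\bigl(L^{\frac{2k+1}2}\bigr)_-,\bigl(L^{\frac{2\ell+1}2}\bigr)_-\bigr]\bigr)_-=0$.
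You instead integrate the \emph{linear} system $\p_{t_k}\Phi=A_k\circ\Phi-\frac1{(2k+1)!!}\Phi\circ\p_x^{2k+1}$, whose coefficients are known from $u$ alone, reduce its compatibility to the Zakharov--Shabat identity for the $A_k$ (which is the same pseudo-differential bookkeeping in a different guise: both arguments come down to $[L^{\frac{2k+1}2},L^{\frac{2\ell+1}2}]=0$ together with tracking the $(\cdot)_\pm$ projections), and only afterwards recover~\eqref{dressing1} by observing that $\wt L:=\Phi\circ\p_x^2\circ\Phi^{-1}$ and $L$ obey the same linear evolution $\p_{t_k}(\cdot)=[A_k,\cdot\,]$ with the same initial value; \eqref{dressing2} then drops out by rewriting the linear system using $\Phi\circ\p_x^{2k+1}\circ\Phi^{-1}=L^{\frac{2k+1}2}$. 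Your route has the advantage that the evolution being integrated has no $\Phi^{-1}$ on the right-hand side, so the order-by-order determination of the $\phi_m$ (with the top coefficients $\phi_{m+2k+1}$ cancelling, as you note) is manifestly a formal Taylor integration requiring no antiderivatives in~$x$, and it cleanly separates the existence of~$\Phi$ from the statement that it dresses~$L$; the paper's route is shorter once the three compatibility computations are accepted, since \eqref{dressing1} and~\eqref{dressing2} are built in from the start rather than recovered at the end. Both arguments use the hypothesis that $u$ solves the hierarchy in exactly one place, namely to pass from the Lax equations to the relevant compatibility identity.
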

\begin{proof}
Let $f(x):=u(x,\bdzero)$ be the initial data of the solution. According to the above time-independent theory, we can take a
 pseudo-differential operator $\Phi(x)$ of the form $\Phi(x) = \sum_{m=0}^\infty \phi_m(x) \, \p_x^{-m}$ 
 with $\phi_0(x)\equiv 1$ and $\phi_m(x)\in \wt V$~$(m\geq 1)$, such that 
$ \p_x^2 \+  2f(x) \= \Phi(x) \circ \p_x^2 \circ \Phi(x)^{-1} $.

Now consider the initial value problem for~$\Phi(\bt)$ given by equations~\eqref{dressing1} and~\eqref{dressing2} with the initial data 
$\Phi(x,\bdzero) = \Phi(x)$. 
First let us check the compatibility between \eqref{dressing1} and the $k=0$ case of equations~\eqref{dressing2}.  
We notice that $(\Phi \circ \p_x \circ \Phi^{-1})_-=\bigl(L^{1/2}\bigr)_- = L^{1/2}-\bigl(L^{1/2}\bigr)_+=
\Phi \circ \p_x \circ \Phi^{-1} - \p_x$. Therefore, 
$$\p_{t_0} (\Phi)  \=  - \, \Phi \circ \p_x \+ \p_x \circ \Phi \=  \p_x (\Phi)\,.$$ 
This is compatible with our convention $t_0=x$. Secondly, we will prove the compatibility 
between~\eqref{dressing1} and equations~\eqref{dressing2} with $k\ge1$.  Indeed, 
we have 
\begin{align}
 \frac{\p L}{\p t_k} & \=  \p_{t_k}(\Phi) \circ \p_x^2 \circ \Phi^{-1}  \,-\, \Phi \circ \p_x^2 \circ \Phi^{-1} \circ \p_{t_k}(\Phi)  \circ \Phi^{-1} \nn\\
 &  \= -  \frac1{(2k+1)!!} \, \Bigl(L^{\frac{2k+1}2} \Bigr)_-  \circ L  \+   \frac1{(2k+1)!!} \, L \circ \Bigl(L^{\frac{2k+1}2} \Bigr)_- \=   
 \bigl[A_k, L \bigr] \,.\nn
\end{align}
This is true as $u$ is a solution of the KdV hierarchy (cf.~\eqref{KdVLA}). 
Finally, we check the compatibility between all the equations of~\eqref{dressing2}. Indeed, 
\begin{align}
& \p_{t_\ell} \p_{t_k} (\Phi)  \,-\, \p_{t_k} \p_{t_\ell} (\Phi)  \nn\\
& \quad \=    
 \frac{ \Bigl( \bigl[\bigl(L^{\frac{2\ell+1}2}\bigr)_- , L^{\frac{2k+1}2}\bigr]  \,-\, \bigl[\bigl(L^{\frac{2k+1}2}\bigr)_- , L^{\frac{2\ell+1}2}\bigr]  \+ \bigl[\bigl(L^{\frac{2k+1}2} \bigr)_-  ,  \bigl(L^{\frac{2\ell+1}2}\bigr)_- \bigr]  \Bigr)_-  }{(2k+1)!! \, (2\ell+1)!!} \circ \Phi    \nn\\ 
& \quad \=   \frac{ \Bigl( \bigl[\bigl(L^{\frac{2\ell+1}2}\bigr)_- , L^{\frac{2k+1}2}\bigr]  \,-\, \bigl[L^{\frac{2k+1}2} , \bigl(L^{\frac{2\ell+1}2} \bigr)_+\bigr]   \Bigr)_-  }{(2k+1)!! \, (2\ell+1)!!} \circ \Phi 
 \=   \frac{ \Bigl( \bigl[L^{\frac{2\ell+1}2} , L^{\frac{2k+1}2}\bigr]   \Bigr)_-  }{(2k+1)!! \, (2\ell+1)!!} \circ \Phi  \=  0 \,. \nn 
\end{align}
The proposition is proved. 
\end{proof}

We call $\Phi(\bt)$ in Proposition~\ref{dressingprop} a {\it dressing operator} of~$L=\p_x^2+2u(\bt)$.
It is clear from the proof of Proposition~\ref{dressingprop} that
the freedom of the dressing operator~$\Phi(\bt)$ is characterized by that of~$\Phi(x)$, i.e., by
a sequence of arbitrary constants $g_1,g_2,\dots$ through  
$$
\Phi \quad \mapsto \quad \Phi \circ  \sum_{m = 0}^\infty g_m \, \p_x^{-m} \,, \quad g_0 \= 1\,. 
$$
Similarly as in the time-independent case we will use the dressing operator to 
prove the existence of a pair of wave functions of~$u$. As in~\eqref{phimprop}, write 
$\Phi(\bt) = \sum_{m\geq 0}  \phi_m(\bt) \, \p_x^{-m}$ with $\phi_0(x) = 1$.
Put  $Q = e^{\sum_{k\geq 0} \frac{z^{2k+1}}{(2k+1)!!} t_k}$. 
Define for any $p(\bt)\in \wt V\llm \bt_{>0} \rrm$ and for all $m\in \ZZ$, 
$\p_x^{-m} \Bigl(p(\bt) \, e^{\sum_{k\geq 0} \frac{z^{2k+1}}{(2k+1)!!} t_k}\Bigr) :=  \sum_{l\geq 0} \binom{-m}{l} \p_x^l(p(\bt)) \, z^{-m-l}\, e^{\sum_{k\geq 0} \frac{z^{2k+1}}{(2k+1)!!} t_k}$.
Then one can check that the following element 
\beq
\psi(z,\bt) \:=\Phi(\bt) \, (Q) \= \biggl(1 \+ \frac{\phi_1(\bt)}{z} \+ \frac{\phi_2(\bt)}{z^2} \+  \cdots \biggr) \, e^{\sum_{k\geq 0} \frac{z^{2k+1}}{(2k+1)!!} t_k}
\eeq
is a wave function of~$u$. For any $p(\bt)\in \wt V\llm \bt_{>0} \rrm$ and $m\in \ZZ$, define
$\p_x^{-m} \Bigl(p(\bt) \, e^{ - \sum_{k\geq 0} \frac{z^{2k+1}}{(2k+1)!!} t_k}\Bigr) :=  \sum_{l\geq 0} \binom{-m}{l} \p_x^l(p(\bt)) \, (-z)^{-m-l}\, e^{-\sum_{k\geq 0} \frac{z^{2k+1}}{(2k+1)!!} t_k}$.
For any $y_1 = a_1 \, e^{\sum_{k\geq 0} \frac{z^{2k+1}}{(2k+1)!!} t_k}$, 
$\tilde y_1 = \tilde{a}_1 \,e^{-\sum_{k\geq 0} \frac{z^{2k+1}}{(2k+1)!!} t_k}$ 
 with $a_1$ and~$\tilde{a}_1$ in $\wt V\llm \bt_{>0}\rrm  \bllb z^{-1}\brrb$, we define  
$y_1 \tilde y_1 := a_1 b_1 \in \wt V\llm\bt_{>0}\rrm  \bllb z^{-1}\brrb$. 
One can then verify that the following element 
\beq
\psi^*(z,\bt)  \:= \bigl(\Phi(\bt)^{-1}\bigr)^* \, \bigl(Q^{-1}\bigr) \= \biggl(1 \+ \frac{\phi_1^*(\bt)}{z} \+ \frac{\phi_2^*(\bt)}{z^2} \+  \cdots \biggr) \, e^{-\sum_{k\geq 0} \frac{z^{2k+1}}{(2k+1)!!} t_k} 
\eeq
is the dual wave function of~$u$ associated with~$\psi$. Hence we have proved the existence of {\it a pair of wave functions} of~$u$.

\subsection{Proof of Theorem~\ref{wavethm}}\label{section3point2} 
Let $u=u(\bt)$ be an arbitrary solution to the KdV hierarchy~\eqref{KdVhk} and $(\psi,\psi^*)$
a pair of wave functions of~$u$. We first prove a useful lemma. 
\begin{lemma} \label{factorize1} Define 
\beq
\Psi (z,\bt) \= \begin{pmatrix} \psi (z,\bt) & \psi^*(z,\bt) \\ -\psi_x(z,\bt) & -\psi^*_x(z,\bt) \end{pmatrix}\,. 
\eeq
Then we have 
\beq \label{wronskian}
 \det \Psi(z,\bt) \; \equiv\;  2z 
\eeq
and 
\beq\label{bpsipsistaru}
b (\lambda, \bt) \= \psi(z, \bt) \, \psi^*(z,\bt) \,,
\eeq
where $\lambda=z^2$.
Moreover, the matrix resolvent~$R(\lambda,\bt)$ of~$u$ is given by
\beq\label{RP}
R(\lambda,\bt) \; \equiv \;  - \, \Psi(z,\bt) \, \begin{pmatrix} z & 0 \\ 0 & -z \end{pmatrix} \, \Psi^{-1}(z,\bt)  \,.  
\eeq
\end{lemma}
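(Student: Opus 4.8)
The plan is to prove the three identities in order, using the Wronskian identity~\eqref{wronskian} as the foundation. Throughout set $\lambda=z^2$, write $W(z,\bt):=\det\Psi=\psi_x\psi^*-\psi\psi^*_x$, and recall that both $\psi$ and $\psi^*$ solve the stationary Schr\"odinger equation $\p_x^2 y=(z^2-2u)\,y$ (the first equation in~\eqref{definingpsi} and~\eqref{definngspsistar}). Identity~\eqref{bpsipsistaru} will then follow from the uniqueness of $b$ as a power-series solution of~\eqref{essential}, and~\eqref{RP} from a short computation together with the uniqueness of $R(\lambda,\bt)$ built into~\eqref{defiR}.

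First I would establish~\eqref{wronskian}. Differentiating $W$ in $x$ and inserting $\psi_{xx}=(z^2-2u)\psi$ and $\psi^*_{xx}=(z^2-2u)\psi^*$ gives $\p_x W=\psi_{xx}\psi^*-\psi\psi^*_{xx}=0$, so $W$ is independent of $x=t_0$. From the leading behaviour $\psi=(1+O(z^{-1}))e^{\theta}$, $\psi^*=(1+O(z^{-1}))e^{-\theta}$ with $\theta=\sum_{k} t_k z^{2k+1}/(2k+1)!!$ and $\p_x e^{\pm\theta}=\pm z\,e^{\pm\theta}$, one reads off $W=2z+\sum_{k\ge 0}s_k(\bt_{>0})\,z^{-k}$. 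To kill the lower-order terms I would rerun the residue computation from the proof of Lemma~\ref{twoid}, now with the time-dependent dressing operator $\Phi(\bt)$ of Proposition~\ref{dressingprop}, writing $\psi=\Phi(e^{\theta})$ and $\psi^*=(\Phi^{-1})^*(e^{-\theta})$ and using $\res_{\p_x}(\p_x\circ L^{i/2}+L^{i/2}\circ\p_x)=0$ (valid since $L$ is a differential operator and $(L^*)^{1/2}=-L^{1/2}$). This yields $\res_{z=\infty}z^i\,W\,dz=0$ for every $i\ge -1$, forcing all $s_k$ to vanish, so $W\equiv 2z$. Pinning the Wronskian down to \emph{exactly} $2z$ is the only genuine obstacle here, and it is precisely the zero-residue condition in the definition of $\psi^*$ that makes it work.

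Granting~\eqref{wronskian}, I would prove~\eqref{bpsipsistaru} by the standard first-integral computation for a product of two solutions. Put $P:=\psi\psi^*$. Differentiating twice and using the Schr\"odinger equation gives $\p_x^2 P=2(\lambda-2u)P+2\psi_x\psi^*_x$, whence
\[
P\,\p_x^2 P-\tfrac12(\p_x P)^2-2(\lambda-2u)P^2 \= -\tfrac12\,(\psi_x\psi^*-\psi\psi^*_x)^2 \= -\tfrac12\,W^2 \= -2\lambda .
\]
Thus $P$ satisfies~\eqref{essential} with $u_0$ replaced by $u$; since $P=1+O(z^{-1})$ and $b(\lambda,\bt)$ is the unique such power series in $z^{-1}$ solving~\eqref{essential}, we get $P=b(\lambda,\bt)$ (evenness of $P$ in $z$ being automatic from that uniqueness). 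This is exactly the time-dependent version of the second identity of Lemma~\ref{twoid}.

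Finally, for~\eqref{RP}, the stationary Schr\"odinger equation puts $\Psi$ into first-order form $\p_x\Psi=-(\Lambda+q)\Psi$, i.e.\ $\cL\Psi=0$, with $\Lambda+q=\sm{0}{1}{\lambda-2u}{0}$. Setting $Z=\sm{z}{0}{0}{-z}$ and $\wt R:=-\Psi Z\Psi^{-1}$, this gives at once $\p_x\wt R+[\Lambda+q,\wt R]=0$, that is $[\cL,\wt R]=0$, while ${\rm Tr}\,\wt R^2={\rm Tr}\,Z^2=2\lambda$. Using~\eqref{wronskian} to write $\Psi^{-1}=\tfrac1{2z}\sm{-\psi^*_x}{-\psi^*}{\psi_x}{\psi}$ and multiplying out yields
\[
\wt R \= \begin{pmatrix} \tfrac12\,\p_x(\psi\psi^*) & \psi\psi^* \\[2pt] -\psi_x\psi^*_x & -\tfrac12\,\p_x(\psi\psi^*) \end{pmatrix},
\]
whose $(1,2)$-entry is $\psi\psi^*=b$ by~\eqref{bpsipsistaru}, whose diagonal is $\pm\tfrac12\p_x b$, and whose $(2,1)$-entry is $-\psi_x\psi^*_x=(\lambda-2u)b-\tfrac12\p_x^2 b$ by the relation displayed above. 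Comparison with~\eqref{Rabcd},~\eqref{aeq} and~\eqref{ceq} shows $\wt R=R(\lambda,\bt)$ entrywise; equivalently, the three verified properties and the uniqueness in~\eqref{defiR} identify $\wt R$ with $R(\lambda,\bt)$. Either route establishes~\eqref{RP} and completes the lemma.
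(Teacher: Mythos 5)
Your proposal is correct and follows essentially the same route as the paper: the paper disposes of \eqref{wronskian} and \eqref{bpsipsistaru} by declaring the argument ``almost identical'' to that of Lemma~\ref{twoid} (i.e.\ the residue computation via the dressing operator for the Wronskian, and the uniqueness of the power-series solution of \eqref{essential} for $\psi\psi^*=b$), and then proves \eqref{RP} by exactly the matrix computation you perform, matching the entries of $-\Psi Z\Psi^{-1}$ against \eqref{aeq} and \eqref{ceq}. You have merely written out in full the details the paper leaves implicit in the time-dependent setting, which is done correctly.
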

\begin{proof}
The proof for~\eqref{wronskian} and~\eqref{bpsipsistaru} is almost identical with that for Lemma~\ref{twoid}, so we omit the details. 
Let us now prove \eqref{RP}.  
We have 
\beq
 - \, \Psi(z,\bt) \, \begin{pmatrix} z & 0 \\ 0 & -z \end{pmatrix} \, \Psi^{-1}(z,\bt) \= 
 \begin{pmatrix}  \frac12 (\psi \psi^*)_x & \psi \, \psi^* \\  -\psi_x \psi_x^* & -\frac12 (\psi\psi^*)_x \end{pmatrix}  \,. 
\eeq
So \eqref{bpsipsistaru} shows the $(1,2)$-entry identity of~\eqref{RP}. 
The $(1,1)$-entry identity and $(2,2)$-entry identity are also true due to~\eqref{aeq}. 
It remains to show 
\[ - \psi_x \psi_x^* \= (\lambda-2u) \, b  \,-\,  \frac12 \, \p_x^2(b) \,. \]
Indeed, 
\[
{\rm RHS} \= (\lambda-2u) \, \psi  \psi^*  \,-\,  \frac12 \, (\psi_{xx} \psi + 2\psi_x \psi^*_x + \psi \, \psi^*_{xx} ) \= {\rm LHS} \,. 
\]
The lemma is proved. 
\end{proof}

According to equation~\eqref{RP}  we can write the basic matrix resolvent $R(z^2,\bt)$ of~$u$ in terms of $\psi,\psi^*$ as 
\begin{align}
& R\bigl(z^2,\bt\bigr)
\= z \+ \begin{pmatrix}  \psi \\ -\,\psi_x \end{pmatrix} \begin{pmatrix}   \psi_x^* & \psi^* \end{pmatrix}  \,.  \label{Rpsiz}
\end{align}
For simplicity, we will often denote $R(\lambda)=R(\lambda,\bt)$. Through a direct calculation we have 
\begin{align}
& \tr \, R\bigl(z_1^2\bigr) \, R\bigl(z_2^2\bigr) \=
 \bigl(\psi_x^*(z_1) \, \psi(z_2)-\psi^*(z_1) \psi_x(z_2)\bigr)
\bigl(\psi_x^*(z_2) \, \psi(z_1)-\psi^*(z_2) \psi_x(z_1)\bigr) \, - \, 2z_1z_2   \,.
\end{align}
Hence 
\begin{align}
& \sum_{p_1,p_2} \Omega_{p_1, p_2} (\bt) \,  \frac{(2p_1+1)!! \, (2p_2+1)!!}{z_1^{2p_1+2} z_2^{2p_2+2}}  \= 
- \, D(z_1,z_2;\bt)  \, D(z_2,z_1;\bt) \,- \, \frac{1}{(z_1-z_2)^2} \,. \label{newformula2point}
\end{align}
This shows formula~\eqref{newformula} for $n=2$. For $n\geq 3$, we are going to use a formula given in~\cite{DuY1}:
\begin{align} 
&\sum_{\sigma\in S_n/C_n} 
\frac{ {\rm tr} \, R(\lambda_{\sigma(1)})\dots R(\lambda_{\sigma(n)}) } {\prod_{i=1}^n \bigl(\lambda_{\sigma(i+1)}-\lambda_{\sigma(i)}\bigr)}  \nn\\
& \quad \=\sum_{\sigma\in S_{n-2}} \frac{\bigl\langle R(\lambda_n) \,,\, {\rm ad}_{R(\lambda_{\sigma(1)})} \cdots {\rm ad}_{R(\lambda_{\sigma(n-2)})} 
R(\lambda_{n-1})  \bigr\rangle}{ \bigl(\lambda_{n-1}-\lambda_{\sigma(n-2)}\bigr) (\lambda_{n}-\lambda_{n-1}) \bigl(\lambda_{\sigma(1)}- \lambda_n\bigr)\prod_{i=1}^{n-3} \bigl(\lambda_{\sigma(i+1)}-\lambda_{\sigma(i)}\bigr)} \,, \nn
\end{align}
where ${\rm ad}_a \, b := [a,b]$ and $\langle a,b\rangle := \tr \, ab$. Here we have abbreviated $R_i(\lambda)=R_i(\lambda,\bt)$.
This formula tells that the contributions to the $n$-point generating series coming from the term~$z$ in~\eqref{Rpsiz} are zero. 
Therefore the theorem is proved by using the same argument as in~\cite{DYZ}.  \epf

\subsection{Proof of Theorem~\ref{thmmainabs}} 
We will first prove Proposition~\ref{KDrelation}.
It follows easily from Lemma~\ref{factorize1} that the fours functions $\psi,\psi^*,\psi_x,\psi^*_x$ satisfy the following three relations:
\begin{align}
& \psi(z,\bt) \, \psi^*(z,\bt) \= b(z^2,\bt) \,, \nn\\
& \psi_x(z,\bt) \, \psi^*(z,\bt) \,-\, \psi(z,\bt) \, \psi^*_x(z,\bt) \= 2z \,, \nn\\
& \psi_x(z,\bt) \, \psi^*(z,\bt) \+ \psi(z,\bt) \, \psi^*_x(z,\bt) \= b_x (z^2,\bt) \,.\nn
\end{align}
Solving this system we obtain
$$
\psi^*(z,\bt) \= \frac{b(z^2,\bt)}{\psi(z,\bt)} \,, \quad \psi_x(z,\bt) \= \psi(z,\bt) \, \frac{b_x(z^2,\bt)+2z}{2 \, b(z^2,\bt)} \,, 
\quad \psi_x^*(z,\bt) \= \frac{b_x(z^2,\bt)-2z}{2\, \psi(z,\bt)} \,. 
$$
Substituting these expressions into~\eqref{Dzwtdefinition}, we obtain the first equality of~\eqref{KDequation1}. 
Proposition~\ref{KDrelation} is then proved.
Theorem~\ref{thmmainabs} follows easily from Theorem~\ref{wavethm} and the first equality of~\eqref{KDequation1}, as 
the factors of the form $\psi(z,\bt)/\psi(w,\bt)$ cancel in each product of the sum of the right hand side of~\eqref{newformula}.  
\epf

As the remark given in the introduction (right after Theorem~\ref{thmmainabs}), the abstract version of Theorem~\ref{thmmainabs}
follows immediately. 
We note that it is not difficult to give a direct proof of the abstract version of Theorem~\ref{thmmainabs} 
by using the definition of~$K$ (cf. equation~\eqref{defbigK}), Lemma~\ref{lemmanablaR} and Proposition~\ref{taulemma}; we leave it as an exercise for interested readers.

\section{Generating series of the generalized BGW correlators}\label{section4}
Denote by $\pi: \, \overline{\mathcal{M}}_{g,n+1} \rightarrow \overline{\mathcal{M}}_{g,n}$ the forgetful map forgetting the last marked point, and 
$\rho:\overline{\mathcal{M}}_{g-1,n+2}\rightarrow \overline{\mathcal{M}}_{g,n}$ and  
$\phi_{h,I}: \overline{\mathcal{M}}_{h,|I|+1} \times \overline{\mathcal{M}}_{g-h,|J|+1} \rightarrow \overline{\mathcal{M}}_{g,n}$, $I \sqcup J=\{1,\dots,n\}$ the gluing maps.  
Norbury \cite{N} introduced a collection of cohomology classes 
$\bigl\{\Theta_{g,n}\in H^*\bigl(\overline{\mathcal{M}}_{g,n}\bigr)\bigr\}_{2g-2+n>0}$ satisfying 
\begin{align} 
& {\rm i)} ~ \Theta_{g,n} \mbox{ is of pure degree}\,, \label{np1} \\
& {\rm ii)} ~ \rho^*\Theta_{g,n} \=\Theta_{g-1,n+2} \,, \qquad \phi_{h,I}^*\Theta_{g,n} \= \pi_1^*\Theta_{h,|I|+1} \cdot \pi_2^*\Theta_{g-h,|J|+1} \,, \label{np2} \\
& {\rm iii)} ~ \Theta_{g,n+1} \= \psi_{n+1} \cdot \pi^*\Theta_{g,n} \,, \label{np3} \\
& {\rm iv)} ~ \Theta_{1,1} \= 3 \, \psi_1 \, .  \label{np4}
\end{align}
Norbury proved that  such $\Theta_{g,n}$ exists and it must satisfy $\Theta_{g,n}\in H^{4g-4+2n}\bigl(\overline{\mathcal{M}}_{g,n}\bigr)$. 
Define $\Ztheta = \Ztheta(\bt)$ as the following generating series of intersection numbers (called the partition function)
\beq
\Ztheta(\bt) 
\= \exp \biggl(\sum_{g,n\geq 0}\frac1{n!} \sum_{p_1,\dots,p_n\geq 0} 
\int_{\overline{\mathcal{M}}_{g,n}} \Theta_{g,n} \, \psi_1^{p_1} \dots \psi_n^{p_n} \, t_{p_1}\dots t_{p_n}\biggr) \,.
\eeq
Define $\utheta=\utheta(\bt):= \frac{\p^2 \log \Ztheta(\bt)}{\p t_0^2}$.  
The integrals $\int_{\overline{\mathcal{M}}_{g,n}} \Theta_{g,n} \, \psi_1^{p_1} \cdots \psi_n^{p_n}$, called 
the $n$-point $\Theta$-class intersection numbers (aka the 
$n$-point $\Theta$-class correlators), are independent of choice of~$\Theta_{g,n}$. They vanish unless 
$p_1+\dots+p_n=g-1$. Therefore $\utheta$ belongs to $\QQ\llm t_0,t_1,t_2,\dots \rrm$. 

\medskip

\noindent {\bf Norbury's Theorem} (\cite{N}).
\textit{The formal power series $\utheta$ is a solution of the KdV hierarchy~\eqref{KdVhk} with 
\beq\label{bgwin} \utheta(t_0=x,\bdzero) \= \frac{1}{8 \, (x-1)^2}\,.\eeq
Moreover, $\Ztheta$ is the tau-function of the solution $\utheta$ satisfying the string type equation}
\beq\label{stringbgw}
\sum_{i\geq 0} (1+2i) \, t_i \, \frac{\p \Ztheta}{\p t_i} \+ \frac 18 \Ztheta\= \frac{\p\Ztheta}{\p t_0}\,. 
\eeq

\smallskip

\noindent Norbury's Theorem tells that $\utheta=\uteight$. 
 The latter is defined in Example~\ref{example2} of Introduction.
So $\taueight$ and $\Ztheta$ can only differ by the exponential of a linear function. 
Noting that equation~\eqref{stringbgw} coincides with equation~\eqref{scaling} with $C=1/8$, so the linear function can only be a constant, which can easily 
be normalized as zero. We conclude 
that $\Ztheta=\taueight$. 
The latter is often called the BGW tau-function~\cite{BG,GW,A,DN,MMS}, originally studied in matrix models. 
The goal of this section is to prove Theorems~\ref{BGWnpoint},~\ref{RCC},~\ref{dbessel}, and give some explicit computations.
 
\subsection{The essential second kind topological ODE of $A_1$-type}
The topological ODE of $\g$-type with $\g$ being a simple Lie algebra was introduced and studied in~\cite{BDY2,BDY3}, which will be 
called the first kind topological ODE. 
To prove Theorem~\ref{RCC}, let us introduce an ODE associated to~${\rm sl}_2(\CC)$
\beq\label{diffrho}
2 \, \z^3 \, \rho''' \+ 3 \, \z^2 \rho'' \, -  \, 2\, \z\, \bigl(\z - 2 \,C\bigr)\, \rho' \,- \, 2 \, C\, \rho \= 0\,, \qquad '\:=\frac{d}{d\z}\,,
\eeq
where~$C$ is an arbitrary parameter. 
It will be used for computing the $\Theta$-class intersection numbers in full genera ($C=1/8$) 
and the generalized BGW correlators. 
As one motivation, note that the matrix~$R=R(\lambda,x)$ appearing in Theorem~\ref{RCC} is 
uniquely determined by the given $f(x)=\frac{C}{(1-x)^2}$, 
and equation~\eqref{diffrho} is nothing but a kind of transformation of 
the matrix resolvent recursive relation (see Section~\ref{s22}) for the corresponding~$b=b(\lambda,x)$, which has two 
independent variables, into a recursive relation for the coefficients of the one-variable function~$\rho$.
We call~\eqref{diffrho} the {\it essential second kind topological ODE of $A_1$-type}.  
See Section~\ref{further2} for more details.

\begin{prop} \label{bcc} For any fixed $C\in\mathbb{C}$, there exists a 
unique series~$\rho$ in $\CC\bigl[\!\bigl[z^{-1}\bigr]\!\bigr]$ 
satisfying equation~\eqref{diffrho} as well as the initial condition 
\beq\label{inirho}
\rho(\infty) \= 1\,.
\eeq
Moreover, $\rho\in \QQ[C]\bigl[\!\bigl[\z^{-1}\bigr]\!\bigr]$ and it satisfies the following nonlinear ODE 
\beq\label{nonlinearODE}
\z\, \rho'^2 \+ \Bigl(1-\frac{2C}{\z}\Bigr)\, \rho^2 \,-\, \rho\, \rho' \,-\, 2 \,\z\, \rho \,\rho'' \= 1 \,. 
\eeq
\end{prop}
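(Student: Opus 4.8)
The plan is to treat the two assertions separately: existence, uniqueness and rationality will follow from a direct power-series analysis of the linear ODE~\eqref{diffrho}, while the nonlinear ODE~\eqref{nonlinearODE} will be obtained as a first integral of~\eqref{diffrho}.

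First I would substitute the ansatz $\rho=\sum_{k\geq 0}\rho_k\,\z^{-k}$ into~\eqref{diffrho}, using $\rho'=-\sum_k k\rho_k\z^{-k-1}$, $\rho''=\sum_k k(k+1)\rho_k\z^{-k-2}$ and $\rho'''=-\sum_k k(k+1)(k+2)\rho_k\z^{-k-3}$. After multiplying by the appropriate powers of $\z$, each term becomes a series in $\z^{-k}$; only the piece $-2\z^2\rho'$ shifts the index and contributes $2(k+1)\rho_{k+1}$ to the coefficient of $\z^{-k}$, while the remaining terms contribute a multiple of $\rho_k$ whose bracket factors neatly as $-(2k+1)\bigl(k(k+1)+2C\bigr)$. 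Equating the coefficient of $\z^{-k}$ to zero then yields the two-term recursion
\[
\rho_{k+1}\=\frac{(2k+1)\bigl(k(k+1)+2C\bigr)}{2(k+1)}\,\rho_k\,,\qquad k\geq 0\,.
\]
Together with $\rho_0=1$, which is forced by the initial condition~\eqref{inirho}, this determines each $\rho_k$ uniquely, proving existence and uniqueness; since the recursion coefficients lie in $\QQ[C]$, an immediate induction gives $\rho_k\in\QQ[C]$ and hence $\rho\in\QQ[C]\bllb\z^{-1}\brrb$.

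For the nonlinear ODE I would set $F:=\z\rho'^2+\bigl(1-\tfrac{2C}\z\bigr)\rho^2-\rho\rho'-2\z\rho\rho''-1$ and prove $F=0$ by showing that $F'=0$ and that the constant term of $F$ vanishes. Differentiating and cancelling the matching pairs $\pm\rho'^2$ and $\pm2\z\rho'\rho''$, one is left with
\[
F'\=\frac{2C}{\z^2}\rho^2+2\rho\rho'-\frac{4C}{\z}\rho\rho'-3\rho\rho''-2\z\rho\rho'''\,.
\]
The crucial point is that this equals $-\z^{-2}\rho$ times the left-hand side of~\eqref{diffrho}: expanding $-\z^{-2}\rho\,\bigl[2\z^3\rho'''+3\z^2\rho''-2\z(\z-2C)\rho'-2C\rho\bigr]$ reproduces $F'$ term by term. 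Since $\rho$ satisfies~\eqref{diffrho} this bracket vanishes, so $F'=0$, which for a formal series in $\z^{-1}$ forces $F$ to be a constant. Finally I would evaluate the constant term: because $\rho(\infty)=1$ while $\rho'=O(\z^{-2})$ and $\rho''=O(\z^{-3})$, every term of $F$ tends to $0$ as $\z\to\infty$ except $\rho^2\to1$, so $F(\infty)=1-1=0$ and therefore $F\equiv0$, which is exactly~\eqref{nonlinearODE}.

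The only non-mechanical step is recognizing the factorization $F'=-\z^{-2}\rho\cdot L[\rho]$, where $L$ denotes the linear operator on the left of~\eqref{diffrho}. Once this multiplier is spotted, the remaining verifications are routine Leibniz and series manipulations, so I do not anticipate any serious obstacle; the guess is natural in hindsight, since $F$ is precisely the first integral one expects to build from a third-order linear equation.
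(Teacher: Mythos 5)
Your proposal is correct and follows essentially the same route as the paper: a coefficient recursion in $\z^{-1}$ for existence, uniqueness and $\QQ[C]$-rationality (your recursion $\rho_{k+1}=\tfrac{(2k+1)(k(k+1)+2C)}{2(k+1)}\rho_k$ is the paper's recursion~\eqref{rrec} in an unnormalized form, without the $(2k+1)!!$ prefactor), followed by showing that the left-hand side of~\eqref{nonlinearODE} is a first integral of~\eqref{diffrho} and evaluating the constant at $\z=\infty$. The only difference is that you make explicit the factorization $F'=-\z^{-2}\rho\cdot L[\rho]$, which the paper leaves as an unstated verification behind its ``note that $\frac{d}{d\z}(\cdots)=0$''.
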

\begin{proof} Write 
\beq\label{rhok}
\rho \=  \sum_{k\geq -1} (2k+1)!!  \,  \frac{\rho_k}{\z^{k+1}} 
\eeq
with $\rho_{-1}=1$. Substituting this expression in \eqref{diffrho} we obtain
\beq\label{rrec}
 \rho_k \= \frac{C+\frac{k(k+1)}2}{k+1} \, \rho_{k-1} \,, \qquad k\geq 0 \,. 
\eeq
This implies the uniqueness statement of the proposition and $\rho\in  \QQ[C]\bigl[\!\bigl[\z^{-1}\bigr]\!\bigr]$. To show \eqref{nonlinearODE}, note that 
$$
\frac{d}{d\z} \Bigl(\z\, \rho'^2 \+ \bigl(1-2 \, C/\z\bigr)\, \rho^2 
\,-\, \rho\, \rho' \,-\, 2 \,\z\, \rho \, \rho''
\Bigr) \= 0 \,. 
$$
Therefore, 
$\z\, \rho'^2 + \bigl(1-2 \, C/\z\bigr)\, \rho^2 - \rho\, \rho' - 2 \,\z\, \rho \, \rho'' \equiv C_1\,,$ 
where $C_1$ is a constant independent of $\z$. 
The fact that $C_1\equiv1$ can be deduced from \eqref{rhok}. The proposition is proved.
\end{proof}
We now apply Proposition~\ref{bcc} to derive the explicit expression of the basic matrix 
resolvent of the solution of the KdV hierarchy characterized by the initial data as in Example~\ref{example2}.

\medskip

\noindent {\it Proof} of Theorem~\ref{RCC}. Let $\rho=\rho(\z)$ be  
the unique element in Proposition~\ref{bcc}. According to Proposition~\ref{bcc}, $\rho=\rho(\z)$ satisfies equations  \eqref{diffrho}--\eqref{nonlinearODE}. 
Define $\tilde b(\lambda, x) = \rho\bigl(\lambda\, (x-1)^2\bigr)$. 
Then it is easy to check that $\tilde b(\lambda,x)$ satisfies \eqref{beq} and \eqref{essential} with $u$ replaced by $\frac{C}{(x-1)^2}$ and $\p$ by $\p_x$, 
and it has the form \eqref{defibk}. Hence the uniqueness statement in the definition of the basic matrix resolvent (see Section \ref{s22}) implies $b(\lambda,x) =\tilde b(\lambda,x)$. 
The above equation~\eqref{rrec} yields an explicit expression for~$\rho$:
\beq\label{expressionrho}
\rho \= 1 \+ \sum_{k\geq 0} \frac{(2k+1)!!}{(k+1)!}  \frac{\prod_{i=0}^k \bigl(C+\frac{i(i+1)}{2}\bigr)}{\z^{k+1}} 
\= {}_{3} F_0\biggl(\frac12, \,\frac12+\alpha, \, \frac12-\alpha; \, ; \frac1\z\biggr) \,.
\eeq
(Recall that $2\alpha=\sqrt{1-8C}$.)
This gives the $(1,2)$-entry of $\R$. Other entries can be obtained by using \eqref{aeq} and \eqref{ceq}.  \epf

As already pointed out after Theorem~\ref{RCC}, if $-C$ is a triangular number, i.e., $C=-\frac{p(p+1)}2$ for some $p\in \NN$, then the 
formal series $\rho=\rho(\z)$ truncates to the polynomial $\rho = {}_{3} F_0\bigl(\frac12, \,1+p, \, -p; \,; \frac1\z\bigr)$ in~$\z^{-1}$. 
This corresponds to the rational limit of the $p$-soliton solution to the KdV hierarchy.

For $n\geq 2$, using Corollary~\ref{cor2}  one immediately obtains an explicit formula of the generating series  of the  $n$-point 
generalized BGW correlators (any $C$) in terms of the matrix~$\R$ given in Theorem~\ref{RCC}.

\smallskip

\smallskip

\noindent {\it Proof} of Theorem~\ref{BGWnpoint}. 
Following from Norbury's Theorem, Corollary~\ref{cor2}, Theorem~\ref{RCC} with $C=1/8$. \epf

\smallskip

\noindent {\it Proof} of Corollary~\ref{1pointgbgw}.
Note that equation~\eqref{scaling} implies 
$$
(1+2p)\,  \otc_p (0) 
\=  \otc_{0,p} (0)  \,, \qquad p\geq 0\,. 
$$
Formula~\eqref{1ptC} is then a consequence of~\eqref{expressionrho} (as it is an easy exercise that $\Omega_{0,k}=b_k/(2k+1)!!$). \epf

\noindent {\it Proof} of Corollary~\ref{1pointbgw}.
According to Norbury's Theorem, formula~\eqref{state1} is a special case of~\eqref{1ptC}. \epf

\subsection{Proof of Theorem~\ref{dbessel} and Proposition~\ref{identities}} It is straightforward to check that 
the function $W:=\sqrt{2\xi/\pi}\,e^z K_\alpha(\xi)$ with 
$\xi=z(1-x)$ satisfies the differential equation $L(W)=z^2\,W$, where $L=\p_x^2+2C/(1-x)^2$. Moreover, as $z\to\infty$ 
within an appropriate sector, the asymptotic behavior of~$W$ coincides with~$\psi$ defined 
in~\eqref{gbgwbessel}. Similarly, the analytic function $W^*:=\sqrt{2\pi\xi}\,e^{-z}\,I_\alpha(\xi)$ 
satisfies $L(W^*)=z^2\,W^*$ and has asymptotics coinciding with $\psi^*$.  It remains to show the 
zero-residue condition. This can be proved by noticing that 
\begin{align}
\psi(z,x) \, \psi^*(z,x) & \=   \sum_{k\geq 0}  \frac{a_k(\alpha)}{z^k(1-x)^k}  \sum_{\ell\geq 0} (-1)^\ell \frac{a_\ell(\alpha)}{z^\ell (1-x)^\ell}  \nn\\
 & \= {}_{3} F_0\biggl(\frac12, \,\frac12+\alpha, \, \frac12-\alpha; \, ; \frac1{z^2(1-x)^2}\biggr) \= b(z^2,x)\,. 
\end{align}
This proves that $\psi$ and $\psi^*$ are a pair of wave functions as claimed, and equation~\eqref{newformulaforgbgw}
then follows from Theorem~\ref{wavethm} with~$\bt=0$. This completes the proof of Theorem~\ref{dbessel}.
 
For Proposition~\ref{identities} we first observe that the recursion and boundary conditions for~$A_{mn}(\alpha)$ given
in~\eqref{amnrec} are obviously equivalent to either of the closed formulas in~\eqref{amnsimple}, so it suffices to
prove the former. For convenience we write simply $a_k$ and $A_{mn}$ for~$a_k(\alpha)$ and $A_{mn}(\alpha)$ and set
$$ A(z)\,:=\,\psi(z,0)\,=\,\sum_{k=0}^\infty\frac{a_k}{z^k}\,, \qquad 
  B(z)\,:=\,\psi_x(z,0)\,=\,z\+\sum_{k=0}^\infty\frac{b_k}{z^k}$$
with $b_k=a_{k+1}+ka_k$.  (Avoid from confusion with the notations $a_k,b_k$ in other sections.) One checks easily that ($A(z),B(z))$ satisfies the first-order differential system 
\beq\label{ABsystem} 
\theta_z \begin{pmatrix} A(z)\\ B(z)\end{pmatrix} \= \begin{pmatrix}  z & -1 \\ 2C-z^2 & 1+z \end{pmatrix}
\begin{pmatrix} A(z)\\ B(z) \end{pmatrix}\,, \eeq
where $2C=\frac14-\alpha^2$ and $\theta_z:=z\,\frac d{dz}\,$. (This implies a second-order differential equation for~$A$ 
equivalent to the equation~$LW=z^2W$ given above.)  Setting $y=-w$, we find the two generating functions
$$ \sum_{m,n\ge0}\frac{(m+n)(A_{m+1,n}-A_{m,n+1})}{z^m\,y^n}
 \= \bigl(\theta_z\+\theta_y\bigr)\biggl(\frac{A(z)B(y)-B(z)A(y)}{z+y}\biggr)$$
(this follows from the definition of the $A_{mn}$ and the fact that the Euler operator $\theta_z+\theta_y$ 
annihilates the degree zero function $\frac{z-y}{z+y}$) and
$$ \sum_{m,n\ge0}\frac{(n-m)a_ma_n}{z^m\,y^n}    \= \bigl(\theta_z\,-\,\theta_y\bigr)\bigl(A(z)\,A(y)\bigr)\,.$$
Using the differential equations~\eqref{ABsystem} we find that the right hand sides of both of these formulas
equal $(z-y)A(z)A(y)+A(z)B(y)-B(z)A(y)$, proving equation~\eqref{amnrec}. For~\eqref{amnclosed}, 
we notice that the first of equations~\eqref{amnsimple} shows that the polynomial $A_{mn}(\alpha)$ is
divisible by $a_m(\alpha)$ (because each $a_r$ with $r\ge m$ is divisible by~$a_m$), and since $a_n(\alpha)$ has
only simple zeros, this shows that the quotient $\wt A_{mn}:=A_{mn}/a_ma_n$ has only simple poles at half-integral
values of~$\alpha$. Since $\wt A_{mn}$ is also small at infinity, it has a partial fraction decomposition as
$\sum_kc_{mn}(k)/(\alpha-k-\frac12)$ for some coefficients $c_{mn}(k)$.  That these coefficients have the values
given in~\eqref{amnclosed} can be proved by comparing residues, using either the recursion~\eqref{amnrec} or one
of the closed formulas~\eqref{amnsimple}, together with a simple binomial coefficient identity. The details are
left to the reader. 
\epf 

\subsection{Computations} 
\subsubsection{Some $\Theta$-class intersection numbers}
According to Norbury's Theorem, we have
$$\oteight_{p_1,\dots,p_n}(0) \= \sum_{g=0}^\infty \int_{\overline{\mathcal{M}}_{g,n}} \Theta_{g,n} \, \psi_1^{p_1} \dots \psi_n^{p_n} \, .$$ 
Here $n\geq 1$. 
Note that the degree-dimension matching reads 
$$
p_1+\dots+p_n + 2g-2+n \= 3g-3 + n ~\quad \Leftrightarrow  ~\quad  p_1+\dots+p_n \=g-1 \,.
$$
So, actually, $\oteight_{p_1,\dots,p_n}(0)
=  \int_{\overline{\mathcal{M}}_{1+p_1+\dots+p_n,n}} \Theta_{1+p_1+\dots+p_n,n} \, \psi_1^{p_1} \cdots \psi_n^{p_n}
$ as we have already given in~\eqref{Norburyci}. We are going to compute some $\Theta$-class intersection numbers of the form 
$\int_{\overline{\mathcal{M}}_{1+nb,n}} \Theta_{1+nb,n} \, \psi_1^{b} \cdots \psi_n^{b}$, denoted by 
$\oteight_{b^n}(0)$ for short. 
Using an algorithm designed in~\cite{DuY1} and using Theorem~\ref{BGWnpoint}, one can compute these 
$\Theta$-class intersection numbers in relatively high genera. For example we have 
$$\oteight_{1^{11}} (0) \=  \frac{3727154672771403705644393643825}{67108864}\,,$$ 
$$\oteight_{2^8} (0) \=  \frac{10497097022517857530944569189202112366380675}{36028797018963968}\,,$$
$$\oteight_{3^6} (0) \=  \frac{291143373745168297982109927833062542748609508458550221626925092375}{604462909807314587353088}\,.$$
Several more $\Theta$-class correlators are in Table~\ref{table1}.
\begin{table}[!htbp]
\begin{center}\tiny 
    \begin{tabular} {|c|M{0.9cm}|M{1.9cm}|M{3.8cm}|M{5.6cm}|N}
    \hline
          & $b=0$ & $b=1$ & $b=2$ & $b=3$ &  \\[8pt]
    \hline
    $n=1$ &  $\frac18$ & $\frac3{128}$ & $\frac{15}{1024}$ &  $\frac{525}{32768}$  &   \\[8pt]
    \hline
    $n=2$ & $\frac18$ & $\frac{63}{512}$ & $\frac{125565}{131072}$  & $\frac{178066035}{8388608}$ &    \\[8pt]
    \hline
    $n=3$ & $\frac14$  & $\frac{7221}{2048}$ &  $\frac{8160299505}{8388608}$ & $\frac{5357097499513095}{4294967296}$  &   \\[8pt]
    \hline
    $n=4$ & $\frac34$    & $\frac{4825971}{16384}$  & $\frac{6118287865593075}{1073741824}$  
    &  $\frac{3673662570422147820860595}{4398046511104}$  &   \\[8pt]
    \hline
    $n=5$ & 3   &  $\frac{3540311739}{65536}$  &  $\frac{2089963670900974355205}{17179869184}$ &  
    $\frac{7614423907504732590945890803999875}{2251799813685248}$  &  \\[8pt]
    \hline
    $n=6$ & 15 & $\frac{1209901485555}{65536}$ &  $\frac{31867458860062839143669852025}{4398046511104}$ 
    &  $\frac{32942281960173069977596091564715863342175375}{576460752303423488}$ &  \\[8pt]
    \hline
    \end{tabular}
\end{center}
\caption{$\oteight_{b^n} $,\quad$b=0,1,2,3$\,.} \label{table1}
\end{table}
One observes that $\oteight_{0^n} (0) = \frac{(n-1)!}{8}$.

The algorithm designed in~\cite{DuY1} also
 produces explicit full genera formulas for the $\Theta$-class correlators: 
\begin{align}
&  \sum_{p\geq 0} \frac{(1+2p)!!}{\lambda^{p+1}} \oteight_{0,p} (0) \= b-1 \nn\\
&  - 512\, 3!! \, 5!!\sum_{i\geq 0} \frac{(1+2p)!!}{\lambda^{p+1}} \oteight_{1,2,p} (0)  \=   
-\,1024 \, \lambda^3 \,  a  \+ 256  \, \lambda^2 \, ( b - a ) \+ \lambda \, \bigl( 848   b - 616   a  - 128   c\bigr) \nn\\
&\qquad \qquad \qquad \qquad \qquad \qquad \qquad  \+ 3321  b  \,-\, 2187  a \, - \,432  c \nn\\ 
&  1024\, 3!!^3 \sum_{p\geq 0} \frac{(1+2p)!!}{\lambda^{p+1}} \oteight_{1,1,1,p} (0) \= 
2048 \, \lambda^4 b \+ \lambda^3 \, \bigl(256   b-2048  c\bigr) \,-\, \lambda^2 \,\bigl(1536  a +96   b+768  c \bigr) \nn\\ 
& \qquad \qquad \qquad \qquad \qquad \quad  
\+ \lambda \, \bigl(1728    a - 8820   b - 864   c  \bigr) \+ 26352  a \,- \, 46989  b \+ 4284  c   \nn
\end{align}
In the above formulas, $a = \frac{1} { 8 \lambda}  \,  {}_{3} F_0\Bigl(\frac32, \, \frac12, \, \frac12; \, ; \frac{1}{\lambda}\Bigr)$,  $b = {}_{3} F_0\Bigl(\frac12, \, \frac12, \, \frac12; \, ; \frac{1}{\lambda}\Bigr)$, and 
$$ c \=  \Bigl( \lambda-\frac14 \Bigr) \, 
{}_{3} F_0\biggl(\frac12, \, \frac12, \, \frac12; \, ; \frac{1}{\lambda}\biggr)
 \,-\, \frac38 \frac{1}{\lambda} \, {}_{3} F_0\biggl(\frac32, \, \frac12, \, \frac12; \, ; \frac{1}{\lambda}\biggr) \, - \, 
 \frac{27}{32} \frac{1}{\lambda^2} \, {}_{3} F_0\biggl(\frac52, \, \frac12, \, \frac12; \, ; \frac{1}{\lambda}\biggr)   \,. $$

\subsubsection{Correlators with parameter $C$} We list some 
correlators of the form $\otc_{b^n}(0)$ in Tables \ref{table2}--\ref{table3}.  
\begin{table}[!htbp]
\begin{center}\tiny 
    \begin{tabular} {|c|M{0.8cm}|M{10.5cm}|N}
    \hline
    & $b=0$ & $b=1$ &    \\[8pt]
    \hline
    $n=1$ & $C$ & $ \frac{1}6 C(C+1)$ &      \\[8pt]
    \hline
    $n=2$ & $C$ & $\frac16 C(C+1)(2C+5)$ &   \\[8pt]
    \hline
    $n=3$ & $2C$  & $\frac13 C(C+1)(2C+7)(3C+10)$ &    \\[8pt]
    \hline
    $n=4$ & $6C$    & $ C(C+1)(22 C^3 +292 C^2+1320 C+1925)$  &   \\[8pt]
    \hline
    $n=5$ &  $24C$  &  $ C(C+1)(364 C^4+8028 C^3+69089 C^2+261625 C+350350)$  &     \\[8pt]
    \hline
    $n=6$ & $120C$ & $ C(C+1)(8160 C^5+272480 C^4+3843730 C^3+27340910 C^2+93831500 C+119119000)$ &     \\[8pt]
    \hline
    \end{tabular}
\end{center}
\caption{$\otc_{b^n}(0)$, \quad $b=0,1$\,.} \label{table2}
\end{table}

\begin{table}[!htbp]
\begin{center}\tiny 
    \begin{tabular} {|c|M{15cm}|N}
    \hline
    $n=1$ & $\frac{1}{30}C(C+1)(C+3)$ &       \\[8pt]
    \hline
    $n=2$ & $ \frac{1}{60}C(C+1)(C+3)(3 C^2+38 C+126)$ &    \\[8pt]
    \hline
    $n=3$ & $\frac{1}{60}C(C+1)(C+3)(15 C^4+550 C^3+8011 C^2+52521 C+126126)$  &    \\[8pt]
    \hline
    $n=4$ & $\frac{1}{120}C(C+1)(C+3)(285 C^6+21120 C^5+701455 C^4+12823420 C^3+131525532 C^2+698301072 C+1466593128)$    &    \\[8pt]
    \hline
    $n=5$ &  $\frac{1}{30}C(C+1)(C+3)(1035 C^8+131220 C^7+7929500 C^6+286890460 C^5+6581287505 C^4+95511193020 C^3+838324176858 C^2+3995785717308 C+7792009289064)$  &     \\[8pt]
    \hline
    $n=6$ & $\frac{1}{72}C(C+1)(C+3)(49329 C^{10}+9683190 C^9+941488056 C^8+57517664804 C^7+2370818604241 C^6+67214920642718 C^5+1301029520426886 C^4+16691838842700000 C^3+133819015248860760 C^2+596987475819494760 C+1110408075747354384)$ &     \\[8pt]
    \hline
    \end{tabular}
\end{center}
\caption{$\otc_{2^n}(0) $\,.} \label{table3}
\end{table}

We list first few explicit formulas for certain correlators in full genera:
\begin{align}
& \sum_{i\geq 0} \frac{(1+2p)!!}{\lambda^{p+1}} \, \otc_{0,p}(0)   \= b-1 \nn\\
& 3!! \sum_{i\geq 0} \frac{(1+2p)!!}{\lambda^{p+1}}  \, \otc_{1,p}(0)  \= \lambda\, (2  b -3)+c -C \, b \nn\\
& 5!! \sum_{i\geq 0} \frac{(1+2p)!!}{\lambda^{i+1}} \, \otc_{2,p}(0)  
\= \lambda^2(3b-5) \+ 2 \, \lambda\,(c - C \, b) \+ 2 C \, a - \frac{C(C+3)}2 \,  b  \+ C \, c \nn\\
& -  \,  \frac{3!!^2}2\sum_{p\geq 0} \frac{(1+2p)!!}{\lambda^{i+1}} \, \otc_{1,1,p}(0) 
\= -\, \lambda^2 \, a \+ 2 \, C\, \lambda \, (   b - a  )  \,-\, C (C+3) \,   a + C(2C+3) \,  b - C\, c  \nn\\
& -  \, 2\,5!!^2 \sum_{i\geq 0} \frac{(1+2p)!!}{\lambda^{p+1}} \, \otc_{2,2,p}(0)  \= -4 \, \lambda^4 a \+ 8 \, C \lambda^3 (  b -  a )
\+ 4\, C \lambda^2\bigl((8C+9) \, b  -  c  - (4C+6) \, a  
\bigr) \nn\\
& \qquad   \+ C \lambda \, \bigl( (36 C^2 + 216 C + 180) \,   b - (12 C^2  + 132 C  + 120) \,  a  - 24 (C +1) \,  c\bigr) \nn\\
& \qquad   \+ 12C\bigl( 3 C^3  + 49 C^2  + 151 C +105\bigr) \, b  \,-\, 18 C \bigl(5 + 7 C +2 C^2\bigr) \, c \,- 9 C\bigl( C^3 + 30 C^2  + 99 C + 70\bigr) \,   a  \nn
\end{align}
with  $a=\frac {C} {\lambda}  G_{\frac32}(\lambda)$,  $b=G_{\frac12}(\lambda)$, and 
$c \=(\lambda-2C) \, G_{\frac12}(\lambda)
- \frac{3C}{\lambda} \, G_{\frac32}(\lambda) - \frac{6C(C+1)}{\lambda^2} G_{\frac52}(\lambda)  $, where 
we recall that $G_{\alpha}(\z)$ is defined in~\eqref{defG}.

\section{Generating series of the Lam\'e partial correlation functions} \label{section5}
In this section, we study the Lam\'e tau-function $\tau_{{\rm elliptic}}$. 
\subsection{An explicit recursion for the basic matrix resolvent} Recall that the Lam\'e solution $u_{\rm elliptic}$ of the KdV 
hierarchy is the unique solution in the ring $V\llm \bt_{>0} \rrm$, where $V=\CC[g_2,g_3,\wp,\wp']/(\wp'^2- 4 \wp^3+g_2 \wp + g_3)$, 
having the initial data $f=C\,\wp(x;\tau)$. Let $\R(\lambda,x)$ and $b(\lambda,x)$ denote the functions as in Section~\ref{Introapp} (Example~3). 
Replacing $\p$ with $\p_x$ and with $u$ with $f=C \, \wp(x; \tau)$ in~\eqref{essential} we obtain 
the following nonlinear ODE for $b(\lambda,x)$ 
\beq\label{B10}
b_{xx} \, b  \m  \frac{1}{2} \, b_x^2 \,-\,  2 \, \bigl(\lambda- 2 \,  C \wp(x; \tau)\bigr)\,b^2 \= - \,2\, \lambda \, .
\eeq
We note again that $b=b(\lambda,x)$ is the unique power series solution in~$\lambda^{-1}$, having the form
$b= \sum_{k\geq-1} b_k/\lambda^{k+1}$ with $b_{-1}=1$, to equation~\eqref{B10}. 
Denote $X=  \wp(x; \tau),$ $Y=  \wp'(x; \tau)$. 
Recall that 
\[Y^2 \= 4 X^3 -g_2\, X- g_3 \= 4 \, (X-e_1)(X-e_2)(X-e_3)  \]
with $g_2$ and~$g_3$ as in Section~\ref{section11}.
In the $(X,\lambda)$-coordinates, equation~\eqref{B10} has the equivalent expression
\beq\label{B2} 
\Bigl(X^3 -\frac{g_2}4 X - \frac{g_3}4\Bigr)(2 b b'' - b'^2)  \+ \Bigl(3 X^2 - \frac{g_2}4\Bigr) b b' \+ (2CX-\lambda) b^2 \+ \lambda \= 0\,.
\eeq
Here $'=\frac{\p}{\p X}$.
It follows that $b_k$ must have the form $b_k=P_{k+1}$ with 
$P_0=1$ and $P_m$~$(m\geq 1)$ being polynomials in $X$ of degree $m$ whose coefficients are polynomials of $g_2,g_3,C$. 
Equation~\eqref{B2} uniquely determines $P_k,~k\geq 1$ in the following recursive way:
\begin{align}
P_k \=  & \sum_{i=0}^{k-1} \biggl[C\,  X P_i P_{k-i-1} +  \Bigl(\frac32 X^2- g_2\Bigr) P_i P_{k-i-1}' 
                  + \bigl(X^3 - \frac14 g_2 X- \frac14 g_3\bigr) \Bigl(P_{i} P_{k-i-1}'' - \frac12 P_i' P_{k-i-1}' \Bigr)\biggr] \nn\\
              &   \, - \, \frac12 \sum_{i=1}^{k-1} P_i P_{k-i} \,.
\end{align}
We do not have a closed expression of~$b$ for a general value of~$C$.

\subsection{Proof of Theorem~\ref{main2}} Define 
$\R^{\rm sp}(\lambda, x):=  \sqrt{\frac{S_p(\lambda)}{\lambda}} \, \R(\lambda,x)$. We have 
\begin{align}
& \bigl[\cL\,,\,\R^{\rm sp}\bigr] \= 0 \,, \nn\\
& \det \R^{\rm sp} \= - \, S_p(\lambda)\,. \nn
\end{align}
One identifies $\R^{\rm sp}$ with the matrix resolvent in~\cite{Du2}. This shows that 
$\sqrt{\frac{S_p(\lambda)}{\lambda}} \, b(\lambda,x)$ is a polynomial in~$\lambda$ of degree~$p$ with leading coefficient~1. 
The fact that $\sqrt{\frac{S_p(\lambda)}{\lambda}} \, b(\lambda,x)$ is also a degree~$p$ polynomial of $X=\wp$  can be 
deduced from homogeneity. Indeed, if we introduce a gradation by ${\rm wt} \, \lambda=2$, ${\rm wt} \, X=2$, 
${\rm wt} \, g_2 = 4$, ${\rm wt} \, g_3 = 6$, then we find that $b(\lambda,x)$ is homogeneous of degree 0. Note that 
$S_p(\lambda)$ is homogeneous of degree $4p+2$, i.e. ${\rm wt} \, S_p(\lambda) = 4p+2$. The theorem is proved. \epf

\subsection{Computations}  
The first few partial correlation functions of~$u_{\rm elliptic}$ are 
\begin{align}
& \Omega_{0,0} (x) \= C\, \wp(x) \,, \quad \Omega_{0,0,0} (x) \= C\, \wp'(x) \,, \quad 
\Omega_{0,0,0,0} (x) \= 6 \, C\, \wp(x)^2 \,-\, \frac{C\, g_2}2 \,,  \nn\\
& \Omega_{1,1} (x) \= C \, \biggl( \frac{(C+1) (2 C+5)}6 \wp(x)^3 -\frac{(C+1)  g_2  }{8} \wp(x)-\frac{C+2}{24}  g_3 \biggr) \,, \nn\\
& \Omega_{1,1,1} (x) \= \frac{C(C+1)}{24}  \,  \wp'(x) \,  \Bigl( 4 (2 C+7) (3 C+10) \wp(x)^3 - (11 C+28)  g_2  \wp(x) - 2 (C+5)  g_3 \Bigr) \,.  \nn
\end{align}
Considering the Laurent series expansion of~$\wp(x)$ at $x=0$, we obtain
\begin{align}
& \Omega_{0,0} (x)  \= 
  C\, \Bigl(\frac1{x^2} + \frac{g_2}{20}\,x^2 +\frac{g_3}{28}\,x^4 +\frac{g_2^2}{1200}\,x^6 +\frac{3 g_2 g_3}{6160}\,x^8 
+ \frac{49g_2^3+750g_3^2}{7644000}\,x^{10} + \cdots \Bigr) \,, \nn\\
& \Omega_{0,0,0}(x)\= 
 C\, \Bigl(\frac2{x^3} - \frac{g_2}{10}\,x -\frac{g_3}{7}\,x^3 -\frac{g_2^2}{200}\,x^5 -\frac{3g_2 g_3}{770}\,x^7
- \frac{49g_2^3+750g_3^2}{764400}\,x^9 - \cdots \Bigr) \,, \nn\\
& \Omega_{1,1} (x) \=  \frac{C(C+1) (2 C+5)}{6\,x^6}  \+  \frac{C^2(C+1) }{20\,x^2}g_2 
\+  \frac{C(6C^2 + 14 C +1)}{168} g_3  \nn\\
& \qquad \qquad \qquad  \+  \frac{C(C+1) (8 C+5) }{2400} g_2^2x^2 \+ \cdots \,, \nn\\
& \Omega_{1,1,1} (x) \=   \frac{C (C+1)(2C+7)(3C+10)}{6\,x^9}   \+  \frac{C (C+1) (9 C^2+34 C+35)}{60\,x^5}g_2 \nn\\
& \qquad \qquad \qquad \+  \frac{C (C+1) (18 C^2+109 C+140)}{168\,x^3}  g_3 \+  \frac{C (C+1) (24 C^2+109 C+140)}{2400\,x} g_2^2 \+ \cdots \, .  \nn
\end{align}
Since each of these expressions is modular of some weight with respect to the action $\tau\mapsto\frac{a\tau+b}{c\tau+d}$,
$x\mapsto\frac x{c\tau+d}$ of the full modular group, we call these Lam\'e partial correlation functions {\it modular deformations} of the 
generalized BGW partial correlation functions (after a shift of~$x$ by~1). 
The partial correlation functions also allow us to consider other analytic in~$\tau$ 
aspects.
For example, using  
\[
\wp(x) \= \frac{\pi^2}3 \bigl(2+3\cot^2 (\pi x)\bigr) \+ 16 \, q^2 \, \pi^2 \,  \sin^2 (\pi x)  \+  4 \, q^4 \bigl(5+4\cos(2\pi x)\bigr) \sin^2 (\pi x) \+ \cdots\,
\]
one can obtain the $q\rightarrow 0$ expansion ($q=e^{\pi i \tau}$) of 
the Lam\'e partial correlation functions.  
The above modular deformations could be interpreted in an alternative way.
Namely, we first switch on the periods $(2\omega,2\omega')$ 
of the Weierstrass $\wp$-function. (Previously we have taken $\omega=\frac12$ and $\omega'=\frac{\tau}2$). It is easy to see that the 
form of the partial correlation functions do not change when the initial data is given by $f(x)\equiv C \, \wp(x; 2\omega,2\omega')$. Denote $\tau=\frac{\omega'}\omega$, $q=e^{\pi i \tau}$. Then 
we have as $\omega\rightarrow \infty$,
$$
\wp (x) \= \frac1{x^2} 
\+ \frac{\pi ^4 x^2 (1+240 q^2 +2160 q^4+\dots)}{240 \, \omega^4} \+ \frac{\pi ^6 x^4 (1 -504 q^2 -16632 q^4+\dots)}{6048 \, \omega^6} \+ O\bigl(\omega^{-7}\bigr) \,.
$$
Hence the modular deformation of a generalized BGW partial correlation function 
can also be viewed as an $\omega\rightarrow \infty$ 
asymptotic.
\begin{remark}
For $C=1/8$, we expect the existence of a deformation of the Norbury class such that the partition function is equal to the Lam\'e tau-function. 
For other values of~$C$, we also expect the existence of cohomology classes on $\overline{\mathcal{M}}_{g,n}$ giving rise to the Lam\'e tau-function.
\end{remark}

Using Corollary \ref{cor2} we can obtain formulas like 
\begin{align}
& -  \, \frac{3!!^2} 2\sum_{i\geq 0} \frac{(1+2i)!!}{\lambda^{i+1}} \, \Omega_{1,1,i}(x) 
\=  -\, \lambda^2 \, a \+ C\, \lambda \bigl(  \wp'(x) \, b - 2   \wp(x) \, a  \bigr) \nn\\
& \qquad \qquad  \+ \frac{C}{4} \, \bigl(g_2-4 (C+3) \wp(x)^2\bigr) \, a \+ \frac{C(2 C+3)}2  \,  \wp(x) \wp'(x) \, b \,-\, \frac{C}2   \wp'(x) \, c \,, 
\end{align}
where $a,b,c$ are the components of $\R(\lambda,x)$. 

\section{Further remarks} \label{futherrmk}

\subsection{The second kind topological ODE of~$A_1$-type} \label{further2} 
Consider the solution of the KdV hierarchy with initial data~$f=C/(1-x)^2$. 
Let $\R=\R(\lambda,x)$ denote the basic matrix resolvent of this solution evaluated at $\bt_{>0} = \bdzero$ and $t_0=x$. 
Define $M(\lambda,x) = \frac1{\lambda^{\frac12}}  \, \lambda^{\frac{\sigma_3}4} \R(\lambda,x) \, \lambda^{-\frac{\sigma_3}4}$. 
According to Theorem~\ref{RCC} we have $M(\lambda,x)=M(\z)$ with $\z=\lambda\, (x-1)^2$. Moreover, $M=M(\z)$ satisfies
\beq \label{thetaode}
\frac{dM}{d\z} \+ \frac12 \biggl[ \begin{pmatrix} 0  & 1/\sqrt{\z} \\ 1/\sqrt{\z} - 2C/\z & 0 \end{pmatrix} \,, \, M\biggr] \= 0 \,. 
\eeq
We call \eqref{thetaode} the {\it second kind topological ODE of $A_1$-type}. Note that as $\zeta\rightarrow\infty$, we have
$$
M(\z) \, \sim \, \begin{pmatrix} 0  & 1\\ 1 & 0 \end{pmatrix}  \+  {\rm O}\Bigl(1/\sqrt{\z}\Bigr)\,. 
$$
Equation~\eqref{thetaode} together with this boundary condition uniquely determines~$M$. This~$M$ further satisfies $\det M(\z)=-1$. 
The second kind topological ODE~\eqref{thetaode} can be written equivalently as 
\beq \label{thetaode1}
\frac{dM}{dz} \+  \biggl[ \begin{pmatrix} 0  & 1 \\ 1 - \frac{2C}{z^2} & 0 \end{pmatrix} \,, \, M\biggr] \= 0 \,,
\eeq
where $\z=z^2$. More details about the second kind topological ODE and its generalization to an arbitrary simple Lie algebra will be given in a subsequent publication. 

\subsection{$M$-bispectrality} \label{bispsection}
Let us consider $V=\CC\llb x \rrb$, the ring of Laurent series of~$x$. Let $u=u(\bt)$ be a solution 
of the KdV hierarchy~\eqref{KdVhk} in $V\bllm\bt_{>0}\brrm$, $f(x)=u(x,\bdzero)$ the initial value. 
Let us recall a notion of bispectral solutions~\cite{DYZ} defined by using $\R(\lambda,x)$. Denote $\sigma_3=\sm 1  0  0 {-1}$.
\begin{defi}
The solution~$u$ is called {\it $M$-bispectral} if there exist three scalar functions $g(\lambda)\not\equiv0$, $g_0(\lambda)\not\equiv0$, $h(\lambda,x)$, and an ${\rm sl}_2(\CC)$-valued function 
$\widetilde \R(h)$ such that  
\beq
 \R(\lambda,x)  \= g(\lambda) \, g_0(\lambda)^{\sigma_3} \, \widetilde \R(h(\lambda,x)) \, g_0(\lambda)^{-\sigma_3} \,.
\eeq
\end{defi}
\begin{theorem} \label{bispecthm} 
The solution $u$ of the KdV hierarchy is M-bispectral iff 
\beq\label{ubisp}
\mbox{either } \quad f(x) \=   (x-A) \, C\,, \quad \mbox{or} \quad f(x) \= \frac{C}{(x-A)^2} \+ B \,,
\eeq
where $A,B,C$ are constants.
\end{theorem}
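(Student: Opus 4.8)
The plan is to translate the $M$-bispectrality condition into the scalar equations satisfied by the entries of the basic matrix resolvent $\R(\lambda,x)$, and then to reduce everything to a single functional equation for the initial datum $f$. Writing $\R=\sm abc{-a}$, recall from~\eqref{aeq}--\eqref{beq} (with $u_0$ replaced by $f$ and $\p$ by $\p_x$) that $a=\tfrac12 b_x$ and $c=(\lambda-2f)b-\tfrac12 b_{xx}$, so the whole resolvent is determined by its $(1,2)$-entry $b=b(\lambda,x)$. Since conjugation by $g_0^{\sigma_3}$ only rescales the off-diagonal entries, the bispectral ansatz $\R=g\,g_0^{\sigma_3}\widetilde\R(h)\,g_0^{-\sigma_3}$ is equivalent to requiring that $a/g$, $b/(g g_0^2)$ and $c g_0^2/g$ each be a function of the single variable $h=h(\lambda,x)$.

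For the ``only if'' direction I would first normalize $h$. The identity $a=\tfrac12 b_x$ forces $g_0(\lambda)^2 h_x=2\tilde a(h)/\tilde b'(h)$ to be a function of $h$ alone, so reparametrizing $h\mapsto\Psi(h)$ with $\Psi'=\tilde b'/(2\tilde a)$ makes $h_x=1/g_0^2$ independent of $x$; thus $h=p(\lambda)\,x+q(\lambda)$ is affine in $x$ with $g_0^2=1/p$. Inserting this together with the formula for $c$ into the requirement that $c g_0^2/g$ be a function of $h$ yields the central functional equation
\[ \lambda-2f(x)\=p(\lambda)^2\,\Xi\bigl(p(\lambda)\,x+q(\lambda)\bigr)\,, \qquad \Xi:=\bigl(\tilde c+\tfrac12\tilde b''\bigr)/\tilde b\,. \]
Fixing a base value $\lambda_0$ and expressing $\Xi'$ through $f'$ via the $x$-derivative $-2f'(x)=p^3\Xi'(h)$, this becomes the scaling relation $f'(x)=s(\lambda)^3 f'\bigl(s(\lambda)x+r(\lambda)\bigr)$ with $s=p/p_0$, $r=(q-q_0)/p_0$. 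Differentiating in $\lambda$ at $\lambda_0$ gives the ODE $f''(x)\,(\dot s x+\dot r)=-3\dot s\,f'(x)$; the cases $\dot s\neq0$ and $\dot s=0$ integrate to $f=C/(x-A)^2+B$ and to $f$ linear respectively, which is exactly~\eqref{ubisp}. Nondegeneracy ($\tilde b'\not\equiv0$, and $(\dot s,\dot r)\neq0$ at a generic $\lambda_0$) must be checked, the latter because constant $(p,q)$ would make the right-hand side of the functional equation independent of $\lambda$.

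For the ``if'' direction I would verify the factorization for each family. For $f=C/(x-A)^2+B$ the key observation is that $\Lambda(\lambda)+q=\sm 01{(\lambda-2B)-2C/(x-A)^2}0$, so that with $z:=(\lambda-2B)^{1/2}(x-A)$ the relation $[\cL,\R]=0$, after the gauge $M=(\lambda-2B)^{-1/2}(\lambda-2B)^{\sigma_3/4}\,\R\,(\lambda-2B)^{-\sigma_3/4}$, turns into exactly the second kind topological ODE~\eqref{thetaode1} with the \emph{same} $C$; moreover $\R=\Lambda(\lambda)+\text{l.o.t.}$ produces the same boundary condition $M\to\sm0110$ as $z\to\infty$. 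By the uniqueness statement for~\eqref{thetaode1}, $M$ is therefore the same matrix function evaluated at $z=(\lambda-2B)^{1/2}(x-A)$, which is precisely the bispectral factorization with $g=(\lambda-2B)^{1/2}$, $g_0=(\lambda-2B)^{-1/4}$ and $h=z$. For the linear family $f=(x-A)C$ I would absorb the additive constant by a shift of $\lambda$ and the multiplicative constant by the scaling symmetry of the Schr\"odinger operator, reducing to the Witten--Kontsevich datum $f=x$, whose resolvent is the explicit one built from the Airy wave functions~\eqref{psiwk1}--\eqref{psiwk2} (equivalently the first kind topological ODE) and is manifestly a function of $\lambda-2x$ up to the scalar prefactors $g,g_0$.

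The main obstacle I expect is the ``only if'' reduction: extracting the clean functional equation requires the reparametrization of $h$ to affine form and a careful treatment of the degenerate sub-cases (where $\tilde b$ is constant, or where $p,q$ fail to vary), and one must ensure that the formal or analytic classes in which $g$, $g_0$ and $h$ are allowed to live are large enough to license these manipulations. Once the functional equation is in hand the remaining ODE analysis is routine, and the ``if'' direction is a short computation resting on the uniqueness for~\eqref{thetaode1} and on the scaling reduction to the Airy case.
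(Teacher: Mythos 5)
Your argument is essentially the paper's own proof: the same reduction of the bispectral ansatz to the three entries of $\R$ via $a=\tfrac12 b_x$ and $c=(\lambda-2f)b-\tfrac12 b_{xx}$, the same normalization making $h$ affine in $x$, the same central functional equation $\lambda-2f(x)=W_1(\lambda)^2G(h)$ (your $p^2\Xi$), and the same differentiate-in-$\lambda$-and-$x$ case split yielding~\eqref{ubisp}, with the converse likewise handled by exhibiting the factorization explicitly (your appeal to the uniqueness for~\eqref{thetaode1} replaces the paper's explicit ${}_3F_0$ and Airy-type series, but amounts to the same verification). One small slip in your ``if'' direction: since $\det\R(\lambda,x)=-\tfrac12\,{\rm Tr}\,\R^2=-\lambda$, the scalar prefactor must satisfy $g(\lambda)^2=\lambda$ (as in the paper) rather than $g=\sqrt{\lambda-2B}$ --- your gauged matrix $M$ as written differs from the solution of~\eqref{thetaode1} by the factor $\sqrt{\lambda/(\lambda-2B)}$, which is not a function of $h=\sqrt{\lambda-2B}\,(x-A)$ and must be divided out before invoking uniqueness, though this is harmless for the bispectrality conclusion itself.
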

\begin{proof}
Denote by $a(\lambda,x),b(\lambda,x),c(\lambda,x)$ the entries of $\R(\lambda,x)$. They must have the form
\begin{align}
& a(\lambda,x) \= g(\lambda) \, \rho_1(h(\lambda,x)) \,,\\
& b(\lambda,x) \= q(\lambda) \, g(\lambda) \, \rho_2(h(\lambda,x)) \,,\\
& c(\lambda,x) \= q(\lambda)^{-1} \, g(\lambda) \, \rho_3(h(\lambda,x)) \label{third1}
\end{align}
for some functions $g(\lambda) \not\equiv 0$, $q(\lambda) =g_0(\lambda)^2\not\equiv 0 $, $\rho_1(h)$, $\rho_2(h) \not\equiv 0$, $\rho_3(h)$.
Substituting these expressions in equation~\eqref{aeq}  we obtain
$$\frac12 \, q(\lambda) \, g(\lambda) \, \rho_2'(h(\lambda,x)) \, h_x \= g(\lambda) \, \rho_1(h(\lambda,x))\,. $$
This implies existence of three functions of one-variable $W_1,W_2,P$ satisfying 
\beq\label{hP}
h(\lambda,x) \= P\bigl( x\, W_1(\lambda) + W_2(\lambda) \bigr) \,. 
\eeq
Hence we can assume that 
\beq
h(\lambda,x) \= x\, W_1(\lambda) + W_2(\lambda)\,, \qquad W_1(\lambda) \, \not\equiv \, 0 \,.
\eeq
It further implies that 
$$\frac12 \, q(\lambda)  \, W_1(\lambda) \= \frac{\rho_1(h(\lambda,x))}{\rho_2'(h(\lambda,x))} \= C_0 \,, $$
where $C_0$ is a constant. Using equations~\eqref{third1} and~\eqref{ceq} we have 
$$
\bigl( \lambda-2\,f(x) \bigr) \, \rho_2\bigl(h(\lambda,x)\bigr)  \,-\, \frac{W_1(\lambda)^2}2  \, \rho_2''\bigl(h(\lambda,x)\bigr) 
\= \frac{W_1(\lambda)^2}{4 \, C_0^2} \, \rho_3\bigl(h(\lambda,x)\bigr)\,.
$$
Therefore
\beq\label{eq9}
\lambda\,-\,2 \, f(x) \=  \frac{W_1(\lambda)^2}2  \, \frac{\rho_2''\bigl(h(\lambda,x)\bigr)}{ \rho_2\bigl(h(\lambda,x)\bigr)  } \+ \frac{W_1(\lambda)^2}{4 \, C_0^2 } \, 
\frac{\rho_3\bigl(h(\lambda,x)\bigr)}{ \rho_2\bigl(h(\lambda,x)\bigr) } \,.
\eeq
Differentiating both sides w.r.t.~$\lambda$ we obtain
\beq
1 \= \p_\lambda \Biggl( \frac{W_1(\lambda)^2}2  \, \frac{\rho_2''\bigl(h(\lambda,x)\bigr)}{ \rho_2\bigl(h(\lambda,x)\bigr)  } \+ \frac{W_1(\lambda)^2}{4 \, C_0^2 } \, 
\frac{\rho_3\bigl(h(\lambda,x)\bigr)}{ \rho_2\bigl(h(\lambda,x)\bigr) }  \Biggr) \,.
\eeq
Denote 
$$
G(h)\:=  \frac12  \, \frac{\rho_2''(h)}{ \rho_2(h)  } \+ \frac{1}{4 \, C_0^2} \, \frac{\rho_3(h)}{ \rho_2(h) }  \,.
$$
We have 
$$
1 \= \p_\lambda\bigl(W_1(\lambda)^2 \, G(h(\lambda,x))\bigr) \= 2 \, W_1 \, W_1' \, G \+ \bigl(x\,W_1'+W_2'\bigr) \, W_1^2 \, G'  \,.
$$
Differentiating both sides with respect to~$x$ we find
$$
0  \= 3 \, W_1^2 \, W_1' \, G'   \+  \bigl(x\,W_1'+W_2'\bigr) \, W_1^3 \, G'' \,.
$$
This leads to two possibilities:
\begin{align}
&  \mbox{i) }  W_1' \=0 \,:   \quad W_2' \, G''\=0 \, ,\\
&  \mbox{ii) } W_1'\neq 0 \,:   \quad \biggl(G''\not\equiv 0 \mbox{ and }  x  =  - \, \frac{3  \, W_1' \, G'   \+  W_2'\, W_1 \, G''}{W_1' \, W_1 \, G''} \biggr) \quad \mbox{ or } \quad \bigl(G''\= G'\=0\bigr)\,.
\end{align}
For the case~i), we have $W_1\equiv C_1$ for some non-zero constant $C_1$. So 
$$
1 \= W_2' \, G'\, C_1^2 \,.
$$
This implies that $W_2'\neq 0$ and so $G''\equiv0$. Hence $G(h)=C_2 \, h+ C_4$ and $W_2=\frac{\lambda}{C_2 C_1^2} \+ C_3$ for some constants $C_2,C_3,C_4$.
Substituting these expressions in~\eqref{eq9} we find 
$$
u \= - \, \frac12 \, C_1^3 \, C_2 \, x \,-\, \frac12 \, C_1^2 \, C_2 \, C_3 \,-\,\frac12 \, C_1^2 \,  C_4 \,.
$$
For the case~ii), we have 
$$
1  \=  - \, 3\, \biggl( \frac{G'   }{G''} \biggr)'   \,, \quad \mbox{ or } \quad G\= {\rm Const}\,.
$$
So $$G(h)\=\frac{C_1}{(h+C_2)^2} \+ C_3$$
for some constants $C_1,C_2,C_3$. Then we find that 
\begin{align}
& W_1 \= \biggl(\frac{\lambda \, + \, 2 \, C_3 \, C_4}{C_3}\biggr)^{\frac12} \,,\\
& W_2 \=  C_5 \, \bigl(\lambda \,+ \, 2 \, C_3 \, C_4\bigr)^{\frac12} \, - \, C_2 \,,\\
& f \=  - \, \frac{C_1}{2 \bigl(x + \sqrt{C_3} \, C_5\bigr)^2} \, - \, C_3\, C_4 
\end{align}
for some constants $C_4$, $C_5$. We have proved that $u$ must have the form in~\eqref{ubisp}. 

We are left to prove the solvability of existence of $g_0,g,\widetilde R,h$ when $f$ is taken to be one of~\eqref{ubisp}. 
For $f(x)=(x-A)\, C$, we can take for instance 
$$g_0\=1\,, \quad g\=  \frac{\sqrt{\lambda}}{2^\frac13 C^\frac13}\,, \quad W_1\=  - \, 2^\frac13 C^\frac13\,, \quad W_2 \= \frac{\lambda}{2^\frac23 C^\frac23}\,,$$ 
and we have
$$
b \= \frac{\sqrt{\lambda}}{2^\frac13 C^\frac13}\, \sum_{g=0}^\infty \frac{(6g-1)!!}{96^g\,g!} \, \zeta^{-3g-\frac12} \, , \qquad \zeta \:= - \, (x-A) \, 2^\frac13 C^\frac13 \+ \frac{\lambda}{2^\frac23\, C^\frac23} \,.
$$
For $f(x) = \frac{C}{(x-A)^2} + B$, we can take for instance 
$$g_0 \= \frac1{(\lambda-2B)^{\frac14}}\,, \quad g \= \sqrt{\lambda}\,, \quad W_1 \= \sqrt{\lambda-2B}\,, \quad W_2 \= - \, A\, \sqrt{\lambda-2B}\,,$$ 
and we have 
$$
b \= \frac{\sqrt{\lambda}}{\sqrt{\lambda \,-\,2B}} \,  {}_{3} F_0\biggl(\frac12, \,\frac{1+\sqrt{1-8C}}2, \, \frac{1-\sqrt{1-8C}}2; \, ; \frac1{(x-A) \sqrt{\lambda-2B}}\biggr).
$$
The theorem is proved.
\end{proof}

\begin{remark}
For the first case of~\eqref{ubisp}, when $(A,C)$ is taken as $(0,1)$, the above theorem recovers the following
formula of~\cite{BDY1,BDY3,Zhou1}:
\begin{align}
&  \sum_{g,p_1,\dots,p_n\geq 0}  \int_{\overline{\M}_{g,n}} \psi_1^{p_1} \cdots  \psi_n^{p_n} \, 
\prod_{j=1}^n \frac{(2p_j+1)!!}{\lambda_j^{p_j+1}}   \=-\sum_{\sigma\in S_n/C_n} 
\frac{\tr \,M (\lambda_{\sigma(1)})\dots M(\lambda_{\sigma(n)})}
 {\prod_{i=1}^n \bigl(\lambda_{\sigma(i+1)}-\lambda_{\sigma(i)}\bigr)} 
\,-\, \delta_{n2} \frac{\lambda_1+\lambda_2}{(\lambda_1-\lambda_2)^2}\,, \label{npointwkold}
\end{align}
where $M(\lambda)$ is a 2 by 2 matrix given explicitly by
\beq \label{MWK}
M(\lambda) \= \begin{pmatrix}
\frac12\sum_{g=1}^\infty \frac{(6g-5)!!}{24^{g-1}\, (g-1)!} \lambda^{-3g+2} & \sum_{g=0}^\infty \frac{(6g-1)!!}{24^g\, g!} \lambda^{-3g}\\
\\
 \sum_{g=0}^\infty  \frac{1+6g}{1-6g} \frac{(6g-1)!!}{24^g\, g!} \lambda^{-3g+1} & - \frac12\sum_{g=1}^\infty  \frac{(6g-5)!!}{24^{g-1}\, (g-1)!} 
\lambda^{-3g+2}\\
\end{pmatrix} \,.
\eeq
We note that~$\lambda^{-\frac12} M(\lambda)$ satisfies an ODE in~$\lambda$, {\it the first 
kind topological ODE of $A_1$-type}~\cite{BDY2,BDY3}. 
\end{remark}

\begin{remark}
For the first case of~\eqref{ubisp}, when $(A,C)=(q,1)$, the corresponding KdV tau-function gives the partition function of stable quasi-map 
invariants~\cite{CK} of 
a point.\footnote{We are grateful to Bumsig Kim for 
sharing with us his knowledge about this interesting point.}
\end{remark}

\begin{remark}
For the second case of~\eqref{ubisp}, when $(A,B,C)$ is take as $(1,0,\tfrac18)$, the above theorem recovers the formula~\eqref{RTheta}. 
\end{remark}

The following lemma due to Norbury has a similar flavour with Theorem~\ref{bispecthm}.
\begin{lemma} [\cite{N}]
A collection of cohomology classes $\bigl\{\Theta_{g,n}\in H^*\bigl(\overline{\mathcal{M}}_{g,n}\bigr)\bigr\}_{2g-2+n>0}$ satisfying 
the properties~i) and~ii) (see equations~\eqref{np1} and~\eqref{np2}) must satisfy 
$$
\mbox{either} \quad {\rm deg} \, \Theta_{g,n} \,\equiv\,0 \,, \qquad \mbox{or} \quad {\rm deg} \, \Theta_{g,n} \,\equiv\, 2g-2+n\,.
$$
\end{lemma}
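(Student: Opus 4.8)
The plan is to reduce the two axioms to a functional equation for the complex degree $d(g,n):=\deg\Theta_{g,n}$ and to solve it, the two solutions being exactly the claimed alternatives. I work on the locus where $\Theta_{g,n}\neq0$, where $d(g,n)$ is a well-defined non-negative integer by property~i). First I would exploit the non-separating gluing relation $\rho^*\Theta_{g,n}=\Theta_{g-1,n+2}$: since $\rho^*$ preserves cohomological degree and the pullback of a class with nonzero image is itself nonzero, whenever $\Theta_{g-1,n+2}\neq0$ we get $\Theta_{g,n}\neq0$ and $d(g,n)=d(g-1,n+2)$. The operation $(g,n)\mapsto(g-1,n+2)$ preserves $k:=2g-2+n$, and nonvanishing propagates upward in $g$ along each line $\{k=\text{const}\}$; hence $d$ descends to a function of $k$ alone, $d(g,n)=D(2g-2+n)$, defined for those $k$ realized by some nonzero class.

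Next I would use the separating gluing relation $\phi_{h,I}^*\Theta_{g,n}=\pi_1^*\Theta_{h,|I|+1}\cdot\pi_2^*\Theta_{g-h,|J|+1}$. By the K\"unneth formula the external product of two nonzero classes is nonzero and its degree is the sum of the two degrees; moreover the complexities $k_1=2h-2+(|I|+1)$ and $k_2=2(g-h)-2+(|J|+1)$ satisfy $k_1+k_2=k$. Thus realizing a complexity-$k$ surface as a one-node gluing of a complexity-$k_1$ and a complexity-$k_2$ piece yields the additivity $D(k_1+k_2)=D(k_1)+D(k_2)$. Splitting off a complexity-one factor ($\overline{\mathcal M}_{0,3}$ or $\overline{\mathcal M}_{1,1}$) repeatedly then forces $D(k)=\alpha\,k$ with $\alpha:=D(1)$, i.e. $d(g,n)=\alpha\,(2g-2+n)$ throughout the nonzero locus.

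Finally I would pin down $\alpha$ via the dimension bound $\deg\le\dim_{\mathbb C}\overline{\mathcal M}_{g,n}=3g-3+n$. Evaluating on $\overline{\mathcal M}_{1,1}$, which is a curve, gives $\alpha=d(1,1)\in\{0,1\}$; equivalently, if some genus-zero class is nonzero then $\alpha(m-2)\le m-3$ on $\overline{\mathcal M}_{0,m}$ forces $\alpha=0$, while otherwise the nonzero classes all have $g\ge1$ and the bound at $g=1$ gives $\alpha\le1$. In either case $\alpha\in\{0,1\}$, which is precisely the dichotomy $\deg\Theta_{g,n}\equiv0$ or $\deg\Theta_{g,n}\equiv2g-2+n$.

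The main obstacle is the bookkeeping of vanishing classes, since $d$ is undefined where $\Theta_{g,n}=0$ and a vanishing pullback or external factor carries no degree information. This is exactly what separates the two alternatives: in the degree-$(2g-2+n)$ family the genus-zero classes must vanish (the required degree exceeds $\dim\overline{\mathcal M}_{0,m}$), so the downward genus reduction and the Cauchy-type argument have to be run only on the upward-closed nonzero locus, with the anchor $\Theta_{1,1}$ (nonzero in both families) supplying the value of $\alpha$. Care is also needed to exclude the degenerate possibility that every complexity-one class vanishes while higher ones do not, which the same dimension bound rules out.
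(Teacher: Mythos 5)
Your argument is essentially identical to the paper's: the non-separating gluing relation makes $d(g,n)$ a function $H$ of $2g-2+n$, the separating gluing relation gives the Cauchy additivity $H(x+y)=H(x)+H(y)$, hence $d(g,n)=Q\,(2g-2+n)$, and the dimension bound $0\le d(g,n)\le 3g-3+n$ (evaluated at $(g,n)=(1,1)$) forces $Q\in\{0,1\}$. Your additional bookkeeping of where $\Theta_{g,n}$ might vanish is a refinement the paper's proof silently skips, but it does not change the route.
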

\pf
Denote $d(g,n)={\rm deg} \, \Theta_{g,n}$. Using the first equation of \eqref{np2} we obtain 
$$
d(g,n) \= d(g-1,n+2) \,
$$
which implies the existence of a function $H$ of one variable such that $d(g,n)=H(2g-2+n)$. Now using the second equation of \eqref{np2} we find
$$H(x+y)\=H(x)\+H(y)\,.$$
Therefore there exists an integer constant $Q$ such that $d(g,n)=(2g-2+n) \, Q$. Noting that 
$$0\leq d(g,n)\leq 3g-3+n\,,$$ we have either $Q=0$ or $Q=1$. 
\epf

\medskip
\medskip
\medskip
\medskip

\noindent Boris Dubrovin

\noindent SISSA, via Bonomea 265, Trieste 34136, Italy


\medskip
\medskip

\noindent Di Yang

\noindent School of Mathematical Sciences, University of Science and Technology of China,

\noindent Jinzhai Road 96, Hefei 230026, P.R. China 

\noindent diyang@ustc.edu.cn

\medskip
\medskip

\noindent Don Zagier

\noindent Max-Planck-Institut f\"ur Mathematik, Vivatsgasse 7, Bonn 53111, Germany, \\
and 
International Centre for Theoretical Physics, Strada Costiera 11, Trieste 34014, Italy

\noindent dbz@mpim-bonn.mpg.de

\end{document}